\documentclass[12pt]{extarticle}
\usepackage{geometry}
\usepackage{subfig}
\makeatletter
\newcommand\ackname{Acknowledgements}
\if@titlepage
\newenvironment{acknowledgements}{
	\titlepage
	\null\vfil
	\@beginparpenalty\@lowpenalty
	\begin{center}
		\bfseries \ackname
		\@endparpenalty\@M1
\end{center}}
{\par\vfil\null\endtitlepage}
\else

\usepackage{mathrsfs}
\usepackage{eucal}
\usepackage{amssymb,amsmath}

\usepackage{pgf}
\usepackage{tikz}
\usepackage{tikz-3dplot}
\usetikzlibrary{decorations.pathreplacing,calligraphy,positioning}
\usepackage{cite}
%%%%%%%%%%%%%%%%%%%%%%%%%%%%%%%%%
%%%%%%%%%%%%%%%%%%%%%%%%%%%%%%%%%%%%%%%%%%%%%%%%%%%%%%%%%%%%%%%%%%%%%%%%%%%%%%%%%%%%%%%%%%%%%%%%%%%%%%%%%%%%%%%%%%%%%%%%%%%%%%%%%%%%%%%%%%%%%%
%%%%%%%%%%%%%%%%%%%%%%%%%%%%%%%%%%%%%%%%%%%%%%%%%%%%%%%%%%%%%%%%%%%%%%%%%%%%%%%%%%%%%%%%%%%%%%%%%%%%%%%%%%%%%%%%%%%%%%%%%%%%%%%%%%%%%%%%%%%%%%
\usepackage{graphicx}
\usepackage{caption}
\usepackage{color}
\usepackage{amssymb}
\usepgflibrary{arrows}
\usetikzlibrary{calc,angles,positioning,intersections,quotes,backgrounds,decorations,decorations.markings,decorations.text,decorations.pathreplacing}
\usepackage{amsmath}
\usepackage{amsthm}
\usepackage{accents} 
\newtheorem{theorem}{Theorem}[section]
\newtheorem*{remark}{Remark}
\theoremstyle{remark}
\theoremstyle{definition}
\newtheorem{definition}{Definition}[section]
\usepackage{thmtools, thm-restate}

%%%%%%%%%%%%%%%%%%%%%%%%%%%%%%%%%%%
\geometry{verbose,a4paper,tmargin=20mm,bmargin=20mm,lmargin=5mm,rmargin=5mm}
\numberwithin{equation}{section}

\usepackage{comment}
%
%%%%%%%%%%%%%%%%%%%%%%%%%%%%%%%%%%%%%%%%%%%%%%%%%%%%%%%%%%%%%%%%%%%%%%%%%%%%%%%%%%%%%%%
\usepackage{physics}
\definecolor{mediumtaupe}{rgb}{0.4, 0.3, 0.28}
\definecolor{skobeloff}{rgb}{0.0, 0.48, 0.45}
\definecolor{sandstorm}{rgb}{0.93, 0.84, 0.25}
\definecolor{khaki}{rgb}{0.76, 0.69, 0.57}
\definecolor{olivedrab7}{rgb}{0.24, 0.2, 0.12}
\definecolor{sanddune}{rgb}{0.59, 0.44, 0.09}
\definecolor{mediumseagreen}{rgb}{0.24, 0.7, 0.44}
\definecolor{persianplum}{rgb}{0.44, 0.11, 0.11}

\DeclareSymbolFont{rsfs}{U}{rsfs}{m}{n}
\DeclareSymbolFontAlphabet{\mathscrsfs}{rsfs}
\makeatletter
\newcommand*{\rom}[1]{\expandafter\@slowromancap\romannumeral #1@}
\makeatother

\usetikzlibrary{decorations.pathreplacing,decorations.markings}
\tikzset{
	on each segment/.style={
		decorate,
		decoration={
			show path construction,
			moveto code={},
			lineto code={
				\path [#1]
				(\tikzinputsegmentfirst) -- (\tikzinputsegmentlast);
			},
			curveto code={
				\path [#1] (\tikzinputsegmentfirst)
				.. controls
				(\tikzinputsegmentsupporta) and (\tikzinputsegmentsupportb)
				..
				(\tikzinputsegmentlast);
			},
			closepath code={
				\path [#1]
				(\tikzinputsegmentfirst) -- (\tikzinputsegmentlast);
			},
		},
	},
	mid arrow/.style={postaction={decorate,decoration={
				markings,
				mark=at position .5 with {\arrow[#1]{stealth}}
	}}},
}
%%%%%%%%%%%%%%%%%%%%%%%%%%%%%%%%%%%%%%%%%%%%%%%%%%%%%%%%%%%%%%%%%%%%%%%%%%%%%%%%%%%%%%%
\begin{document}
	\title{\textbf{Quantum integrability: Lagrangian 1-form case}}
	\author{Thanadon Kongkoom$^1$ and  Sikarin Yoo-Kong$^{2,*}$ \\
		\small {The Institute for Fundamental Study (IF),} \\ \small\emph{Naresuan University, Phitsanulok, Thailand, 65000.}\\
		\small{$^1$thanadonko63@nu.ac.th, $^2$sikariny@nu.ac.th}, $^*$Corresponding author: 0000-0002-6604-8686\\
	}
	%$^1$thanadonko63@Nu.ac.th
	\date{}
	\maketitle
	%%%%%%%%%%%%%%%%%%%%%%%%%%%%%%%%%%%%%%%%%%%%%%%%%%%%%%%%%%%%%%%%%%%%%%%%%%%%%%%%%%%%%%%
	\abstract
	A new notion of integrability called the multi-dimensional consistency for the integrable systems with the Lagrangian 1-form structure is captured in the geometrical language for quantum level. A zero-curvature condition, which implies the multi-dimensional consistency, will be a key relation, e.g. Hamiltonian operators. Therefore, the existence of the zero-curvature condition directly leads to the path-independent feature of the mapping, e.g. multi-time evolution in the Schr\"{o}dinger picture. Another important result is the formulation of the continuous multi-time propagator. With this new type of the propagator, a new perspective on summing all possible paths unavoidably arises as not only all possible paths in the space of dependent variables but also in the space of independent variables must be taken into account. The semi-classical approximation is applied to the multi-time propagator expressing in terms of the classical action and the fluctuation around it. Therefore, the extremum propagator, resulting in path independent feature on the space of independent variables, would guarantee the integrability of the system.
	\\
	\\ 
	\textbf{Keywords}: Integrability, Zero-curvature condition, Multi-dimensional consistency, Quantum variation
	%\\
	%\\
	%\textbf{Mathematics Subject Classification}: 37J70
	%%%%%%%%%%%%%%%%%%%%%%%%%%%%%%%%%%%%%%%%%%%%%%%%%%%%%%%%%%%%%%%%%%%%%%%%%%%%%%%%%%%%%%%
	\section{Introduction}
	%literature reviews
	Classically, the standard notion of integrability of the Hamiltonian systems is the Liouville-Arnold theorem \cite{Babelon, Arnoldtextbook}. In this notion of integrability, the Hamiltonian systems, whose the evolution is given on $2N$-dimensional manifold called the cotangent bundle, must possess $N$ invariant quantities which are independent and in involution. A key feature in this context is the Hamiltonian commuting flows as a direct consequence of the involution. Alternatively, the integrability can be inferred from the existence of the r-matrix, which is equivalent with the involution relation, through the language of the Lax matrices \cite{Babelon}.
	\\
	\\
	In the discrete context, the standard Liouville-Arnold theorem can be constructed \cite{vesalop}. However, the discrete world is quite fascinate in the sense that all variables are treated on the same equal footing. Consequently, there are various notions of integrability, e.g. existence of r-matrix \cite{Lax}, singularity confinement \cite{Grammaticos} and algebraic entropy \cite{Bellon}. However, there is one remarkable aspect of integrable multi-dimensional discrete systems known as a multi-dimensional consistency \cite{Frankbook}. With this feature, one can consistently express the difference equations in a multi-dimensional lattice, i.e., two dimensional lattice system can be consistently embedded in a three dimensional lattice such that the quadrilateral equations describing three side-to-side connected surfaces of a cube can be solved for a coincide result with a given initial conditions \cite{S.Lobb}. This feature is known as the consistency-around-the cube (CAC). Later, Adler, Bobenko and Suris employed this property to classify the quadrilateral equations for two dimensional lattice known as the ABS list \cite{Adler}. Another important aspect of the multi-dimensional consistency in the discrete level is the Lagrangian multi-form theory. Lobb and Nijhoff first set out to formulate the discrete theory for 2-form and 3-form cases \cite{S.Lobb, S.Lobb2}. A key relation in this context is the Lagrangian closure relation, which holds on the solution of the system, as a direct result of the variation of the action with respect to independent variables. The existence of the Lagrangian closure relation guarantees the constant value of the action under local deformation of the surface in the 2-form case and the volume in the 3-form case on the space of independent variables. Soon later, the 1-from case was formulated by Yoo-Kong, Lobb and Nijhoff \cite{RinCM} in both discrete and continuous levels through an important model known as the Calogero-Moser system \cite{CM, FrankCM}, see also \cite{RinRS}. Again, the existence of the Lagrangian 1-form closure relation guarantees the constant value of the action under local deform of the curve on the space of independent variables. Therefore, the feature implies path-independent property and the multi-time evolution does not depends on the choice of paths, but rather the end points on the space of independent variables. Indeed, this is nothing but the multi-dimensional consistency feature\footnote{One needs to include the variation with respect to dependent variables resulting in the generalised Euler-Lagrange equations and constraints. These equations all together give us a compatible system of equations describing the multi-time evolution of the system.} which is represented in the level of Lagrangians\footnote{As we mentioned earlier that the multi-dimensional consistency was first formulated on the level of discrete equations of motion.}. After these pioneer works, a series of papers has been producing and pushing further in various aspects as well as various systems \cite{Xenitidis, Suris1, Boll1, Boll2, Jairuk1, Jairuk2, Jairuk3, Suris2, Kels1, Petrera, Kels2, Mats1, Mats2, Caudrelier, Mats3}.
	\\
	\\
	In quantum realm, the notion of integrability is not well established. Naively, one can follow the canonical quantisation by promoting a set of invariances or a set of Hamiltonians to be a set of Hamiltonian operators. Therefore, the integrability demands commutator of the Hamiltonian operators to be zero\footnote{This can be viewed as the quantum analogue of the involution.}. However, Weigert \cite{Stefan} provided an encounter example, which is non-integrable system, satisfying the vanishing commutator condition. However, many attempts have been put further to investigate quantum integrability on demanding a \textit{quantum correction} terms $\hbar^2$ \cite{correct1, correct2, correct3}, promoted from the invariances of the counterpart of classical system and commutations of them, see in \cite{QuIS1}. In discrete level, a key tool to study quantum integrable system is the quantum mapping was established in \cite{QuMap} and was applied in the integrability context in \cite{IntQuMap1, IntQuMap2}. Alternatively, Feynman approach on quantising the system might be a better choice \cite{FeynmanHibbs}. The pioneer works on this direction were investigted by Field and Nijhoff \cite{Field}, see also \cite{Fieldthesis} in the discrete systems. Recently, King and Nijhoff set out to formulate quantum path integration incorporated with the quadratic Lagrangian multi-form structure in the discrete level \cite{SD.Kings}. What they did is to impose the periodic reduction on linearised discrete KdV in one particular direction, resulting in the discrete harmonic oscillator. Imposing on another discrete direction, one obtains another discrete harmonic oscillator. Therefore, the discrete Lagrangians for these harmonic oscillators can be explicitly written. Consequently, the explicit form of the multi-discrete propagator can be obtained, since the Gaussian integral can be used in this case. An intriguing feature of this multi-discrete propagator is path-independent feature on the space of independent discrete variables. In other words, the multi-discrete propagator remains the same under local deformation of the discrete paths on the space of independent discrete variables. Then, in this quantum scenario, one might have to take all possible discrete paths not only on the space of dependent variables, but also on the space of independent variables into account. This new conceptual view on propagator was first proposed by Nijhoff \cite{Franktalk} in the continuous case to capture the multi-dimensional consistency in language of the path integrals. However, the explicit connection between the discrete set up and the Nijhoff's continuous proposal for the Lagrangian multi-form of the propagator is still missing. In this contribution, the multi-time propagator for the case of arbitrary Lagrangian 1-forms is constructed and integrability condition is given through the scheme called the quantum variational principle. 
	\\
	\\
	The structure of this paper is as follows. In section \ref{subsection3.1}, a set of multi-time Schr\"{o}dinger equations is given and the consistency of these equations will be derived resulting in the zero-curvature condition for the Hamiltonian operators. Then the unitary multi-time operator is expressed in terms of the Wilson line and this unitary multi-time operator possesses the path-independent as a direct result of the zero-curvature condition of the Hamiltonian operators. In section \ref{subsection3.2}, the multi-time propagator is systematically derived. A new type of the functional measure is defined, since all possible path on the space dependent variables and independent variables must be included. The semi-classical method is applied and then integrability, indicating by path-independent feature on the space of independent variables, of the system shall be given based on the closure relation of Lagrangians.
 %The properties, such that path-independent feature and sum all possible paths on space of both dependent and independent variables, of this propagator will be discussed. 
 In section \ref{section5}, the summary together with important remarks will be given.
	%%%%%%%%%%%%%%%%%%%%%%%%%%%%%%%%%%%%%%%%%%%%%%%%%%%%%%%%%%%%%%%%%%%%%%%%%%%%
	%%%%%%%%%%%%%%%%%%%%%%%%%%%%%%%%%%%%%%%%%%%%%%%%%%%%%%%%%%%%%%%%%%%%%%%%%%%%
	%%%%%%%%%%%%%%%%%%%%%%%%%%%%%%%%%%%%%%%%%%%%%%%%%%%%%%%%%%%%%%%%%%%%%%%%%%%%
	\section{Multi-dimensional consistency as integrability}\label{section3}
	Let us first recall the two main features used to indicate the integrability of the classical system. For the Hamiltonian systems with  $N$ degrees of freedom, there exists a set of first integrals (treated as Hamiltonains) $\{H_1,H_2,...,H_N\}$, which are independent and in involution: $\{H_l,H_k\}=0\;,\;l\neq k=1,2,...,N$, or equivalently, $\partial H_l/\partial t_k = \partial H_k/\partial t_l\;,\;l\neq k=1,2,...,N$. Consequently, the involution leads to an important feature known as the Hamiltonian commuting flows. On the Lagrangian side, there also exists a set of Lagrangian 1-form $\{L_1,L_2,...,L_N\}$, which satisfies the relations: $\partial L_l/\partial t_k = \partial L_k/\partial t_l\;,\;l\neq k=1,2,...,N$ known as the closure relation. Both involution and closure relation imply the multi-dimensional consistency of the multi-time evolution of the system. At this point, it is very natural to extend the idea to the quantum level. In the standard manner, with the existence of the Hamiltonian, one can lift to the quantum case with the Schr\"{o}dinger approach. With the existence of the Hamiltonian hierarchy, one expects to have a set of Schr\"{o}dinger equations and therefore, the compatible multi-time evolution of the wave function is what we are interested. Alternatively, one can prefer to quantise the system with Feynman approach based on the action and Lagrangian. In the same analogy, there exists the Lagrangian hierarchy. Then, one also expects to have multi-time propagators with a compatible multi-time evolution.
	%%%%%%%%%%%%%%%%%%%%%%%%%%%%%%%%%%%%%%%%%%%%%%%%%%%%%%%%%%%%%%%%%%%%%%%%%%%%%%%%%%%%%%%
	\subsection{Schr\"{o}dinger picture and zero-curvature condition}\label{subsection3.1}
	In this section, the multi-time evolution of the wave function will be studied. To derive a set of Schr\"{o}dinger equations, we shall promote a set of non-autonomous Hamiltonians $\{H_1,H_2,...,H_N \}$, associated with time variables $\mathbf{t}\equiv\{t_1,t_2,...,t_N \}$, to be a set of non-autonomous Hamiltonian operators $\{\hat{\mathbf{H}}_1,\hat{\mathbf{H}}_2,...,\hat{\mathbf{H}}_N \}$, resulting in
	\begin{equation}
	    i\hbar\frac{\partial }{\partial t_l} \ket{\Psi(\mathbf{t})}= \hat{\mathbf{H}}_l\ket{\Psi(\mathbf{t})}\;,\;l=1,2,...,N\;,\label{Sch1}
	\end{equation}
	where $\left|\Psi(\mathbf{t})\right\rangle$ is the eigenstate for $\hat{\mathbf{H}}_l,\;l=1,2,...,N$. 
	\begin{theorem}
	Let $\ket{\Psi(\mathbf{t})}$ be a multi-time normalised vector in a Hilbert space $\mathcal{H}$ and  $\{\hat{\mathbf{H}}_l:\mathcal{H}\mapsto\mathcal{H}\;,\;l=1,2,\dots,N\}$
	be a set of non-autonomous Hamiltonian operators. All Schr\"{o}dinger equations in \eqref{Sch1} will be consistent if 
	\begin{equation}
	    \frac{\partial \hat{\mathbf{H}}_k}{\partial t_l}-\frac{\partial \hat{\mathbf{H}}_l}{\partial t_k}-\frac{i}{\hbar}\big[\hat{\mathbf{H}}_k,\hat{\mathbf{H}}_l\big]=0\;,\;k\neq l=1,2,...,N\;,
	\end{equation}
	hold.
	\end{theorem}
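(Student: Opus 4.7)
The plan is to derive the zero-curvature condition as the cross-consistency requirement between any two Schrödinger equations in the hierarchy \eqref{Sch1}. First I would pick an arbitrary pair of indices $k\neq l$ and differentiate the $l$-th equation with respect to $t_k$, using the product rule on $\hat{\mathbf{H}}_l\ket{\Psi(\mathbf{t})}$; I would then do the symmetric operation, differentiating the $k$-th equation with respect to $t_l$. On the right-hand sides, the terms of the form $\hat{\mathbf{H}}_l\,\partial_{t_k}\ket{\Psi(\mathbf{t})}$ and $\hat{\mathbf{H}}_k\,\partial_{t_l}\ket{\Psi(\mathbf{t})}$ can immediately be rewritten using \eqref{Sch1} itself, turning them into $(i\hbar)^{-1}\hat{\mathbf{H}}_l\hat{\mathbf{H}}_k\ket{\Psi(\mathbf{t})}$ and $(i\hbar)^{-1}\hat{\mathbf{H}}_k\hat{\mathbf{H}}_l\ket{\Psi(\mathbf{t})}$ respectively. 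This reduces everything to an identity purely in terms of the state, the operators, and their first $t$-derivatives.

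Next I would invoke the standard requirement that the wave function be sufficiently smooth so that mixed partials agree, $\partial_{t_k}\partial_{t_l}\ket{\Psi(\mathbf{t})}=\partial_{t_l}\partial_{t_k}\ket{\Psi(\mathbf{t})}$; subtracting the two differentiated equations and cancelling the mixed second-derivative term leaves the identity
\begin{equation}
\left(\frac{\partial\hat{\mathbf{H}}_k}{\partial t_l}-\frac{\partial\hat{\mathbf{H}}_l}{\partial t_k}-\frac{i}{\hbar}\big[\hat{\mathbf{H}}_k,\hat{\mathbf{H}}_l\big]\right)\ket{\Psi(\mathbf{t})}=0,
\end{equation}
after noting the arithmetic identity $1/(i\hbar)=-i/\hbar$ which is what converts the commutator prefactor into the form appearing in the statement. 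Assuming the proposed zero-curvature condition holds as an operator identity on $\mathcal{H}$, this equation is automatically satisfied for every $\ket{\Psi(\mathbf{t})}$, and hence the cross-consistency of the pair $(k,l)$ is guaranteed. Running this argument for every unordered pair $k\neq l$ then establishes the consistency of the full system.

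The main obstacle is not algebraic but analytical: the computation above implicitly requires that $\ket{\Psi(\mathbf{t})}$ belong to the common domain of $\hat{\mathbf{H}}_k\hat{\mathbf{H}}_l$ and $\hat{\mathbf{H}}_l\hat{\mathbf{H}}_k$, and that partial derivatives in the different times can be interchanged; in the physics-level setting of the paper this is tacitly assumed through smoothness of the multi-time wave function, so I would flag it as a regularity assumption rather than attempt a rigorous domain analysis. A minor subtlety to be careful about is sign-tracking: one must keep the factor $i\hbar$ on the correct side when eliminating the first-order derivatives, since a sign error at that step is what distinguishes the usual $+\frac{i}{\hbar}[\cdot,\cdot]$ convention from the $-\frac{i}{\hbar}[\cdot,\cdot]$ convention used in the statement of the theorem.
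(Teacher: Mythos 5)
Your proposal is correct and follows essentially the same route as the paper: the paper also derives the zero-curvature condition from the compatibility of mixed partial derivatives $\partial_{t_k}\partial_{t_l}\ket{\Psi(\mathbf{t})}=\partial_{t_l}\partial_{t_k}\ket{\Psi(\mathbf{t})}$, merely stating the resulting operator identity without spelling out the product-rule computation that you carry out explicitly (it additionally offers a second, equivalent viewpoint via the unitary multi-time evolution operator and Wilson lines, which you do not need for the stated implication). Your sign-tracking and the remark on domain/regularity assumptions are consistent with the paper's treatment.
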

	\begin{proof}
	The quantities $\partial_{t_l}-(1/i\hbar)\hat{\mathbf{H}}_l$, where $l=1,2,...,N$, can be treated as the covariant derivative and the system of equations \eqref{Sch1} is overdetermined. Thus, $\hat{\mathbf{H}}_l$, where $l=1,2,...,N$, must satisfy a compatible condition such that
	\begin{equation}
	     \frac{\partial^2}{\partial t_l\partial t_k}\ket{\Psi(\mathbf{t})} = \frac{\partial^2}{\partial t_k\partial t_l}\ket{\Psi(\mathbf{t})}\;.\label{compatH}
	\end{equation}
	This \eqref{compatH} is nothing but equivalently
	\begin{equation}
	    \frac{\partial \hat{\mathbf{H}}_k}{\partial t_l}-\frac{\partial \hat{\mathbf{H}}_l}{\partial t_k}-\frac{i}{\hbar}\big[\hat{\mathbf{H}}_k,\hat{\mathbf{H}}_l\big]=0\;,\;k \neq l =1,2,\dots,N\;,\label{Sch0curve}
	\end{equation}
	which is therefore the zero-curvature condition for the Hamiltonian operators.
	\\
	\\
	Alternatively, if we define a unitary operator for the multi-time evolution $\hat {\mathbf{U}}(\mathbf{t})$ such that
	\begin{equation}
	    \ket{\Psi(\mathbf{t})} = \hat{\mathbf{U}}(\mathbf{t})\ket{\Psi(0)}\;,
	\end{equation}
	where $\hat{\mathbf{U}}^\dagger(\mathbf{t})\hat{\mathbf{U}}(\mathbf{t})=\mathbb{I}$ with $\langle \Psi(\mathbf{t})|\Psi(\mathbf{t})\rangle=\langle \Psi(0)|\Psi(0)\rangle$, the equations \eqref{Sch1} give a set of equations
	\begin{equation}
	    i\hbar\frac{\partial}{\partial t_l}\hat{\mathbf{U}}(\mathbf{t}) = \hat{\mathbf{H}}_l \hat{\mathbf{U}}(\mathbf{t})\;,\;l=1,2,...,N\;.\label{unitarytime1}
	\end{equation}
	If we are interested in the unitary evolution associated with time variable $t_l$, we obtain
    %Next, we regard the unitary multi-time evolution operator expressing in terms of the Hamiltonian operators as shown in the equations:
	\begin{equation}
	    \hat{\mathbf{U}}_l(\mathbf{t})=\hat{\mathbf{U}}_l(0,0,\dots,t_l,\dots,0,0)=\mathrm{T}e^{-\frac{i}{\hbar}\int_0^{t_l}\hat{\mathbf{H}}_ldt_l}\;,
	\end{equation}
	where $\mathrm{T}$ is time-ordering operator.
		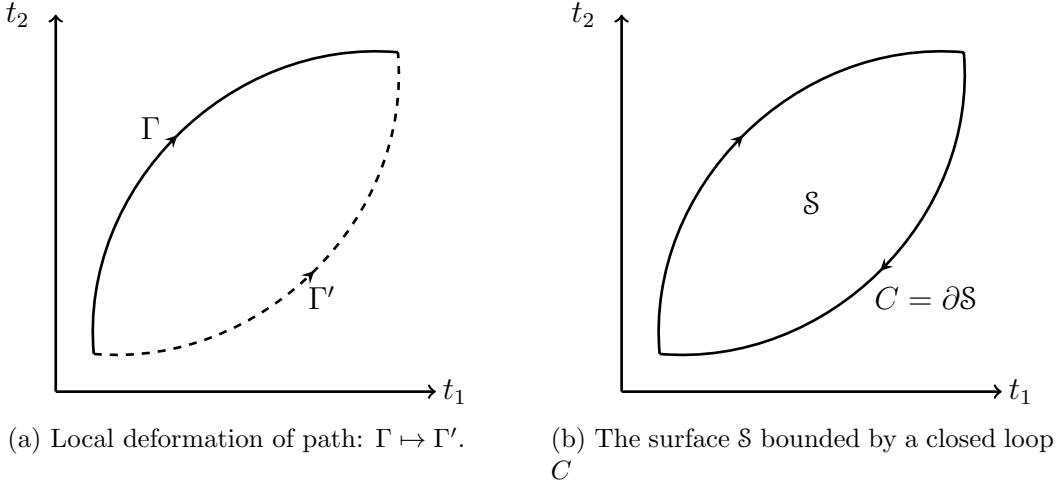
\begin{figure}[h]
    	    \centering
    	    \subfloat[Local deformation of path: $\Gamma\mapsto\Gamma'$.]{\label{deformation of path}
        	    \begin{tikzpicture}[]
            	     \path[line width = 1pt, draw = black,->]
 	                  (0,0) to (5,0);
 	                  
 	                  \path[line width = 1pt, draw = black,->]
 	                  (0,0) to (0,5);
 	                  
 	                 \path[line width = 1pt, draw = black, postaction ={on each segment = {mid arrow}}]
 	                  (0.5,0.5) to [bend left = 50] (4.5,4.5);
 	                 
 	                 \path[dashed, line width = 1pt, draw = black, postaction ={on each segment = {mid arrow}}]
 	                  (0.5,0.5) to [bend right = 50] (4.5,4.5);
 	                  
 	                 \node at (5.25,0) {$t_1$};
 	                 \node at (-0.5,5) {$t_2$};
 	                 
 	                 \node at (3.5,1.25) {$\Gamma'$};
 	                 \node at (1.25,3.5) {$\Gamma$};
 	                 
 	            \end{tikzpicture}
 	       }
 	       \qquad
 	        \subfloat[The surface $\mathcal{S}$ bounded by a closed loop $C$]{\label{loop path}
        	    \begin{tikzpicture}[]
        	        \path[line width = 1pt, draw = black,->]
 	                  (0,0) to (5,0);
 	                  
 	                  \path[line width = 1pt, draw = black,->]
 	                  (0,0) to (0,5);
 	                  
 	                 \path[line width = 1pt, draw = black, postaction ={on each segment = {mid arrow}}]
 	                  (0.5,0.5) to [bend left = 50] (4.5,4.5)
 	                  (4.5,4.5) to [bend left = 50] (0.5,0.5);
 	                  
 	                 \node at (5.25,0) {$t_1$};
 	                 \node at (-0.5,5) {$t_2$};
 	                 
 	                 \node at (4,1.25) {$C=\partial \mathcal{S}$};
 	                 \node at (2.5,2.5) {$\mathcal{S}$};
 	                
 	            \end{tikzpicture}
 	         }
 	         \caption{The evolution of system on the space of 2 time variables.}
 	         \label{deform and loop}
    \end{figure}
	Therefore, on the space of time variables, the unitary operator, which maps the state along the path $\Gamma$ shown in figure \ref{deformation of path}, is given by\footnote{The construction of the composite map is given in appendix \ref{AppendixA}.}
	\begin{equation}
	    \hat{\mathbf{U}}_\Gamma(\mathbf{t}) =\hat{\mathbf{U}}_1\circ\hat{\mathbf{U}}_2\dots\circ\hat{\mathbf{U}}_N = \mathrm{T}e^{-\frac{i}{\hbar}\int_\Gamma\sum_{j=1}^N\hat{\mathbf{H}}_jdt_j}\;.\label{U12}
	\end{equation}
	Equation \eqref{U12} is nothing but the Wilson line representation of the operator $\hat{\mathbf{U}}$ and the Hamiltonian operators will be treated as the gauge variables in this situation. Under the local deformation: $\Gamma\mapsto \Gamma'$, the evolution is consistent : $\hat{\mathbf{U}}_\Gamma = \hat{\mathbf{U}}_{\Gamma'}$. Therefore, this relation implies the path-independent feature of the multi-time evolution of the state $|\Psi\rangle$. Equivalently, this path-independent feature gives identity for the loop evolution, see figure \ref{loop path}, such that
	\begin{equation}
        \hat{\mathbf{U}}_\circlearrowleft(\mathbf{t}) = \mathrm{T}e^{-\frac{i}{\hbar}\oint_{C=\partial \mathcal{S}}\sum_{j=1}^N \hat{\mathbf{H}}_j dt_j }=\mathrm{P}e^{-\frac{i}{\hbar}\iint_{\mathcal{S}}\sum_{k\geqslant l}^N\sum_{l=1}^N\hat{\mathbf{Z}}_{lk} dt_l\wedge dt_k }=\mathbb{I}\;,
    \end{equation}
    where $\mathrm{T}$ is a time-ordering operator for a loop path\footnote{The definition of the time-ordering operator is given in appendix \ref{Appendix0}.} and $\mathrm{P}$ is a surface-ordering operator\cite{Arefeva, aa} and
	\begin{equation}
	    \hat{\mathbf{Z}}_{lk} = \hat{\mathbf{U}}_\gamma^{-1}\bigg(\frac{\partial \hat{\mathbf{H}}_k}{\partial t_l}-\frac{\partial \hat{\mathbf{H}}_l}{\partial t_k}-\frac{i}{\hbar}\big[\hat{\mathbf{H}}_k,\hat{\mathbf{H}}_l\big]\bigg)\hat{\mathbf{U}}_\gamma=0
	\end{equation}
	is a twisted curvature \cite{aa} in the Schr\"{o}dinger picture where $\gamma$ represents an arbitrary path connecting between the origin and $\mathbf{t}$. The vanishing curvature is nothing but the equation \eqref{Sch0curve}. We finally note that this compatible multi-time evolution is actually a commutation of multi-time unitary operators: $[\hat{\mathbf{U}}_l,\hat{\mathbf{U}}_k]=0$. Therefore, this commutativity can be treated as a quantum integrability in the Schr\"{o}dinger's picture. 
	\end{proof}
	\noindent
    We note that the relation \eqref{Sch0curve} is a quantum version of the involution of the Hamiltonians in the classical context
    \[
    \frac{\partial {{H}}_k}{\partial t_l}-\frac{\partial {{H}}_l}{\partial t_k}+\big\{{{H}}_k,{{H}}_l\big\}=0\;,
    \]
    which gives a core feature of the classical integrability known as the Hamiltonian commuting flows.
    
	\begin{remark}
	The integrable quantum systems must possess the zero-curvature condition, but the converse is not necessary true\cite{Stefan}.
	\end{remark}
	%%%%%%%%%%%%%%%%%%%%%%%%%%%%%%%%%%%%%%%%%%%%%%%%%%%%%%%%%%%%%%%%%%%%%%%%%%%%%%%%%%%%%%%
	\subsection{Feynman picture and compatible multi-time propagators}\label{subsection3.2}
	Let us first briefly outline what we are going to do in this section. We shall first give derivation with a general setup and later fit our results with a recent development on quantum multi-dimensional consistency \cite{SD.Kings}. Therefore, we would like to provide some key points of King and Nijhoff work. What they did is investigate the two dimensional periodic reduction of the lattice KdV, resulting in two different discrete-time harmonic oscillators. By doing redefinition of discrete parameters, one obtains two different continuous harmonic oscillators
	\begin{eqnarray}
	     \frac{d^2\mathbf{q}}{dt_1^2}+\omega_1^2\mathbf{q}=0\;,\label{quad1}\\
	     \frac{d^2\mathbf{q}}{dt_2^2}+\omega_2^2\mathbf{q}=0\;,\label{quad2}
	\end{eqnarray}
	where $\mathbf{q}(t_1,t_2)$ is position variables and $(t_1,t_2)$ are two different time variables. Here the mass of the system is set to be one. The variables $\omega_1$ and $\omega_2$ play the role of frequency for the first and second systems, respectively. Then, they study properties of the multi-time propagator(in the discrete setup) for the quadratic Lagrangian 1-form case. The King-Nijhoff formula for multi-discrete propagator 
	\begin{equation}
	    K(\mathbf{q}_b(\mathcal{M},\mathcal{N});\mathbf{q}_a(0,0))=\sum_{\Gamma\in \mathscr{B}}\mathscr{N}_\Gamma K_\Gamma(\mathbf{q}_b(\mathcal{M},\mathcal{N});\mathbf{q}_a(0,0) 
	\end{equation}
	possesses the path independent feature on the space of independent variables, see figure \ref{discrete path}. Then loops will not contribute to the propagator as a direct result of Lagrangian closure relation. Here $\mathscr {N}_\Gamma$ is a normalising factor and $\mathscr{B}$ is a set of all possible paths in the discrete space of independent variables.
	 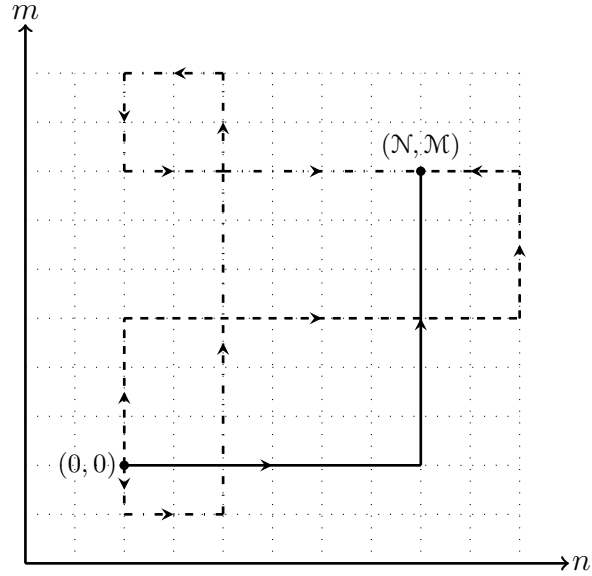
\begin{figure}[h]
    	    \centering
        	    \begin{tikzpicture}[scale = 0.65]
            	    \path[loosely dotted, draw=black]
 	                (0,0) to (10,0)
 	                (10,0) to (10,10)
 	                (10,10) to (0,10)
 	                (0,10) to (0,0)
 	                (2,0) to (2,10)
 	                (4,0) to (4,10)
 	                (6,0) to (6,10)
 	                (8,0) to (8,10)
 	                (1,0) to (1,10)
 	                (3,0) to (3,10)
 	                (5,0) to (5,10)
 	                (7,0) to (7,10)
 	                (9,0) to (9,10)
 	                (0,2) to (10,2)
 	                (0,4) to (10,4)
 	                (0,6) to (10,6)
 	                (0,8) to (10,8)
 	                (0,1) to (10,1)
 	                (0,3) to (10,3)
 	                (0,5) to (10,5)
 	                (0,7) to (10,7)
 	                (0,9) to (10,9);
 	                
 	                \path[->, line width = 1pt, draw=black]
 	                (0,0) to (11,0);
 	                
 	                \path[->, line width = 1pt, draw=black]
 	                (0,0) to (0,11);
 	                
 	                \path[line width = 1pt, draw=black, postaction ={on each segment = {mid arrow}}]
 	                (2,2) to (8,2)
 	                (8,2) to (8,8);
 	                
 	                \path[dashed, line width = 1pt, draw=black, postaction ={on each segment = {mid arrow}}]
 	                (2,2) to (2,5)
 	                (2,5) to (10,5)
 	                (10,5) to (10,8)
 	                (10,8) to (8,8);
 	                
 	                \path[loosely dashdotted, line width = 1pt, draw=black, postaction ={on each segment = {mid arrow}}]
 	                (2,2) to (2,1)
 	                (2,1) to (4,1)
 	                (4,1) to (4,8)
 	                (4,8) to (4,10)
 	                (4,10) to (2,10)
 	                (2,10) to (2,8)
 	                (2,8) to (4,8)
 	                (4,8) to (8,8);
 	                
 	                \node at (11.25,0) {$n$};
 	                \node at (0,11.25) {$m$};
 	                
 	                \node at (1.25,2) {\footnotesize{$(0,0)$}};
 	                \node at (8,8.5) {\footnotesize{$(\mathcal{N},\mathcal{M})$}};
 	              
                    \node[circle, fill, inner sep=1.25 pt] at (2,2) {};
                    \node[circle, fill, inner sep=1.25 pt] at (8,8) {};
 	                
 	            \end{tikzpicture}
 	        \captionof{figure}{Some possible paths on the space of independent discrete variables.}\label{discrete path}
    \end{figure}
	What we will do next is to extend their result to the continuous case. We first would like to provide some basic ingredients for the standard propagator(the single-time propagator) given by
	\begin{equation}
	    K\left(\mathbf{q}'',t'';\mathbf{q}',t'\right)=:\int_{\mathbf{q}'}^{\mathbf{q}''}\mathscr{D}[\mathbf{q}(t)]e^{\frac{i}{\hbar}S[\mathbf q(t)]} \;.\label{standard propagator}
	\end{equation}
	Here $S[\mathbf q(t)]=\int_{t'}^{t''}L(\mathbf{q},\dot{\mathbf{q}};t)dt$ is the action functional and $L(\mathbf{q},\dot{\mathbf{q}};t)$ is the standard Lagrangian. The notation $\int_{\mathbf{q}'}^{\mathbf{q}''}\mathscr{D}[\mathbf{q}(t)]\;\label{path functional}$ plays the role of the functional measure over the configuration space of the paths (spatial paths). The role of the propagator is to map the state from $\braket{\mathbf{q}'}{\Psi(t')}$ to $\braket{\mathbf{q}''}{\Psi(t'')}$.
	\\
	\\
    In practice, the propagator \eqref{standard propagator} is not convenient for the calculation. Therefore, the semi-classical approximation method is applied by considering all constructive contributions around the classical path. To proceed such method, we write $\mathbf{q}(t)=\mathbf{q}_c(t) + \mathbf{y}(t)$, see figure \ref{q+y}, where $\mathbf{q}_c(t)$ is the classical path and $\mathbf{y}(t)$ is a fluctuation with the boundary conditions: $\mathbf{y}(t'')=\mathbf{y}(t') = 0$.
	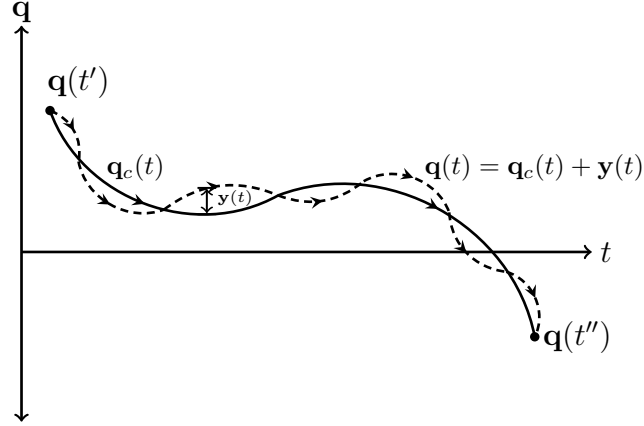
\begin{figure}[h]
        \centering
        	 \begin{tikzpicture}[scale = 0.75]
            	    \path[->, line width = 1 pt, draw = black]
	                (0,0) to (10,0);
	                
	                \path[->, line width = 1 pt, draw = black]
	                (0,0) to (0,-3);
	                
	                \path[->, line width = 1 pt, draw = black]
	                (0,0) to (0,4);
	                
	                \path[line width = 1 pt, draw = black, postaction ={on each segment = {mid arrow}}]
	                (0.5,2.5) to [bend right = 50] (4.5,1)
	                (4.5,1) to [bend left = 50] (9,-1.5);
	                
	                \path[densely dashed, line width = 1 pt, draw = black, postaction ={on each segment = {mid arrow}}]
	                (0.5,2.5) to [bend left = 40] (1,1.6)
	                (1,1.6) to [bend right = 50] (2.5,0.75)
	                (2.5,0.75) to [bend left = 30] (4.5,1)
	                (4.5,1) to [bend right = 30] (6,1.25)
	                (6,1.25) to [bend left = 50] (7.5,0.7)
	                (7.5,0.7) to [bend right = 40] (8.5,-0.35)
	                (8.5,-0.35) to [bend left = 50] (9,-1.5);
	                
	                \draw[|<->|,line width = 0.75 pt] (3.25,1.15) --  (3.25,0.65);
	                
	                \node[circle, fill, inner sep=1.25 pt] at (0.5,2.5) {};
	                \node[circle, fill, inner sep=1.25 pt] at (9,-1.5) {};
	                
	                \node at (3.75,0.95) {\tiny{$\mathbf{y}(t)$}};
	                \node at (2,1.5) {\footnotesize{$\mathbf{q}_c(t)$}};
	                \node at (9,1.5) {\footnotesize{$\mathbf{q}(t)=\mathbf{q}_c(t)+\mathbf{y}(t)$}};
	                \node at (0,4.25) {$\mathbf{q}$};
	                \node at (10.25,0) {$t$};
	                \node at (9.75,-1.5) {$\mathbf{q}(t'')$};
	                \node at (1,3) {$\mathbf{q}(t')$};
	                
 	          \end{tikzpicture}
        \caption{The fluctuation around the classical path $\mathbf{q}_c(t)$ with the initial point $\mathbf{q}(t')$ and the final point $\mathbf{q}(t'')$.}
        \label{q+y}
    \end{figure}
    Doing the expansion with respect to the fluctuation, the propagator becomes
	\begin{eqnarray}
	     K\left(\mathbf{q}'',t'';\mathbf{q}',t'\right) 
         = e^{\frac{i}{\hbar}S[\mathbf{q}_c(t)]}\mathcal{Q}(\mathbf{q}'',t'',\mathbf{q}',t')\left[1+\mathcal{O}(\hbar)\right]\;,\label{semiclpath}	     
	\end{eqnarray}
    where
     \begin{eqnarray}
	     \mathcal{Q}(\mathbf{q}'',t'',\mathbf{q}',t') = 
	     \int_{\mathbf{q}'}^{\mathbf{q}''}\mathscr{D}[\mathbf{y}(t)]e^{\frac{i}{2\hbar}\int_{t'}^{t''}d\tau\int_{t'}^{t''}d\sigma\left(\mathbf{y}(\tau)\frac{\delta^2S[\mathbf{q}_c(t)]}{\delta\mathbf{q}(\tau)\delta\mathbf{q}(\sigma)}\mathbf{y}(\sigma)\right)}\;.\label{factor for single path}
	\end{eqnarray}
	Here $\mathcal{Q}(\mathbf{q}'',t'',\mathbf{q}',t')$ is a smooth function of end points since the variable $\mathbf{y}$ is integrated out. The explicit form the function $\mathcal{Q}(\mathbf{q}'',t'',\mathbf{q}',t')$ can be formally expressed as \cite{condensebook, feynmanhandbook}
	\begin{equation}
	    \mathcal{Q}(\mathbf{q}'',t'',\mathbf{q}',t') = \det\left(\frac{i}{2\pi\hbar}\frac{\partial^2S[\mathbf{q}_c(t)]}{\partial\mathbf{q}(t'')\partial\mathbf{q}(t')}\right)^{\frac{1}{2}}\;.
	\end{equation}
	\\
	With all basic ingredients above, we are now ready to extend the notion of the propagator \eqref{standard propagator} into the case of multi-time Lagrangian 1-forms.
	\begin{definition}[A multi-time propagator]
	Let $\mathscr{L}=\sum_{j=1}^NL_jdt_j$ be a Lagrangian 1-form , where $L_j=L_j\left(\mathbf{q},\left\{\frac{\partial\mathbf{q}}{\partial t_j};j = 1, 2,\dots,N\right\};\mathbf{t}\right)$. On the space of independent variables (time variables) parameterised by a variable $s$ such that $\mathbf{t}(s)$, where $s'<s<s''$, the multi-time propagator is given by
	\begin{equation}
	     K(\mathbf{q}(\mathbf{t}(s'')),s''; \mathbf{q}(\mathbf{t}(s')), s') =\int_{\mathbf{q}(\mathbf{t}(s'))}^{\mathbf{q}(\mathbf{t}(s''))}\mathbb{D}[\mathbf{q}(\mathbf{t}(s));\Gamma\in\mathscr{B}]e^{\frac{i}{\hbar}\int_{\{\Gamma:\Gamma\in\mathscr{B}\}}\mathscr{L}}\;,\label{Multipro}
	\end{equation}
	where $\mathscr{L}=ds\sum_{j=1}^NL_jdt_j/ds$ and  $\int\mathbb{D}[\mathbf{q}(s);\Gamma\in\mathscr{B}]$ is the functional measure over all possible spatial-temporal paths. Here $\Gamma\in\mathscr{B}$, where $\mathscr{B}$ is a family of paths connecting the point $\mathbf{t}(s')$ with the point $\mathbf{t}(s'')$ on space of time variables.
	\end{definition}
	\begin{figure}[h]
    	    \centering
        	    \begin{tikzpicture}[scale = 0.55]
            	    \path[dashed, draw=black]
 	                (0,0) to (10,0)
 	                (10,0) to (10,10)
 	                (10,10) to (0,10)
 	                (0,10) to (0,0)
 	                (2,0) to (2,10)
 	                (4,0) to (4,10)
 	                (6,0) to (6,10)
 	                (8,0) to (8,10)
 	                (1,0) to (1,10)
 	                (3,0) to (3,10)
 	                (5,0) to (5,10)
 	                (7,0) to (7,10)
 	                (9,0) to (9,10)
 	                (0,2) to (10,2)
 	                (0,4) to (10,4)
 	                (0,6) to (10,6)
 	                (0,8) to (10,8)
 	                (0,1) to (10,1)
 	                (0,3) to (10,3)
 	                (0,5) to (10,5)
 	                (0,7) to (10,7)
 	                (0,9) to (10,9);
 	                
 	                \path[line width = 1pt, draw=black]
 	                (-0.5,-0.5) to (10.5,10.5);
 	                
 	                \node at (10,11.25) {\scriptsize{symmetrical}};
 	                \node at (10,10.75) {\scriptsize{dividing line}};
 	                
 	                \draw [-stealth](0,-0.5) -- (2,-0.5);
 	                \draw [-stealth](-0.5,0) -- (-0.5,2);
 	                
 	                \node at (-0.75,1) {\scriptsize{$t_2$}};
 	                \node at (1,-0.75) {\scriptsize{$t_1$}};
 	                
 	                \draw [decorate, decoration = {brace,mirror}] (10.25,0) --  (10.25,10);
 	                \draw [decorate, decoration = {calligraphic brace}] (0,10.25) --  (10,10.25);
 	                
 	                \draw [decorate, decoration = {brace,mirror}] (6.25,0) --  (6.25,1);
 	                \draw [decorate, decoration = {calligraphic brace,mirror}] (5,-0.25) --  (6,-0.25);
    
 	                \node at (10.75,5) {\scriptsize{$\mathcal{N}$}};
 	                \node at (5,10.75) {\scriptsize{$\mathcal{N}$}};
 	                \node at (6.75,0.5) {\scriptsize{$\epsilon_2$}};
 	                \node at (5.5,-0.75) {\scriptsize{$\epsilon_1$}};
 	            \end{tikzpicture}
 	        \captionof{figure}{The two dimensional space of time variables $t_1$ and $t_2$ partitioning by the plaquettes with the size $\epsilon_1\times\epsilon_2$}.\label{lattice of 2-time structure}
    \end{figure}
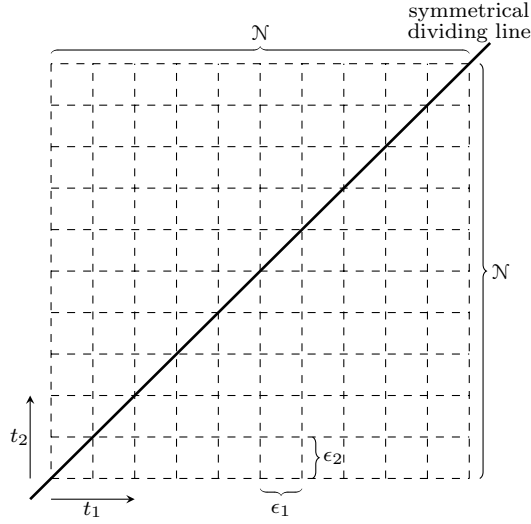
    \noindent To construct the multi-time propagator \eqref{Multipro}, we shall first start with the case of 2 time variables $(t_1,t_2)$ for simplicity. We therefore partition space into $\mathcal{N}\times \mathcal{N}$\footnote{One can consider the space without symmetry. However, at the end, we are going to consider the limit $\mathcal{N}\rightarrow\infty$. Therefore, it is simpler in terms of formulation with the symmetric case.}, see figure \ref{lattice of 2-time structure}. We also keep in mind that the continuum limit is already taken into account. This means that the propagator in each line element, e.g. $(i,j)-(i+\epsilon_1,j)$, where $i,j\in [0,N]$, can be automatically expressed in the form: $K=\int_{\mathbf{q}(i,j)}^{\mathbf{q}(i+\epsilon_1,j)}\mathscr{D}[\mathbf{q}(t_1,j)]e^{\frac{i}{\hbar}\int_{(i,j)}^{(i+\epsilon_1,j)}L_1(t_1,j)dt_1}$. Moreover, we will consider only the forward time steps (as we did in the standard single time case) and we will also employ the symmetry of the lattice by considering the first evolution in $t_1$ together with all possible deformations and later $t_2$ together with all possible deformations, see figure \ref{6 formats}. To illustrate the idea, let's consider first the simplest path $\Gamma\in\mathscr{B}_1$, where $\mathscr{B}_1$ is a set of the path configurations\footnote{In this case, there is only one path configuration.}, shown in figure \ref{fig:formatA}. The propagator is given by
    \begin{align}
        K^{(1)} = \int_{\mathbf{q}(0,0)}^{\mathbf{q}(\epsilon_1\mathcal{N},0)}\mathscr{D}[\mathbf{q}(t_1,0)]\int_{-\infty}^{\infty} d^Nq(\epsilon_1\mathcal{N},0)\int_{\mathbf{q}(\epsilon_1\mathcal{N},0)}^{\mathbf{q}(\epsilon_1\mathcal{N},\epsilon_2\mathcal{N})}\mathscr{D}[\mathbf{q}(\epsilon_1\mathcal{N},t_2)]e^{\frac{i}{\hbar}\Big(\int_{(0,0)}^{(\epsilon_1\mathcal{N},0)}L_1(t_1,0)dt_1+ \int_{(\epsilon_1\mathcal{N},0)}^{(\epsilon_1\mathcal{N},\epsilon_2\mathcal{N})}L_2(\epsilon_1\mathcal{N},t_2)dt_2\Big)},\label{K^1}
    \end{align}
    where the superscript $(1)$ denotes the first simplest path. The propagator in \eqref{K^1} contains all spatial paths along $(0,0)-(\mathcal{N},0)$ and all spatial paths along $(\mathcal{N},0)-(\mathcal{N},\mathcal{N})$. Both sections are glued by the completeness term $\int_{-\infty}^{\infty} d^Nq$ at the corner $(\mathcal{N},0)$. We note that, in equation \eqref{K^1}, the normalising factor is dropped out for our convenient.
    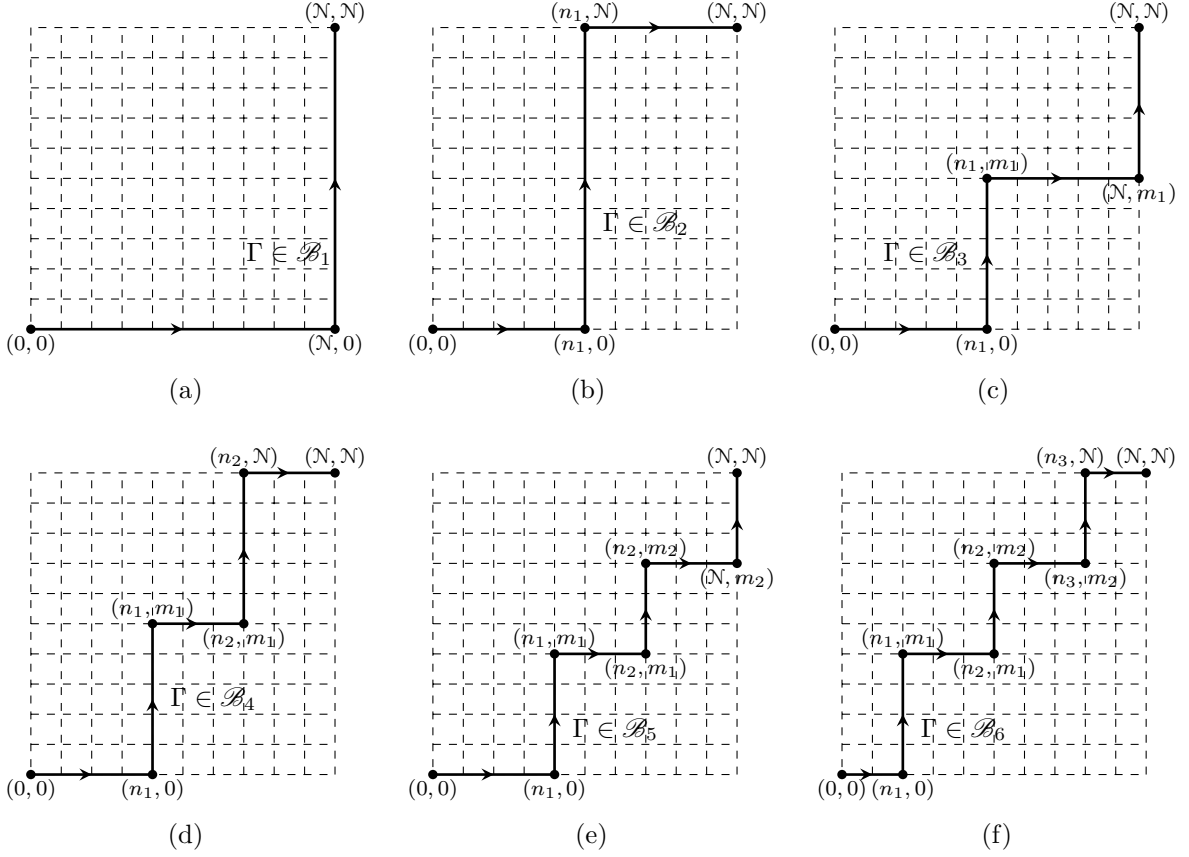
\begin{figure}[h]
    	    \centering
    	    \subfloat[]{\label{fig:formatA}
        	    \begin{tikzpicture}[scale = 0.4]
            	    \path[dashed, draw=black]
 	                (0,0) to (10,0)
 	                (10,0) to (10,10)
 	                (10,10) to (0,10)
 	                (0,10) to (0,0)
 	                (2,0) to (2,10)
 	                (4,0) to (4,10)
 	                (6,0) to (6,10)
 	                (8,0) to (8,10)
 	                (1,0) to (1,10)
 	                (3,0) to (3,10)
 	                (5,0) to (5,10)
 	                (7,0) to (7,10)
 	                (9,0) to (9,10)
 	                (0,2) to (10,2)
 	                (0,4) to (10,4)
 	                (0,6) to (10,6)
 	                (0,8) to (10,8)
 	                (0,1) to (10,1)
 	                (0,3) to (10,3)
 	                (0,5) to (10,5)
 	                (0,7) to (10,7)
 	                (0,9) to (10,9);
 	                
 	                \node[circle, fill, inner sep=1.25 pt] at (0,0) {};
 	                \node[circle, fill, inner sep=1.25 pt] at (10,10) {};
 	                
 	                \node at (0,-0.5) {\scriptsize{$(0,0)$}};
 			        \node at (10,10.5) {\scriptsize{$(\mathcal{N},\mathcal{N})$}};
 	                
 	                \path[line width = 1 pt, draw=black, postaction ={on each segment = {mid arrow}}]
 	                (0,0) to (10,0)
 	                (10,0) to (10,10);
 	                
 	                \node[circle, fill, inner sep=1.25 pt] at (10,0) {};
 	                \node at (10,-0.5) {\scriptsize{$(\mathcal{N},0)$}};
 	                \node at (8.5,2.5) {\footnotesize{$\Gamma\in\mathscr{B}_1$}};
 	            \end{tikzpicture}
 	       }
 	        \subfloat[]{\label{fig:formatB}
        	    \begin{tikzpicture}[scale = 0.4]
            	    \path[dashed, draw=black]
 	                (0,0) to (10,0)
 	                (10,0) to (10,10)
 	                (10,10) to (0,10)
 	                (0,10) to (0,0)
 	                (2,0) to (2,10)
 	                (4,0) to (4,10)
 	                (6,0) to (6,10)
 	                (8,0) to (8,10)
 	                (1,0) to (1,10)
 	                (3,0) to (3,10)
 	                (5,0) to (5,10)
 	                (7,0) to (7,10)
 	                (9,0) to (9,10)
 	                (0,2) to (10,2)
 	                (0,4) to (10,4)
 	                (0,6) to (10,6)
 	                (0,8) to (10,8)
 	                (0,1) to (10,1)
 	                (0,3) to (10,3)
 	                (0,5) to (10,5)
 	                (0,7) to (10,7)
 	                (0,9) to (10,9);
 	                
 	                \node[circle, fill, inner sep=1.25 pt] at (0,0) {};
 	                \node[circle, fill, inner sep=1.25 pt] at (10,10) {};
 	                
 	                \node at (0,-0.5) {\scriptsize{$(0,0)$}};
 			        \node at (10,10.5) {\scriptsize{$(\mathcal{N},\mathcal{N})$}};
 	                
 	                \path[line width = 1 pt, draw=black, postaction ={on each segment = {mid arrow}}]
 	                (0,0) to (5,0)
 	                (5,0) to (5,10)
 	                (5,10) to (10,10);
 	                
 	                \node[circle, fill, inner sep=1.25 pt] at (5,0) {};
 	                \node at (5,-0.5) {\scriptsize{$(n_1,0)$}};
 	                
 	                \node[circle, fill, inner sep=1.25 pt] at (5,10) {};
 	                \node at (5,10.5) {\scriptsize{$(n_1,\mathcal{N})$}};
 	                \node at (7,3.5) {\footnotesize{$\Gamma\in\mathscr{B}_2$}};
 	            \end{tikzpicture}
 	         }
 	         \subfloat[]{\label{fig:formatC}
        	    \begin{tikzpicture}[scale = 0.4]
            	    \path[dashed, draw=black]
 	                (0,0) to (10,0)
 	                (10,0) to (10,10)
 	                (10,10) to (0,10)
 	                (0,10) to (0,0)
 	                (2,0) to (2,10)
 	                (4,0) to (4,10)
 	                (6,0) to (6,10)
 	                (8,0) to (8,10)
 	                (1,0) to (1,10)
 	                (3,0) to (3,10)
 	                (5,0) to (5,10)
 	                (7,0) to (7,10)
 	                (9,0) to (9,10)
 	                (0,2) to (10,2)
 	                (0,4) to (10,4)
 	                (0,6) to (10,6)
 	                (0,8) to (10,8)
 	                (0,1) to (10,1)
 	                (0,3) to (10,3)
 	                (0,5) to (10,5)
 	                (0,7) to (10,7)
 	                (0,9) to (10,9);
 	                
 	                \node[circle, fill, inner sep=1.25 pt] at (0,0) {};
 	                \node[circle, fill, inner sep=1.25 pt] at (10,10) {};
 	                
 	                \node at (0,-0.5) {\scriptsize{$(0,0)$}};
 			        \node at (10,10.5) {\scriptsize{$(\mathcal{N},\mathcal{N})$}};
 	                
 	                \path[line width = 1 pt, draw=black, postaction ={on each segment = {mid arrow}}]
 	                (0,0) to (5,0)
 	                (5,0) to (5,5)
 	                (5,5) to (10,5)
 	                (10,5) to (10,10);
 	                
 	                \node[circle, fill, inner sep=1.25 pt] at (5,0) {};
 	                \node at (5,-0.5) {\scriptsize{$(n_1,0)$}};
 	                
 	                \node[circle, fill, inner sep=1.25 pt] at (5,5) {};
 	                \node at (5,5.5) {\scriptsize{$(n_1,m_1)$}};
 	                
 	                \node[circle, fill, inner sep=1.25 pt] at (10,5) {};
 	                \node at (10,4.5) {\scriptsize{$(\mathcal{N},m_1)$}};
 	                \node at (3,2.5) {\footnotesize{$\Gamma\in\mathscr{B}_3$}};
 	            \end{tikzpicture}
 	         }
 	         \\
 	         \subfloat[]{\label{fig:formatD}
        	    \begin{tikzpicture}[scale = 0.4]
            	    \path[dashed, draw=black]
 	                (0,0) to (10,0)
 	                (10,0) to (10,10)
 	                (10,10) to (0,10)
 	                (0,10) to (0,0)
 	                (2,0) to (2,10)
 	                (4,0) to (4,10)
 	                (6,0) to (6,10)
 	                (8,0) to (8,10)
 	                (1,0) to (1,10)
 	                (3,0) to (3,10)
 	                (5,0) to (5,10)
 	                (7,0) to (7,10)
 	                (9,0) to (9,10)
 	                (0,2) to (10,2)
 	                (0,4) to (10,4)
 	                (0,6) to (10,6)
 	                (0,8) to (10,8)
 	                (0,1) to (10,1)
 	                (0,3) to (10,3)
 	                (0,5) to (10,5)
 	                (0,7) to (10,7)
 	                (0,9) to (10,9);
 	                
 	                \node[circle, fill, inner sep=1.25 pt] at (0,0) {};
 	                \node[circle, fill, inner sep=1.25 pt] at (10,10) {};
 	                
 	                \node at (0,-0.5) {\scriptsize{$(0,0)$}};
 			        \node at (10,10.5) {\scriptsize{$(\mathcal{N},\mathcal{N})$}};
 	                
 	                \path[line width = 1 pt, draw=black, postaction ={on each segment = {mid arrow}}]
 	                (0,0) to (4,0)
 	                (4,0) to (4,5)
 	                (4,5) to (7,5)
 	                (7,5) to (7,10)
 	                (7,10) to (10,10);
 	                
 	                \node[circle, fill, inner sep=1.25 pt] at (4,0) {};
 	                \node at (4,-0.5) {\scriptsize{$(n_1,0)$}};
 	                
 	                \node[circle, fill, inner sep=1.25 pt] at (4,5) {};
 	                \node at (4,5.5) {\scriptsize{$(n_1,m_1)$}};
 	                
 	                \node[circle, fill, inner sep=1.25 pt] at (7,5) {};
 	                \node at (7,4.5) {\scriptsize{$(n_2,m_1)$}};
 	                
 	                \node[circle, fill, inner sep=1.25 pt] at (7,10) {};
 	                \node at (7,10.5) {\scriptsize{$(n_2,\mathcal{N})$}};
 	                \node at (6,2.5) {\footnotesize{$\Gamma\in\mathscr{B}_4$}};
 	            \end{tikzpicture}
 	         }
 	         \subfloat[]{\label{fig:formatE}
        	    \begin{tikzpicture}[scale = 0.4]
            	    \path[dashed, draw=black]
 	                (0,0) to (10,0)
 	                (10,0) to (10,10)
 	                (10,10) to (0,10)
 	                (0,10) to (0,0)
 	                (2,0) to (2,10)
 	                (4,0) to (4,10)
 	                (6,0) to (6,10)
 	                (8,0) to (8,10)
 	                (1,0) to (1,10)
 	                (3,0) to (3,10)
 	                (5,0) to (5,10)
 	                (7,0) to (7,10)
 	                (9,0) to (9,10)
 	                (0,2) to (10,2)
 	                (0,4) to (10,4)
 	                (0,6) to (10,6)
 	                (0,8) to (10,8)
 	                (0,1) to (10,1)
 	                (0,3) to (10,3)
 	                (0,5) to (10,5)
 	                (0,7) to (10,7)
 	                (0,9) to (10,9);
 	                
 	                \node[circle, fill, inner sep=1.25 pt] at (0,0) {};
 	                \node[circle, fill, inner sep=1.25 pt] at (10,10) {};
 	                
 	                \node at (0,-0.5) {\scriptsize{$(0,0)$}};
 			        \node at (10,10.5) {\scriptsize{$(\mathcal{N},\mathcal{N})$}};
 	                
 	                \path[line width = 1 pt, draw=black, postaction ={on each segment = {mid arrow}}]
 	                (0,0) to (4,0)
 	                (4,0) to (4,4)
 	                (4,4) to (7,4)
 	                (7,4) to (7,7)
 	                (7,7) to (10,7)
 	                (10,7) to (10,10);
 	                
 	                \node[circle, fill, inner sep=1.25 pt] at (4,0) {};
 	                \node at (4,-0.5) {\scriptsize{$(n_1,0)$}};
 	                
 	                \node[circle, fill, inner sep=1.25 pt] at (4,4) {};
 	                \node at (4,4.5) {\scriptsize{$(n_1,m_1)$}};
 	                
 	                \node[circle, fill, inner sep=1.25 pt] at (7,4) {};
 	                \node at (7,3.5) {\scriptsize{$(n_2,m_1)$}};
 	                
 	                \node[circle, fill, inner sep=1.25 pt] at (7,7) {};
 	                \node at (7,7.5) {\scriptsize{$(n_2,m_2)$}};
 	                
 	                \node[circle, fill, inner sep=1.25 pt] at (10,7) {};
 	                \node at (10,6.5) {\scriptsize{$(\mathcal{N},m_2)$}};
 	                \node at (6,1.5) {\footnotesize{$\Gamma\in\mathscr{B}_5$}};
 	            \end{tikzpicture}
 	         }
 	         \subfloat[]{\label{fig:formatF}
        	    \begin{tikzpicture}[scale = 0.4]
            	    \path[dashed, draw=black]
 	                (0,0) to (10,0)
 	                (10,0) to (10,10)
 	                (10,10) to (0,10)
 	                (0,10) to (0,0)
 	                (2,0) to (2,10)
 	                (4,0) to (4,10)
 	                (6,0) to (6,10)
 	                (8,0) to (8,10)
 	                (1,0) to (1,10)
 	                (3,0) to (3,10)
 	                (5,0) to (5,10)
 	                (7,0) to (7,10)
 	                (9,0) to (9,10)
 	                (0,2) to (10,2)
 	                (0,4) to (10,4)
 	                (0,6) to (10,6)
 	                (0,8) to (10,8)
 	                (0,1) to (10,1)
 	                (0,3) to (10,3)
 	                (0,5) to (10,5)
 	                (0,7) to (10,7)
 	                (0,9) to (10,9);
 	                
 	                \node[circle, fill, inner sep=1.25 pt] at (0,0) {};
 	                \node[circle, fill, inner sep=1.25 pt] at (10,10) {};
 	                
 	                \node at (0,-0.5) {\scriptsize{$(0,0)$}};
 			        \node at (10,10.5) {\scriptsize{$(\mathcal{N},\mathcal{N})$}};
 	                
 	                \path[line width = 1 pt, draw=black, postaction ={on each segment = {mid arrow}}]
 	                (0,0) to (2,0)
 	                (2,0) to (2,4)
 	                (2,4) to (5,4)
 	                (5,4) to (5,7)
 	                (5,7) to (8,7)
 	                (8,7) to (8,10)
 	                (8,10) to (10,10);
 	                
 	                \node[circle, fill, inner sep=1.25 pt] at (2,0) {};
 	                \node at (2,-0.5) {\scriptsize{$(n_1,0)$}};
 	                
 	                \node[circle, fill, inner sep=1.25 pt] at (2,4) {};
 	                \node at (2,4.5) {\scriptsize{$(n_1,m_1)$}};
 	                
 	                \node[circle, fill, inner sep=1.25 pt] at (5,4) {};
 	                \node at (5,3.5) {\scriptsize{$(n_2,m_1)$}};
 	                
 	                \node[circle, fill, inner sep=1.25 pt] at (5,7) {};
 	                \node at (5,7.5) {\scriptsize{$(n_2,m_2)$}};
 	                
 	                \node[circle, fill, inner sep=1.25 pt] at (8,7) {};
 	                \node at (8,6.5) {\scriptsize{$(n_3,m_2)$}};
 	                
 	                \node[circle, fill, inner sep=1.25 pt] at (8,10) {};
 	                \node at (7.5,10.5) {\scriptsize{$(n_3,\mathcal{N})$}};
 	                \node at (4,1.5) {\footnotesize{$\Gamma\in\mathscr{B}_6$}};
 	            \end{tikzpicture}
 	            }
 	   \caption{The 6 first configurations of possible paths.}\label{6 formats}
    \end{figure}
    \\
    \\
    Next, for the paths given in \ref{fig:formatA} and \ref{fig:formatB}, the propagator can be written as
    \begin{eqnarray}
        K^{(2)} &=& \int_{\mathbf{q}(0,0)}^{\mathbf{q}(\mathcal{N},0)}\mathscr{D}[\mathbf{q}(t_1,0)]\int_{-\infty}^{\infty} d^Nq(\mathcal{N},0)\int_{\mathbf{q}(\mathcal{N},0)}^{\mathbf{q}(\mathcal{N},\mathcal{N})}\mathscr{D}[\mathbf{q}(\mathcal{N},t_2)]e^{\frac{i}{\hbar}\Big(\int_{(0,0)}^{(\mathcal{N},0)}L_1(t_1,0)dt_1+\int_{(\mathcal{N},0)}^{(\mathcal{N},\mathcal{N})}L_2(\mathcal{N},t_2)dt_2\Big)} \nonumber
        %\\&&+\int_{(N,0)}^{(N,N)}L_2(q,\frac{dq}{dt_1},\frac{dq}{dt_2};t_1,t_2)dt_2\Big)\bigg] \nonumber
        \\&&+\sum_{n_1=1}^{\mathcal{N}-1}\int_{\mathbf{q}(0,0)}^{\mathbf{q}(n_1,0)}\mathscr{D}[\mathbf{q}(t_1,0)]\int_{-\infty}^{\infty} d^Nq(n_1,0)\int_{\mathbf{q}(n_1,0)}^{\mathbf{q}(n_1,\mathcal{N})}\mathscr{D}[\mathbf{q}(n_1,t_2)]\int_{-\infty}^{\infty} d^Nq(n_1,\mathcal{N}) \nonumber
        \\&&\times \int_{\mathbf{q}(n_1,\mathcal{N})}^{\mathbf{q}(\mathcal{N},\mathcal{N})}\mathscr{D}[\mathbf{q}(t_1,\mathcal{N})]e^{\frac{i}{\hbar}\Big(\int_{(0,0)}^{(n_1,0)}L_1(t_1,0)dt_1+\int_{(n_1,0)}^{(n_1,\mathcal{N})}L_2(n_1,t_2)dt_2+\int_{(n_1,\mathcal{N})}^{(\mathcal{N},\mathcal{N})}L_1(t_1,\mathcal{N})dt_1\Big)}, \label{2first propagator}
    \end{eqnarray}
    where $(2)$ denotes the first two possible simple paths. Here we note again that, for simplicity, the width parameters $\epsilon_1$ and $\epsilon_2$ of the plaquette are dropped out. What we see is that the exponential terms in \eqref{2first propagator} are not the same for the first term and the second term. However, if we introduce a time parameterised variable $s'<s<s''$ such that $t_1(s)$ and $t_2(s)$, therefore, we have $\mathscr{L}=(L_1dt_1/ds+L_2dt_2/ds)ds$. 
    Here, if we replace the $n_1$ of the second term in \eqref{2first propagator} by $\mathcal{N}$, one would obtain
    \begin{align}
        &\int_{\mathbf{q}(0,0)}^{\mathbf{q}(\mathcal{N},0)}\mathscr{D}[\mathbf{q}(t_1,0)]\int_{-\infty}^{\infty} d^Nq(\mathcal{N},0)\int_{\mathbf{q}(\mathcal{N},0)}^{\mathbf{q}(\mathcal{N},\mathcal{N})}\mathscr{D}[\mathbf{q}(\mathcal{N},t_2)]\int_{-\infty}^{\infty} d^Nq(\mathcal{N},\mathcal{N})\int_{\mathbf{q}(\mathcal{N},\mathcal{N})}^{\mathbf{q}(\mathcal{N},\mathcal{N})}\mathscr{D}[\mathcal{q}(t_1,\mathcal{N})]e^{\frac{i}{\hbar}\int_{\{\Gamma:\Gamma\in\mathscr{B}_1\cup\mathscr{B}_2\}}\mathscr{L}}\nonumber
        \\&= \bra{\mathbf{q}(\mathcal{N},\mathcal{N})}\hat{\mathbf{U}}(\mathcal{N},\mathcal{N};\mathcal{N},\mathcal{N})\hat{\mathbf{U}}(\mathcal{N},\mathcal{N};\mathcal{N},0)\hat{\mathbf{U}}(\mathcal{N},0;0,0)\ket{\mathbf{q}(0,0)}\nonumber
        \\&= \bra{\mathbf{q}(\mathcal{N},\mathcal{N})}\hat{\mathbf{U}}(\mathcal{N},\mathcal{N};\mathcal{N},0)\hat{\mathbf{U}}(\mathcal{N},0;0,0)\ket{\mathbf{q}(0,0)}\nonumber
        \\&= \int_{\mathbf{q}(0,0)}^{\mathbf{q}(\mathcal{N},0)}\mathscr{D}[\mathbf{q}(t_1,0)]\int_{-\infty}^{\infty} d^Nq(\mathcal{N},0)\int_{\mathbf{q}(\mathcal{N},0)}^{\mathbf{q}(\mathcal{N},\mathcal{N})}\mathscr{D}[\mathbf{q}(\mathcal{N},t_2)]e^{\frac{i}{\hbar}\int_{\{\Gamma:\Gamma\in\mathscr{B}_1\cup\mathscr{B}_2\}}\mathscr{L}}\;,\label{identity of propagator1}
    \end{align}
    which is the propagator in \eqref{K^1}. Then the propagator $K^{(2)}$ can be simply reduced to
    \begin{align}
        K^{(2)} =& \sum_{n_1=1}^{\mathcal{N}}\int_{\mathbf{q}(0,0)}^{\mathbf{q}(n_1,0)}\mathscr{D}[\mathbf{q}(t_1,0)]\int_{-\infty}^{\infty} d^Nq(n_1,0)\int_{\mathbf{q}(n_1,0)}^{\mathbf{q}(n_1,\mathcal{N})}\mathscr{D}[\mathbf{q}(n_1,t_2)]\int_{-\infty}^{\infty} d^Nq(n_1,\mathcal{N}) \nonumber
        \\&\times \int_{\mathbf{q}(n_1,\mathcal{N})}^{\mathbf{q}(\mathcal{N},\mathcal{N})}\mathscr{D}[\mathbf{q}(t_1,\mathcal{N})]e^{\frac{i}{\hbar}\int_{\{\Gamma:\Gamma\in\mathscr{B}_1\cup\mathscr{B}_2\}}\mathscr{L}}\;.\label{K2}
    \end{align}
    Here $\mathscr{B}_2$ is a set of the path configurations in figure \ref{6 formats}b. Diagrammatically, what we do in \eqref{K2} is just shifting a vertical line from $(1,0)-(1,\mathcal{N})$ to $(\mathcal{N},0)-(\mathcal{N},\mathcal{N})$. 
    \\
    \\
    Next, we include the third possible path, see figure \ref{fig:formatC}, into the calculation. Now the propagator in this case becomes
    \begin{align}
        K^{(3)} =& \int_{\mathbf{q}(0,0)}^{\mathbf{q}(\mathcal{N},0)}\mathscr{D}[\mathbf{q}(t_1,0)]\int_{-\infty}^{\infty} d^Nq(\mathcal{N},0)\int_{\mathbf{q}(\mathcal{N},0)}^{\mathbf{q}(\mathcal{N},\mathcal{N})}\mathscr{D}[\mathbf{q}(\mathcal{N},t_2)] e^{\frac{i}{\hbar}\Big(\int_{(0,0)}^{(\mathcal{N},0)}L_1dt_1+\int_{(\mathcal{N},0)}^{(\mathcal{N},\mathcal{N})}L_2dt_2\Big)} \nonumber
        \\&+\sum_{n_1=1}^{\mathcal{N}-1}\int_{\mathbf{q}(0,0)}^{\mathbf{q}(n_1,0)}\mathscr{D}[\mathbf{q}(t_1,0)]\int_{-\infty}^{\infty} d^Nq(n_1,0)\int_{\mathbf{q}(n_1,0)}^{\mathbf{q}(n_1,N)}\mathscr{D}[\mathbf{q}(n_1,t_2)]\int_{-\infty}^{\infty} d^Nq(n_1,\mathcal{N})\nonumber
        \\&\times\int_{\mathbf{q}(n_1,\mathcal{N})}^{\mathbf{q}(\mathcal{N},\mathcal{N})}\mathscr{D}[\mathbf{q}(t_1,\mathcal{N})]e^{\frac{i}{\hbar}\Big(\int_{(0,0)}^{(n_1,0)}L_1dt_1+\int_{(n_1,0)}^{(n_1,N)}L_2dt_2+\int_{(n_1,\mathcal{N})}^{(\mathcal{N},\mathcal{N})}L_1dt_1\Big)} \nonumber
        \\&+\sum_{m_1=1}^{\mathcal{N}-1}\sum_{n_1=1}^{\mathcal{N} -1}\int_{\mathbf{q}(0,0)}^{\mathbf{q}(n_1,0)}\mathscr{D}[\mathbf{q}(t_1,0)]\int_{-\infty}^{\infty} d^Nq(n_1,0)\int_{\mathbf{q}(n_1,0)}^{\mathbf{q}(n_1,m_1)}\mathscr{D}[\mathbf{q}(n_1,t_2)]\nonumber
        \\&\times\int_{-\infty}^{\infty} d^Nq(n_1,m_1)\int_{\mathbf{q}(n_1,m_1)}^{\mathbf{q}(\mathcal{N},m_1)}\mathscr{D}[\mathbf{q}(t_1,m_1)]\int_{-\infty}^{\infty} d^Nq(\mathcal{N},m_1)\nonumber
        \\&\times\int_{\mathbf{q}(\mathcal{N},m_1)}^{\mathbf{q}(\mathcal{N},\mathcal{N})}\mathscr{D}[\mathbf{q}(\mathcal{N},t_2)] e^{\frac{i}{\hbar}\Big(\int_{(0,0)}^{(n_1,0)}L_1dt_1+\int_{(n_1,0)}^{(n_1,m_1)}L_2dt_2 +\int_{(n_1,m_1)}^{(\mathcal{N},m_1)}L_1dt_1+\int_{(\mathcal{N},m_1)}^{(\mathcal{N},\mathcal{N})}L_2dt_2\Big)}\;. \label{3first propagator}
    \end{align}
    Here the labeled variables activating in Lagrangians on the exponent terms are omitted. We notice that the third term will be identical with the second term if $m_1=\mathcal{N}$ for every single $n_1$ and will be the first term if we let $m_1=0$. Here comes to a crucial point. The order of summation is matter since one might need to avoid the repeatity of arbitrary $n_1$ at $m_1=0$, see appendix \ref{AppendixB}. Then, the sum over possible $m_1$ comes first and sum over $n_1$ comes second. The propagator $K^{(3)}$ becomes
    \begin{align}
         K^{(3)} =& \sum_{n_1=1}^{\mathcal{N}-1}\sum_{m_1=0}^{\mathcal{N}}\int_{\mathbf{q}(0,0)}^{\mathbf{q}(n_1,0)}\mathscr{D}[\mathbf{q}(t_1,0)]\int_{-\infty}^{\infty} d^Nq(n_1,0)\int_{\mathbf{q}(n_1,0)}^{\mathbf{q}(n_1,m_1)}\mathscr{D}[\mathbf{q}(n_1,t_2)]\int_{-\infty}^{\infty} d^Nq(n_1,m_1)\nonumber
         \\&\times\int_{\mathbf{q}(n_1,m_1)}^{\mathbf{q}(\mathcal{N},m_1)}\mathscr{D}[\mathbf{q}(t_1,m_1)]\int_{-\infty}^{\infty} d^Nq(\mathcal{N},m_1)\int_{\mathbf{q}(\mathcal{N},m_1)}^{\mathbf{q}(\mathcal{N},\mathcal{N})}\mathscr{D}[\mathbf{q}(\mathcal{N},t_2)] e^{\frac{i}{\hbar}\int_{\{\Gamma:\Gamma\in\mathscr{B}_1\cup\mathscr{B}_2  \cup\mathscr{B}_3\}}\mathscr{L}}\;.\label{3first propagator complete}
    \end{align}
    Here $\mathscr{B}_3$ is a set of the path configurations in figure \ref{6 formats}c. Diagrammatically, what we do in \eqref{3first propagator complete} is shifting the horizontal line from $(n_1,0)-(\mathcal{N},0)$ to $(n_1,\mathcal{N})-(\mathcal{N},\mathcal{N})$ for $n_1=[1,\mathcal{N}]$. 
    \\
    \\
    With the structure what we proceed so far, it is now not difficult to see that the propagator $K^{(5)}$, included figures \ref{fig:formatA}-\ref{fig:formatE}, can be expressed in the form
    \begin{align}
        K^{(5)} =& \Bigg(\int_{\mathbf{q}(0,0)}^{\mathbf{q}(\mathcal{N},0)}\mathscr{D}[\mathbf{q}(t_1,0)]\int_{-\infty}^{\infty} d^Nq(\mathcal{N},0)\int_{\mathbf{q}(\mathcal{N},0)}^{\mathbf{q}(\mathcal{N},\mathcal{N})}\mathscr{D}[\mathbf{q}(\mathcal{N},t_2)]\nonumber
        \\&+\sum_{n_1=1}^{\mathcal{N}-1}\int_{\mathbf{q}(0,0)}^{\mathbf{q}(n_1,0)}\mathscr{D}[\mathbf{q}(t_1,0)]\int_{-\infty}^{\infty} d^Nq(n_1,0)\int_{\mathbf{q}(n_1,0)}^{\mathbf{q}(n_1,\mathcal{N})}\mathscr{D}[\mathbf{q}(n_1,t_2)]\int_{-\infty}^{\infty} d^Nq(n_1,\mathcal{N})\int_{\mathbf{q}(n_1,\mathcal{N})}^{\mathbf{q}(\mathcal{N},\mathcal{N})}\mathscr{D}[\mathbf{q}(t_1,\mathcal{N})] \nonumber
        \\&+\sum_{m_1=1}^{\mathcal{N}-1}\sum_{n_1=1}^{\mathcal{N}-1}\int_{\mathbf{q}(0,0)}^{\mathbf{q}(n_1,0)}\mathscr{D}[\mathbf{q}(t_1,0)]\int_{-\infty}^{\infty} d^Nq(n_1,0)\int_{\mathbf{q}(n_1,0)}^{\mathbf{q}(n_1,m_1)}\mathscr{D}[\mathbf{q}(n_1,t_2)]\int_{-\infty}^{\infty} d^Nq(n_1,m_1)\nonumber
        \\&\times\int_{\mathbf{q}(n_1,m_1)}^{\mathbf{q}(\mathcal{N},m_1)}\mathscr{D}[\mathbf{q}(t_1,m_1)]\int_{-\infty}^{\infty} d^Nq(\mathcal{N},m_1)\int_{\mathbf{q}(\mathcal{N},m_1)}^{\mathbf{q}(\mathcal{N},\mathcal{N})}\mathscr{D}[\mathbf{q}(\mathcal{N},t_2)] \nonumber
        \\&+\sum_{n_2=n_1+1}^{\mathcal{N}-1}\sum_{m_1=1}^{\mathcal{N}-1}\sum_{n_1=1}^{\mathcal{N}-2}\int_{\mathbf{q}(0,0)}^{\mathbf{q}(n_1,0)}\mathscr{D}[\mathbf{q}(t_1,0)]\int_{-\infty}^{\infty} d^Nq(n_1,0)\int_{\mathbf{q}(n_1,0)}^{\mathbf{q}(n_1,m_1)}\mathscr{D}[\mathbf{q}(n_1,t_2)]\int_{-\infty}^{\infty} d^Nq(n_1,m_1)\nonumber
        \\&\times\int_{\mathbf{q}(n_1,m_1)}^{\mathbf{q}(n_2,m_1)}\mathscr{D}[\mathbf{q}(t_1,m_2)]\int_{-\infty}^{\infty} d^Nq(n_2,m_1)\int_{\mathbf{q}(n_2,m_1)}^{\mathbf{q}(n_2,\mathcal{N})}\mathscr{D}[\mathbf{q}(n_2,t_2)]\int_{-\infty}^{\infty} d^Nq(n_2,\mathcal{N})\int_{\mathbf{q}(n_2,\mathcal{N})}^{\mathbf{q}(\mathcal{N},\mathcal{N})}\mathscr{D}[\mathbf{q}(t_1,\mathcal{N})] \nonumber
        \\&+\sum_{m_2=m_1+1}^{\mathcal{N}-1}\sum_{n_2=n_1+1}^{\mathcal{N}-1}\sum_{m_1=1}^{\mathcal{N}-2}\sum_{n_1=1}^{\mathcal{N}-2}\int_{\mathbf{q}(0,0)}^{\mathbf{q}(n_1,0)}\mathscr{D}[\mathbf{q}(t_1,0)]\int_{-\infty}^{\infty} d^Nq(n_1,0)\int_{\mathbf{q}(n_1,0)}^{\mathbf{q}(n_1,m_1)}\mathscr{D}[\mathbf{q}(n_1,t_2)]\int_{-\infty}^{\infty} d^Nq(n_1,m_1)\nonumber
        \\&\times\int_{\mathbf{q}(n_1,m_1)}^{\mathbf{q}(n_2,m_1)}\mathscr{D}[\mathbf{q}(t_1,m_2)]\int_{-\infty}^{\infty} d^Nq(n_2,m_1)\int_{\mathbf{q}(n_2,m_1)}^{\mathbf{q}(n_2,m_2)}\mathscr{D}[\mathbf{q}(n_2,t_2)]\int_{-\infty}^{\infty} d^Nq(n_2,m_2)\int_{\mathbf{q}(n_2,m_2)}^{\mathbf{q}(\mathcal{N},m_2)}\mathscr{D}[\mathbf{q}(t_1,m_2)]\nonumber
        \\&\times\int_{-\infty}^{\infty} d^Nq(\mathcal{N},m_2)\int_{\mathbf{q}(\mathcal{N},m_2)}^{\mathbf{q}(\mathcal{N},\mathcal{N})}\mathscr{D}[\mathbf{q}(\mathcal{N},t_2)]\Bigg)e^{\frac{i}{\hbar}\int_{\{\Gamma:\Gamma\in\bigcup_{l=1}^5\mathscr{B}_l\}}\mathscr{L}}\;.\label{5first propagator}
    \end{align}
    The fourth path, see figure \ref{fig:formatD}, is nothing but the fifth one, see figure \ref{fig:formatE}, in case of $m_2=\mathcal{N}$ for every single $m_1$ together with $m_1=\mathcal{N}-1$ and $m_2=\mathcal{N}$. Moreover, the fifth path can be reduced to be the third path in the case of $m_2=m_1$ for every single $m_1$ and the second one in the case of $m_1=m_2=\mathcal{N}$. For the first path can be obtained from the fifth path by letting $m_1=m_2=0$, but the case of $m_1=0$ with any $m_2\neq0$ is the same with the case of $m_2=m_1$(third path). This over-counted problem could be settled by fixing $n_1=1$, see appendix \ref{AppendixC}. The propagator, therefore, can be further simplified to 
    \begin{align}
         K^{(5)} =& \sum_{n_2=2}^{\mathcal{N}-1}\sum_{m_2=m_1}^{\mathcal{N}}\sum_{m_1=0}^{\mathcal{N}}\int_{\mathbf{q}(0,0)}^{\mathbf{q}(1,0)}\mathscr{D}[\mathbf{q}(t_1,0)]\int_{-\infty}^{\infty} d^Nq(1,0)\int_{\mathbf{q}(1,0)}^{\mathbf{q}(1,m_1)}\mathscr{D}[\mathbf{q}(1,t_2)]\int_{-\infty}^{\infty} d^Nq(1,m_1)\nonumber
         \\&\times\int_{\mathbf{q}(1,m_1)}^{\mathbf{q}(n_2,m_1)}\mathscr{D}[\mathbf{q}(t_1,m_1)]\int_{-\infty}^{\infty} d^Nq(n_2,m_1)\int_{\mathbf{q}(n_2,m_1)}^{\mathbf{q}(n_2,m_2)}\mathscr{D}[\mathbf{q}(n_2,t_2)]\int_{-\infty}^{\infty} d^Nq(n_2,m_2)\nonumber
         \\&\times\int_{\mathbf{q}(n_2,m_2)}^{\mathbf{q}(\mathcal{N},m_2)}\mathscr{D}[\mathbf{q}(t_1,m_2)]\int_{-\infty}^{\infty} d^Nq(\mathcal{N},m_2)\int_{\mathbf{q}(\mathcal{N},m_2)}^{\mathbf{q}(\mathcal{N},\mathcal{N})}\mathscr{D}[\mathbf{q}(\mathcal{N},t_2)]e^{\frac{i}{\hbar}\int_{\{\Gamma:\Gamma\in\bigcup_{l=1}^5\mathscr{B}_l\}}\mathscr{L}}\;.\label{5first propagator complete}
    \end{align}
    Here the order of summation is still crucial since we would have the repeated paths where $n_2$ is arbitrary at $m_2=m_1$. Diagrammatically, what we do in \eqref{5first propagator complete} is first shifting the horizontal line from $(1,0)-(n_2,0)$ to $(1,\mathcal{N})-(n_2,\mathcal{N})$ and second shifting the another horizontal line from $(n_2,0)-(\mathcal{N},0)$ to $(n_2,\mathcal{N})-(\mathcal{N},\mathcal{N})$ , where the second horizontal line must be always above the first one, for $n_2=[2,\mathcal{N}]$.
    \\
    \\
    By proceeding this deformation of the curve, we could account for all configurations of the path. The propagator, including the normalising factor\footnote{This normalising factors are different from the normalising factor in equation \eqref{path functional}, but they are specified to keep a preserved norm of quantum states.}, could be expressed as
	\begin{align}
	    K^{(\text{All})}=&\sum_{m_{\mathcal{N}-1}\ge\cdots\ge m_2\ge m_1\ge0}^\mathcal{N}\mathscr{N}_{m_I}\int_{\mathbf{q}(0,0)}^{\mathbf{q}(1,0)}\mathscr{D}[\mathbf{q}(t_1,0)]\bigg(\prod_{i=1}^{\mathcal{N}-1}\int_{-\infty}^{\infty} d^Nq(i,m_{i-1})\int_{\mathbf{q}(i,m_{i-1})}^{\mathbf{q}(i,m_i)}\mathscr{D}[\mathbf{q}(i,t_2)]\int_{-\infty}^{\infty} d^Nq(i,m_{i})\nonumber
	    \\&\times\int_{\mathbf{q}(i,m_{i})}^{\mathbf{q}(i+1,m_{i})}\mathscr{D}[\mathbf{q}(t_1,m_{i+1})]\bigg)\int_{-\infty}^{\infty} d^Nq(\mathcal{N},m_{\mathcal{N}-1})\int_{\mathbf{q}(\mathcal{N},m_{\mathcal{N}-1})}^{\mathbf{q}(\mathcal{N},\mathcal{N})}\mathscr{D}[\mathbf{q}(\mathcal{\mathcal{N}},t_2)]e^{\frac{i}{\hbar}\int_{\{\Gamma:\Gamma\in\bigcup_{l=1}^{2\mathcal{N}-1}\mathscr{B}_l\}}\mathscr{L}}\;, \label{all t1 propagator}
	\end{align}
	where $m_I=m_1m_2\cdots m_{\mathcal{N}-1}$ is multi-indices and $m_{0}\equiv 0$.
	Alternatively, in terms of diagram, we could imagine that we slice the horizontal line as $\mathcal{N}$ pieces($n_l=1\;;\; l=0,1,2,\dots,\mathcal{N}-1$) then shift the second piece onward at $m_1, m_2,\dots,m_{\mathcal{N}-1}$, respectively, with conditions: $m_i\geqslant m_j$, where $i>j$.
	\\
	\\
	We know that the equation \eqref{all t1 propagator} is just only the propagator starting in the $t_1$-direction and all possible types of deformation. However, there is a path starting in the $t_2$-direction as well. We therefore employ the symmetry of the lattice structure under interchange $t_1$ and $t_2$. Hence, the propagator can be presented as follows
	\begin{equation}
	   K(\mathbf{q}(\mathbf{t}(s'')),s''; \mathbf{q}(\mathbf{t}(s')), s') =\int_{\mathbf{q}(\mathbf{t}(s'))}^{\mathbf{q}(\mathbf{t}(s''))}\mathbb{D}[\mathbf{q}(\mathbf{t}(s));\Gamma\in\mathscr{B}]e^{\frac{i}{\hbar}\int_{\{\Gamma:\Gamma\in\mathscr{B}\}}\mathscr{L}}\;,\label{KK1}
	\end{equation}
	where $\mathscr{B}$ is a family of paths connecting between $\mathbf{t}(s')$ and $\mathbf{t}(s'')$ on the space of 2 time variables and
	\begin{align}
	    \int_{\mathbf{q}(s')}^{\mathbf{q}(s'')}\mathbb{D}[\mathbf{q}(s);\Gamma\in\mathscr{B}]=&\int_{\mathbf{q}(\mathbf{t}(s'))}^{\mathbf{q}(\mathbf{t}(s''))}\mathbb{D}[\mathbf{q}(\mathbf{t}(s));\Gamma\in\mathscr{B}]\nonumber
	    \\=&\lim_{\substack{\mathcal{N}\to\infty\\\epsilon_{1,2}\to 0}}\Bigg\{\sum_{m_{\mathcal{N}-1}\ge\cdots\ge m_2\ge m_1\ge0}^\mathcal{N}\mathscr{N}_{m_I}\int_{\mathbf{q}(0,0)}^{\mathbf{q}(\epsilon_1,0)}\mathscr{D}[\mathbf{q}(t_1,0)]\bigg(\prod_{i=1}^{\mathcal{N}-1}\int_{-\infty}^{\infty} d^Nq(i\epsilon_1,m_{i-1}\epsilon_2)\nonumber
	    \\&\times\int_{\mathbf{q}(i\epsilon_1,m_{i-1}\epsilon_2)}^{\mathbf{q}(i\epsilon_1,m_i\epsilon_2)}\mathscr{D}[\mathbf{q}(i\epsilon_1,t_2)]\int_{-\infty}^{\infty} d^Nq(i\epsilon_1,m_{i}\epsilon_2)\int_{\mathbf{q}(i\epsilon_1,m_{i}\epsilon_2)}^{\mathbf{q}((i+1)\epsilon_1,m_{i}\epsilon_2)}\mathscr{D}[\mathbf{q}(t_1,m_{i}\epsilon_2)]\bigg)\nonumber
	    \\&\times\int_{-\infty}^{\infty} d^Nq(\mathcal{N}\epsilon_1,m_{\mathcal{N}-1}\epsilon_2)\int_{\mathbf{q}(\mathcal{N}\epsilon_1,m_{\mathcal{N}-1}\epsilon_2)}^{\mathbf{q}(\mathcal{N}\epsilon_1,\mathcal{N}\epsilon_2)}\mathscr{D}[\mathbf{q}(\mathcal{N}\epsilon_1,t_2)]\nonumber
	    \\&+\sum_{n_{\mathcal{N}-1}\ge\cdots\ge n_2\ge n_1\ge0}^\mathcal{N}\mathscr{N}_{n_I}\int_{\mathbf{q}(0,0)}^{\mathbf{q}(0,\epsilon_2)}\mathscr{D}[\mathbf{q}(0,t_2)]\bigg(\prod_{i=1}^{\mathcal{N}-1}\int_{-\infty}^{\infty} d^Nq(n_{i-1}\epsilon_1,i\epsilon_2)\nonumber
	    \\&\times\int_{\mathbf{q}(n_{i-1}\epsilon_1,i\epsilon_2)}^{\mathbf{q}(n_i\epsilon_1,i\epsilon_2)}\mathscr{D}[\mathbf{q}(t_1,i\epsilon_2)]\int_{-\infty}^{\infty} d^Nq(n_{i}\epsilon_1,i\epsilon_2)\int_{\mathbf{q}(n_{i}\epsilon_1,i\epsilon_2)}^{\mathbf{q}(n_{i}\epsilon_1,(i+1)\epsilon_2)}\mathscr{D}[\mathbf{q}(n_{i}\epsilon_1,t_2)]\bigg)\nonumber
	    \\&\times\int_{-\infty}^{\infty} d^Nq(n_{\mathcal{N}-1}\epsilon_1,\mathcal{N}\epsilon_2)\int_{\mathbf{q}(n_{\mathcal{N}-1}\epsilon_1,\mathcal{N}\epsilon_2)}^{\mathbf{q}(\mathcal{N}\epsilon_1,\mathcal{N}\epsilon_2)}\mathscr{D}[\mathbf{q}(t_1,\mathcal{N}\epsilon_2)]\Bigg\}\;.\label{the new 2D gauage of propagator}
	\end{align}
	Here, $\epsilon_1$ and $\epsilon_2$ are put back into the formula. The new notation $\int\mathbb{D}[\mathbf{q}(\mathbf{t}(s));\Gamma\in\mathscr{B}]$ is a extended definition of the standard one $\int\mathscr{D}[\mathbf{q}(t)]$. Then the propagator in \eqref{KK1} represents sum all possible paths not only on the configuration space(standard one), but also all possible paths $\Gamma$ on the time space, see figure \ref{possible paths on multi-time space}.
    \begin{figure}[h]
        \centering
            \tdplotsetmaincoords{60}{125}
            \begin{tikzpicture}
		        [tdplot_main_coords,
		    	cube/.style={dotted, black},
		    	grid/.style={very thin,gray},
		    	axis/.style={->,black,thick}]

	%draw a grid in the x-y plane
    	%\foreach \x in {-0.5,0,...,2.5}
		%\foreach \y in {-0.5,0,...,2.5}
		%{
		%	\draw[grid] (\x,-0.5) -- (\x,2.5);
		%	\draw[grid] (-0.5,\y) -- (2.5,\y);
		%}

        	%draw the axes
        	\draw[axis] (0,0,0) -- (7,0,0) node[anchor=north]{$t_1$};
          	\draw[axis] (0,0,0) -- (0,7,0) node[anchor=west]{$t_2$};
        	\draw[axis] (0,0,0) -- (0,0,7) node[anchor=west]{$\mathbf{q}$};

        	%draw the top and bottom of the cube
        	\draw[cube, densely dotted, line width = 1 pt] (0,0,0) -- (0,6,0) -- (6,6,0) -- (6,0,0) -- cycle;
        	\draw[cube] (0,0,2) -- (0,6,2) -- (6,6,2) -- (6,0,2) -- cycle;
        	\draw[cube] (0,0,4) -- (0,6,4) -- (6,6,4) -- (6,0,4) -- cycle;
        	\draw[cube] (0,0,1) -- (0,6,1) -- (6,6,1) -- (6,0,1) -- cycle;
        	\draw[cube] (0,0,3) -- (0,6,3) -- (6,6,3) -- (6,0,3) -- cycle;
        	\draw[cube] (0,0,5) -- (0,6,5) -- (6,6,5) -- (6,0,5) -- cycle;
    	    \draw[cube, densely dotted, line width = 1 pt] (0,0,6) -- (0,6,6) -- (6,6,6) -- (6,0,6) -- cycle;
	        \draw[cube] (0,2,0) -- (6,2,0);
	        \draw[cube] (0,4,0) -- (6,4,0);
	        \draw[cube] (0,1,0) -- (6,1,0);
	        \draw[cube] (0,3,0) -- (6,3,0);
	        \draw[cube] (0,5,0) -- (6,5,0);
    	    \draw[cube] (0,2,6) -- (6,2,6);
    	    \draw[cube] (0,4,6) -- (6,4,6);
    	    \draw[cube] (0,1,6) -- (6,1,6);
    	    \draw[cube] (0,3,6) -- (6,3,6);
    	    \draw[cube] (0,5,6) -- (6,5,6);
    	    \draw[cube] (2,0,0) -- (2,6,0);
    	    \draw[cube] (4,0,0) -- (4,6,0);
    	    \draw[cube] (1,0,0) -- (1,6,0);
    	    \draw[cube] (3,0,0) -- (3,6,0);
    	    \draw[cube] (5,0,0) -- (5,6,0);
    	    \draw[cube] (2,0,6) -- (2,6,6);
	        \draw[cube] (4,0,6) -- (4,6,6);
    	    \draw[cube] (1,0,6) -- (1,6,6);
	        \draw[cube] (3,0,6) -- (3,6,6);
    	    \draw[cube] (5,0,6) -- (5,6,6);
	
	        %draw the edges of the cube
	        \draw[cube, densely dotted, line width = 1 pt] (0,0,0) -- (0,0,6);
	        \draw[cube] (0,2,0) -- (0,2,6);
	        \draw[cube] (0,1,0) -- (0,1,6);
	        \draw[cube] (0,3,0) -- (0,3,6);
	        \draw[cube] (0,5,0) -- (0,5,6);
	        \draw[cube] (2,0,0) -- (2,0,6);
	        \draw[cube] (1,0,0) -- (1,0,6);
	        \draw[cube] (3,0,0) -- (3,0,6);
	        \draw[cube] (5,0,0) -- (5,0,6);
	        \draw[cube] (0,4,0) -- (0,4,6);
	        \draw[cube] (4,0,0) -- (4,0,6);
	        \draw[cube] (6,2,0) -- (6,2,6);
	        \draw[cube] (6,4,0) -- (6,4,6);
	        \draw[cube] (6,1,0) -- (6,1,6);
	        \draw[cube] (6,3,0) -- (6,3,6);
	        \draw[cube] (6,5,0) -- (6,5,6);
	        \draw[cube] (2,6,0) -- (2,6,6);
	        \draw[cube] (4,6,0) -- (4,6,6);
	        \draw[cube] (1,6,0) -- (1,6,6);
	        \draw[cube] (3,6,0) -- (3,6,6);
	        \draw[cube] (5,6,0) -- (5,6,6);
	        \draw[cube, densely dotted, line width = 1 pt] (0,6,0) -- (0,6,6);
            \draw[cube, densely dotted, line width = 1 pt] (6,0,0) -- (6,0,6);
	        \draw[cube, densely dotted, line width = 1 pt] (6,6,0) -- (6,6,6);
	        
	        \draw[-,line width = 1pt, postaction ={on each segment = {mid arrow}}] (1,1,1) to [bend left = 70] (1,6,5) -- (4,6,5);
	        \draw[-,line width = 1pt, postaction ={on each segment = {mid arrow}}] (1,1,1) -- (3,1,1) -- (3,3,1) -- (4,3,1) to [bend right = 50] (4,5,1.5) to [bend right = 20] (4,6,5);
	        %\draw[-,line width = 1pt, postaction ={on each segment = {mid arrow}}] (1,1,1) -- (1,1,2) -- (1,3,2) -- (1,3,5) -- (1,6,5) -- (4,6,5);
	        %\draw[-,line width = 1pt, postaction ={on each segment = {mid arrow}}] (1,1,1) -- (3,1,1) -- (3,3,1) -- (4,3,1) -- (4,6,1) -- (4,6,5);
	        
	        \draw[densely dashed, line width = 1pt, postaction ={on each segment = {mid arrow}}] (1,1,0) -- (1,1,0) -- (1,3,0) -- (1,3,0) -- (1,6,0) -- (4,6,0);
	        \draw[densely dashed, line width = 1pt, postaction ={on each segment = {mid arrow}}] (1,1,0) -- (3,1,0) -- (3,3,0) -- (4,3,0) -- (4,6,0) -- (4,6,0);
	        
	        \draw[loosely dashed, line width = 0.5pt] (1,1,0) -- (1,1,1);
	        \draw[loosely dashed, line width = 0.5pt] (4,6,0) -- (4,6,5);
	        
	        \node[circle, fill, inner sep=1.5 pt] at (1,1,1) {};
	        \node[circle, fill, inner sep=1.5 pt] at (4,6,5) {};
	        \node[circle, fill, inner sep=1.25 pt] at (1,1,0) {};
	        \node[circle, fill, inner sep=1.25 pt] at (4,6,0) {};
	        
	        \node at (1,0,1) {$\mathbf{q}(t_1',t_2')$};
	        \node at (4,7,5) {$\mathbf{q}(t_1'',t_2'')$};
	        \node at (0.5,1.5,0) {$(t_1',t_2')$};
	        \node at (3.5,6.5,0) {$(t_1'',t_2'')$};
	        \node at (4.5,3,0) {$\Gamma$};
        \end{tikzpicture}
        \caption{All possible paths on the 2-time space}\label{possible paths on multi-time space}
    \end{figure}
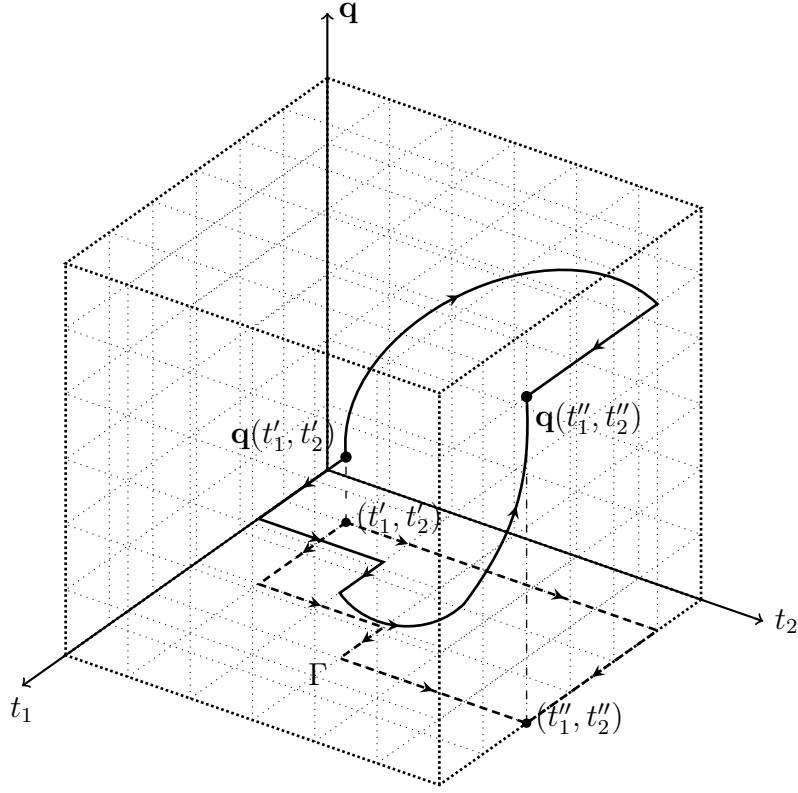
    \\
    \\
    What we obtain in \eqref{KK1} is just the case of two times. Therefore, the process can be directly extended to the case of arbitrary $N$ times by using a diagrammatic method. Here, we will sketch the idea in the case of three times. The propagator in this case will get a contribution from all possible types of deformation starting in $t_1$-direction, see figure \ref{3D propagator}. Mathematically, we just simply shift each step in the $t_1$-direction, resulting in
    \begin{align}
	    &\int_{\mathbf{q}(s')}^{\mathbf{q}(s'')}\mathbb{D}[\mathbf{q}(s);\Gamma\in\mathscr{B}] = \int_{\mathbf{q}(\mathbf{t}(s'))}^{\mathbf{q}(\mathbf{t}(s''))}\mathbb{D}[\mathbf{q}(\mathbf{t}(s));\Gamma\in\mathscr{B}] \nonumber
	    \\&=\lim_{\substack{\mathcal{N}\to\infty\\\epsilon_{1,2,3}\to 0}}\Bigg(\sum_{l_{\mathcal{N}-1}\ge\cdots\ge l_2\ge l_1\ge0}^\mathcal{N}\sum_{m_{\mathcal{N}-1}\ge\cdots\ge m_2\ge m_1\ge0}^\mathcal{N}\mathscr{N}_{\Gamma}\int_{\mathbf{q}(0,0,0)}^{\mathbf{q}(\epsilon_1,0,0)}\mathscr{D}[\mathbf{q}(t_1,0,0)]\int_{-\infty}^{\infty} d^Nq(\epsilon_1,0,0) \nonumber
	    \\&\;\times\Bigg\{\int_{\mathbf{q}(\epsilon_1,0,0)}^{\mathbf{q}(\epsilon_1,m_1\epsilon_2,0)}\mathscr{D}[\mathbf{q}(\epsilon_1,t_2,0)]\int_{-\infty}^{\infty} d^Nq(\epsilon_1,m_1\epsilon_2,0)\int_{\mathbf{q}(\epsilon_1,m_1\epsilon_2,0)}^{\mathbf{q}(\epsilon_1,m_1\epsilon_2,l_1\epsilon_3)}\mathscr{D}[\mathbf{q}(\epsilon_1,m_1\epsilon_2,t_3)] \nonumber
	    \\&\;+\int_{\mathbf{q}(\epsilon_1,0,0)}^{\mathbf{q}(\epsilon_1,0,l_1\epsilon_3)}\mathscr{D}[\mathbf{q}(\epsilon_1,0,t_3)]\int_{-\infty}^{\infty} d^Nq(\epsilon_1,0,l_1\epsilon_3)\int_{\mathbf{q}(\epsilon_1,0,l_1\epsilon_3)}^{\mathbf{q}(\epsilon_1,m_1\epsilon_2,l_1\epsilon_3)}\mathscr{D}[\mathbf{q}(\epsilon_1,t_2,l_1\epsilon_3)]\Bigg\}\nonumber
	    \\&\;\times\int_{-\infty}^{\infty} d^Nq(\epsilon_1,m_1\epsilon_2,l_1\epsilon_3)\int_{\mathbf{q}(\epsilon_1,m_1\epsilon_2,l_1\epsilon_3)}^{\mathbf{q}(2\epsilon_1,m_1\epsilon_2,l_1\epsilon_3)}\mathscr{D}[\mathbf{q}(t_1,m_1\epsilon_2,l_1\epsilon_3)]\int_{-\infty}^{\infty} d^Nq(2\epsilon_1,m_1\epsilon_2,l_1\epsilon_3)\nonumber
	    \\&\;\times\Bigg\{\int_{\mathbf{q}(2\epsilon_1,m_1\epsilon_2,l_1\epsilon_3)}^{\mathbf{q}(2\epsilon_1,m_2\epsilon_2,l_1\epsilon_3)}\mathscr{D}[\mathbf{q}(2\epsilon_1,t_2,l_1\epsilon_3)]\int_{-\infty}^{\infty} d^Nq(2\epsilon_1,m_2\epsilon_2,l_1\epsilon_3)\int_{\mathbf{q}(2\epsilon_1,m_2\epsilon_1,l_1\epsilon_1)}^{\mathbf{q}(2\epsilon_1,m_2\epsilon_2,l_2\epsilon_3)}\mathscr{D}[\mathbf{q}(2\epsilon_1,m_2\epsilon_2,t_3)] \nonumber
	    \\&\;+\int_{\mathbf{q}(2\epsilon_1,m_1\epsilon_2,l_1\epsilon_3)}^{\mathbf{q}(2\epsilon_1,m_1\epsilon_2,l_2\epsilon_3)}\mathscr{D}[\mathbf{q}(2\epsilon_1,m_1\epsilon_2,t_3)]\int_{-\infty}^{\infty} d^Nq(2\epsilon_1,m_1\epsilon_2,l_2\epsilon_3)\int_{\mathbf{q}(2\epsilon_1,m_1\epsilon_2,l_2\epsilon_3)}^{\mathbf{q}(2\epsilon_1,m_2\epsilon_2,l_2\epsilon_3)}\mathscr{D}[\mathbf{q}(2\epsilon_1,t_2,l_2\epsilon_3)]\Bigg\}\nonumber
	    \\&\;\cdots\int_{\mathbf{q}((\mathcal{N}-1)\epsilon_1,m_{\mathcal{N}-1}\epsilon_2,l_{\mathcal{N}-1}\epsilon_3)}^{\mathbf{q}(\mathcal{N}\epsilon_1,m_{\mathcal{N}-1}\epsilon_2,l_{\mathcal{N}-1}\epsilon_3)}\mathscr{D}[\mathbf{q}(t_1,m_{\mathcal{N}-1}\epsilon_2,l_{\mathcal{N}-1}\epsilon_3)]\int_{-\infty}^{\infty} d^Nq(\mathcal{N}\epsilon_1,m_{\mathcal{N}-1}\epsilon_2,l_{\mathcal{N}-1}\epsilon_3) \nonumber
	    \\&\;\times\Bigg\{\int_{\mathbf{q}(\mathcal{N}\epsilon_1,m_{\mathcal{N}-1}\epsilon_2,l_{\mathcal{N}-1}\epsilon_3)}^{\mathbf{q}(\mathcal{N}\epsilon_1,\mathcal{N}\epsilon_2,l_{\mathcal{N}-1}\epsilon_3)}\mathscr{D}[\mathbf{q}(\mathcal{N}\epsilon_1,t_2,l_{\mathcal{N}-1}\epsilon_3)]\int_{-\infty}^{\infty} d^Nq(\mathcal{N}\epsilon_1,\mathcal{N}\epsilon_2,l_{\mathcal{N}-1}\epsilon_3)\int_{\mathbf{q}(\mathcal{N}\epsilon_1,\mathcal{N}\epsilon_2,l_{\mathcal{N}-1}\epsilon_3)}^{\mathbf{q}(\mathcal{N}\epsilon_1,\mathcal{N}\epsilon_2,\mathcal{N}\epsilon_3)}\mathscr{D}[\mathbf{q}(\mathcal{N}\epsilon_1,\mathcal{N}\epsilon_2,t_3)] \nonumber
	    \\&\;+\int_{\mathbf{q}(\mathcal{N}\epsilon_1,m_{\mathcal{N}-1}\epsilon_2,l_{\mathcal{N}-1}\epsilon_3)}^{\mathbf{q}(\mathcal{N}\epsilon_1,m_{\mathcal{N}-1}\epsilon_2,\mathcal{N}\epsilon_3)}\mathscr{D}[\mathbf{q}(\mathcal{N}\epsilon_1,m_{\mathcal{N}-1}\epsilon_2,t_3)]\int_{-\infty}^{\infty} d^Nq(\mathcal{N}\epsilon_1,m_{\mathcal{N}-1}\epsilon_2,\mathcal{N}\epsilon_3)\int_{\mathbf{q}(\mathcal{N}\epsilon_1,m_{\mathcal{N}-1}\epsilon_2,\mathcal{N}\epsilon_3)}^{\mathbf{q}(\mathcal{N}\epsilon_1,\mathcal{N}\epsilon_2,\mathcal{N}\epsilon_3)}\mathscr{D}[\mathbf{q}(\mathcal{N}\epsilon_1,t_2,\mathcal{N}\epsilon_3)]\Bigg\}\nonumber
	    \\&\; + (\text{the $t_2$-symmetric term}) + (\text{the $t_3$-symmetric term})\Bigg)\;,\label{the new 3D gauage of propagator}
	\end{align}
	where $\epsilon_3$ is a width of step-evolution in $t_3$-direction and $\mathscr{B}$ is a family of paths connecting between $\mathbf{t}(s')$ and $\mathbf{t}(s'')$ on the space of 3 time variables.
	\begin{figure}[h]
	    \centering
	    \begin{tikzpicture}[scale=0.9]
	    
	         \path[->, draw = black]
	                (0,0) to (7,0);
	                
	         \path[->, draw = black]
	                (0,0) to (0,3);
	                
	         \path[->, draw = black]
	                (0,0) to (-2,-2);
	                
	         \path[draw = black,line width = 1.25 pt, postaction ={on each segment = {mid arrow}}]
	                (0,0) to (2.5,0)
	                (1.5,0.5) to (4,0.5)
	                (3,1) to (5.5,1);
	                
	         \path[dashed, draw = black,line width = 1 pt, postaction ={on each segment = {mid arrow}}]
	                (2.5,0) to (1.5,-1)
	                (1.5,-1) to (1.5,0.5)
	                (4,0.5) to (3,-0.5)
	                (3,-0.5) to (3,1);
	                
	         \path[densely dotted, draw = black, line width = 1 pt, postaction ={on each segment = {mid arrow}}]
	                (2.5,0) to (2.5,1.5)
	                (2.5,1.5) to (1.5,0.5)
	                (4,0.5) to (4,2)
	                (4,2) to (3,1);
 	          
 	          \node[circle, fill, inner sep=1.25 pt] at (0,0) {};
 	          \node[circle, fill, inner sep=1.25 pt] at (2.5,0) {};
 	          \node[circle, fill, inner sep=1.25 pt] at (1.5,0.5) {};
 	          \node[circle, fill, inner sep=1.25 pt] at (4,0.5) {};
 	          \node[circle, fill, inner sep=1.25 pt] at (3,1) {};
 	          \node[circle, fill, inner sep=1.25 pt] at (5.5,1) {};
 	          
 	          \node at (0,0.25) {\scriptsize{$(0,0,0)$}};
 			  \node at (2.5,-0.25) {\scriptsize{$(1,0,0)$}};
 			  \node at (1.5,0.75) {\scriptsize{$(1,m_1,l_1)$}}; 
 			  \node at (4,0.25) {\scriptsize{$(2,m_1,l_1)$}};
 			  \node at (3,1.25) {\scriptsize{$(2,m_2,l_2)$}};
 			  \node at (5.5,1.25) {\scriptsize{$(3,m_2,l_2)$}}; 
 			  
 			  \node at (0,3.25) {$t_2$};
 			  \node at (7.25,0) {$t_1$};
 			  \node at (-2.25,-2.25) {$t_3$}; 
	    \end{tikzpicture}
	    \caption{The deformation of paths starting in $t_1$-direction on the three-dimensional space of time variables.}
	    \label{3D propagator}
	\end{figure}
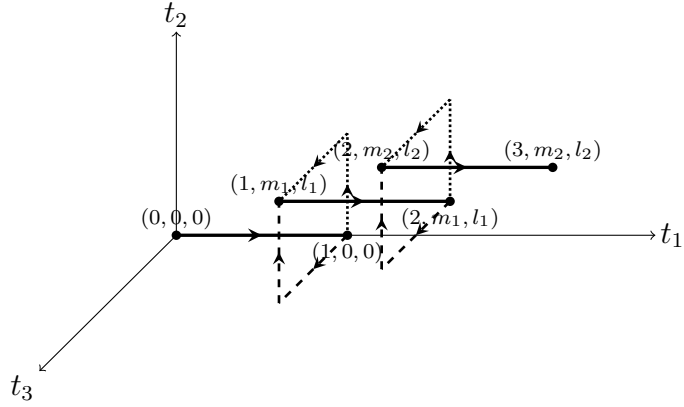
	\\
	Here, we define all possible permutations $\mathcal{P}$ of the functional measure such that
	\begin{align}
	    &\int_{\mathbf{q}(n\epsilon_1,m_i\epsilon_2,l_i\epsilon_3)}^{\mathbf{q}(n\epsilon_1,m_{i+1}\epsilon_2,l_{i+1}\epsilon_3)}\mathcal{P}\bigg[\mathscr{D}[\mathbf{q}(t_2)],\mathscr{D}[\mathbf{q}(t_3)]\bigg] \nonumber
	    \\&= \frac{1}{\mathcal{P}(2,r_{i+1})}\Bigg(\int_{\mathbf{q}(n\epsilon_1,m_i\epsilon_2,l_i\epsilon_3)}^{\mathbf{q}(n\epsilon_1,m_{i+1}\epsilon_2,l_i\epsilon_3)}\mathscr{D}[\mathbf{q}(n\epsilon_1,t_2,l_i\epsilon_3)]\int_{-\infty}^{\infty} d^Nq(n\epsilon_1,m_{i+1}\epsilon_2,l_i\epsilon_3)\nonumber
	    \\&\;\;\;\times\int_{\mathbf{q}(n\epsilon_1,m_{i+1}\epsilon_2,l_i\epsilon_3)}^{\mathbf{q}(n\epsilon_1,m_{i+1}\epsilon_2,l_{i+1}\epsilon_3)}\mathscr{D}[\mathbf{q}(n\epsilon_1,m_{i+1}\epsilon_2,t_3)] \nonumber
	    \\&\;\;\;+\int_{\mathbf{q}(n\epsilon_1,m_i\epsilon_2,l_i\epsilon_3)}^{\mathbf{q}(n\epsilon_1,m_i\epsilon_2,l_{i+1}\epsilon_3)}\mathscr{D}[\mathbf{q}(n\epsilon_1,m_i\epsilon_2,t_3)]\int_{-\infty}^{\infty} d^Nq(n\epsilon_1,m_i\epsilon_2,l_{i+1}\epsilon_3)\int_{\mathbf{q}(n\epsilon_1,m_i\epsilon_2,l_{i+1}\epsilon_3)}^{\mathbf{q}(n\epsilon_1,m_{i+1}\epsilon_2,l_{i+1}\epsilon_3)}\mathscr{D}[\mathbf{q}(n\epsilon_1,t_2,l_{i+1}\epsilon_3)]\Bigg)\;,\nonumber\\\label{anticommute}
	\end{align}
	where $\mathbf{q}(t_i)$ means that $t_i$ is active but $t_{j\neq i}$ are fixed. The factor $\mathcal{P}(2,r_{i+1})=\frac{2!}{(2-r_{i+1})!}$, where $\;r_{i+1}=\delta_{m_i,m_{i+1}}+\delta_{l_i,l_{i+1}}$ is placed to avoid the redundant path in some situations. Let us illustrate a simple case as follows. If $t_1$ is fixed and $t_2$ or $t_3$ does not activate, the redundancy will exist. With condition $n_0=m_0=l_0=0$, hence equation \eqref{the new 3D gauage of propagator} will be simply expressed as 
	\begin{eqnarray}
	    \int_{\mathbf{q}(\mathbf{t}(s'))}^{\mathbf{q}(\mathbf{t}(s''))}\mathbb{D}[\mathbf{q}(\mathbf{t}(s));\Gamma\in\mathscr{B}]&=&\lim_{\substack{\mathcal{N}\to\infty\\\epsilon_{1,2,3}\to 0}}\Bigg(\sum_{l_{\mathcal{N}-1}\ge\cdots\ge l_2\ge l_1\ge0}^\mathcal{N}\sum_{m_{\mathcal{N}-1}\ge\cdots\ge m_2\ge m_1\ge0}^\mathcal{N}\mathscr{N}_{\Gamma}
	    \int_{\mathbf{q}(0,0,0)}^{\mathbf{q}(\epsilon_1,0,0)}\mathscr{D}[\mathbf{q}(t_1,0,0)]\nonumber
	    \\&&\times\bigg(\prod_{i=1}^{\mathcal{N}-1}\int_{-\infty}^{\infty} d^Nq(i\epsilon_1,m_{i-1}\epsilon_2,l_{i-1}\epsilon_3)\int_{\mathbf{q}(i\epsilon_1,m_{i-1}\epsilon_2,l_{i-1}\epsilon_3)}^{\mathbf{q}(i\epsilon_1,m_i\epsilon_2,l_i\epsilon_3)}\mathcal{P}\Bigg[\mathscr{D}[\mathbf{q}(t_2)],\mathscr{D}[\mathbf{q}(t_3)]\Bigg]\nonumber
	    \\&&\times\int_{-\infty}^{\infty} d^Nq(i\epsilon_1,m_i\epsilon_2,l_i\epsilon_3)\int_{\mathbf{q}(i\epsilon_1,m_i\epsilon_2,l_i\epsilon_3)}^{\mathbf{q}((i+1)\epsilon_1,m_i\epsilon_2,l_i\epsilon_3)}\mathscr{D}[\mathbf{q}(t_1,m_i\epsilon_2,l_i\epsilon_3)]\bigg)\nonumber
	    \\&&\times\int_{-\infty}^{\infty} d^Nq(\mathcal{N}\epsilon_1,m_{\mathcal{N}-1}\epsilon_2,l_{\mathcal{N}-1}\epsilon_3)\int_{\mathbf{q}(\mathcal{N}\epsilon_1,m_{\mathcal{N}-1}\epsilon_2,l_{\mathcal{N}-1}\epsilon_3)}^{\mathbf{q}(\mathcal{N}\epsilon_1,\mathcal{N}\epsilon_2,\mathcal{N}\epsilon_3)}\mathcal{P}\Bigg[\mathscr{D}[\mathbf{q}(t_2)],\mathscr{D}[\mathbf{q}(t_3)]\Bigg]\nonumber
	    \\&&+ (\text{the $t_2$-symmetric term}) + (\text{the $t_3$-symmetric term})\Bigg)\;.\label{the new 3D gauage of propagator complete}
	\end{eqnarray}
	\\
	\\
	For $N$-dimensional of time space, the functional measure over all possible spatial-temporal paths could be presented as
	\begin{eqnarray}
	     &&\int_{\mathbf{q}(s')}^{\mathbf{q}(s'')}\mathbb{D}[\mathbf{q}(s);\Gamma\in\mathscr{B}]=\int_{\mathbf{q}(\mathbf{t}(s'))}^{\mathbf{q}(\mathbf{t}(s''))}\mathbb{D}[\mathbf{q}(\mathbf{t}(s));\Gamma\in\mathscr{B}]\nonumber
	     \\&&=\lim_{\substack{\mathcal{N}\to\infty\\\epsilon_{1,2,\dots,N}\to 0}}\Bigg(\sum_{\alpha^N_{\mathcal{N}-1}\ge\cdots\alpha^N_2\ge\alpha^N_1\ge0}^\mathcal{N}\cdots\sum_{\alpha^3_{\mathcal{N}-1}\ge\cdots\alpha^3_2\ge\alpha^3_1\ge0}^\mathcal{N}\sum_{\alpha^2_{\mathcal{N}-1}\ge\cdots\alpha^2_2\ge\alpha^2_1\ge0}^\mathcal{N}\mathscr{N}_{\Gamma}\int_{\mathbf{q}(0,0,\dots,0)}^{\mathbf{q}(\epsilon_1,0,\dots,0)}\mathscr{D}[\mathbf{q}(t_1)]\nonumber
	     \\&&\;\times\Bigg(\prod_{j=1}^{\mathcal{N}-1}\int_{-\infty}^{\infty} d^Nq(j\epsilon_1,\alpha_{j-1}^2\epsilon_2,\dots,\alpha_{j-1}^N\epsilon_N) \int_{\mathbf{q}(j\epsilon_1,\alpha^2_{j-1}\epsilon_2,\dots\alpha^N_{j-1}\epsilon_N)}^{\mathbf{q}(j\epsilon_1,\alpha^2_j\epsilon_2,\dots\alpha^N_j\epsilon_N)}\mathcal{P}\Bigg[\mathscr{D}[\mathbf{q}(t_2)],\mathscr{D}[\mathbf{q}(t_3)],\cdots,\mathscr{D}[\mathbf{q}(t_N)]\Bigg] \nonumber
	     \\&&\;\times\int_{-\infty}^{\infty} d^Nq(j\epsilon_1,\alpha_j^2\epsilon_2,\dots,\alpha_j^N\epsilon_N) \int_{\mathbf{q}(j\epsilon_1,\alpha^2_j\epsilon_2,\dots\alpha^N_j\epsilon_N)}^{\mathbf{q}((j+1)\epsilon_1,\alpha^2_j\epsilon_2,\dots\alpha^N_j\epsilon_N)}\mathscr{D}[\mathbf{q}(t_1)]\Bigg) \nonumber
	     \\&&\;\times\int_{-\infty}^{\infty} d^Nq(\mathcal{N}\epsilon_1,\alpha^2_{\mathcal{N}-1}\epsilon_2,\dots,\alpha^N_{\mathcal{N}-1}\epsilon_N) \int_{\mathbf{q}(\mathcal{N}\epsilon_1,\alpha^2_{\mathcal{N}-1}\epsilon_2,\dots,\alpha^N_{\mathcal{N}-1}\epsilon_N)}^{\mathbf{q}(\mathcal{N}\epsilon_1,\mathcal{N}\epsilon_2,\dots,\mathcal{N}\epsilon_N)}\mathcal{P}\Bigg[\mathscr{D}[\mathbf{q}(t_2)],\mathscr{D}[\mathbf{q}(t_3)],\cdots,\mathscr{D}[\mathbf{q}(t_N)]\Bigg]\nonumber
	     \\&&\;+(\text{all symmetric terms})\Bigg)\;,
	\end{eqnarray}
	where $\alpha^i_j$ is $j^{th}$ step-evolution in $t_i$-direction and $\alpha_0^i=0$. Moreover, we define all possible permutations $\mathcal{P}$ of the functional measure such that
	\begin{align}
	    &\int_{\mathbf{q}(j\epsilon_1,\alpha^2_{j-1}\epsilon_2,\dots,\alpha^N_{j-1}\epsilon_N)}^{\mathbf{q}(j\epsilon_1,\alpha^2_j\epsilon_2,\dots,\alpha^N_j\epsilon_N)}\mathcal{P}\Bigg[\mathscr{D}[\mathbf{q}(t_2)],\mathscr{D}[\mathbf{q}(t_3)],\cdots,\mathscr{D}[\mathbf{q}(t_N)]\Bigg] \nonumber
	    \\&\;\;\;= \frac{1}{\mathcal{P}(N-1,r_j)}\big(\text{Summation of all possible permutations}\big)\;,
	\end{align}
	where $r_j=\sum_{i=1}^N \delta_{\alpha^i_{j-1},\alpha^i_j}$.
    \phantom{dd}
    \\
    \\
    What we have now for the multi-time propagator in terms of the parameterised variable $s$ is 
    \begin{equation}
         K(\mathbf{q}(\mathbf{t}(s'')),s''; \mathbf{q}(\mathbf{t}(s')), s') =\int_{\mathbf{q}(\mathbf{t}(s'))}^{\mathbf{q}(\mathbf{t}(s''))}\mathbb{D}[\mathbf{q}(\mathbf{t}(s));\Gamma\in\mathscr{B}]e^{\frac{i}{\hbar}\int_{\{\Gamma:\Gamma\in\mathscr{B}\}}\mathscr{L}}\;,\label{KK2}
    \end{equation}
    where $\int\mathbb{D}[\mathbf{q}(\mathbf{t}(s));\Gamma\in\mathscr{B}]$ measures the contribution of all possible paths $\mathbf{q}(\mathbf{t})$ and all possible paths $\Gamma\in\mathscr{B}$, where $\mathscr{B}$ is a family of paths connecting between $\mathbf{t}(s')$ and $\mathbf{t}(s'')$ on the space of time variables.
    \\
    \\
   Again, the propagator \eqref{KK2} is not appropriate for further consideration and, therefore, we shall apply the semi-classical approximation. Since we work with the parameterised time variable $s$, the action can be expanded in the same fashion with the single-time case, therefore the multi-time propagator now becomes
    \begin{eqnarray}
          K(\mathbf{q}'',s''; \mathbf{q}', s') 
          = e^{\frac{i}{\hbar}S[\mathbf{q}_c(s)]}\mathcal{Q}(\mathbf{q}'',s'', \mathbf{q}', s')\left[1+\mathcal{O}(\hbar)\right]\;,
    \end{eqnarray}
    where $\mathbf{q}_c$ is a classical solution and
    \begin{eqnarray}
          \mathcal{Q}(\mathbf{q}'',s'', \mathbf{q}', s') = \int_{\mathbf{y}(s')=0}^{\mathbf{y}(s'')=0}\mathbb{D}[\mathbf{y}(s);\Gamma]e^{\frac{i}{2\hbar}\int_{s'}^{s''}d\tau\int_{s'}^{s''}d\sigma\left(\mathbf{y}(\tau)\frac{\delta^2S[\mathbf{q}_c(s)]}{\delta\mathbf{q}(\tau)\delta\mathbf{q}(\sigma)}\mathbf{y}(\sigma)\right)}\;\label{factor for all path}
    \end{eqnarray}
    is a smooth function of end points. 
    %However, the formal form of the equation \eqref{factor for all path} is not yet explicit.
    \\
    \\
    Next, we will consider the multi-time propagator along an only arbitrary path $\Gamma$ connecting between end points $\mathbf{t}''$ and $\mathbf{t}'$ on space of time variables as follows
    \begin{eqnarray}
          K_\Gamma(\mathbf{q}(\mathbf{t}''),\mathbf{t}''; \mathbf{q}(\mathbf{t}'), \mathbf{t}') &=& e^{\frac{i}{\hbar}S_\Gamma[\mathbf{q}_c(\mathbf{t})]}\mathcal{Q}_\Gamma(\mathbf{q}'', \mathbf{q}', \mathbf{t}'', \mathbf{t}')\left[1+\mathcal{O}(\hbar)\right] \;,\label{propagator for arbitrary path}
    \end{eqnarray}
    where
    \begin{equation}
        \mathcal{Q}_\Gamma(\mathbf{q}'', \mathbf{q}', \mathbf{t}'', \mathbf{t}') = \int_{\mathbf{y}(\mathbf{t}')=0}^{\mathbf{y}(\mathbf{t}'')=0}\mathscr{D}_\Gamma[\mathbf{y}(\mathbf{t})]e^{\frac{i}{2\hbar}\sum_{j=1}^N\int_\Gamma du_j\int_\Gamma dv_j\left(\mathbf{y}(\mathbf{u})\frac{\delta^2S_{j,\Gamma}[\mathbf{q}_c(\mathbf{t})]}{\delta\mathbf{q}(\mathbf{u})\delta\mathbf{q}(\mathbf{v})}\mathbf{y}(\mathbf{v})\right)}\;.\label{factor for arbitrary path}
    \end{equation}
    Here $\int\mathscr{D}_\Gamma[\mathbf{y}]$ is a functional measure all possible fluctuations $\mathbf{y}$ along path $\Gamma$ on $N$-dimensional time-space and $S_{j,\Gamma}[\mathbf{q}(\mathbf{t})] = \int_\Gamma L_jdt_j$. Therefore, the function $\mathcal{Q}_\Gamma$ in \eqref{factor for arbitrary path} can be expressed, see appendix \ref{AppendixD}, in the form
    \begin{equation}
        \mathcal{Q}_\Gamma(\mathbf{q}'', \mathbf{q}', \mathbf{t}'', \mathbf{t}') = \det\left(\frac{i}{2\pi\hbar}\frac{\partial^2S_\Gamma [\mathbf{q}_c(\mathbf{t})]}{\partial\mathbf{q}(\mathbf{t}'')\partial\mathbf{q}(\mathbf{t}')}\right)^{\frac{1}{2}}\;,\label{formal factor for arbitrary path}
    \end{equation}
    where $S_\Gamma[\mathbf{q}(\mathbf{t})] = \int_\Gamma\mathscr{L}$.
    \\
    \\
    Here is an interesting point. We knew that the Lagrangian is not unique since different Lagrangian would produce an identical equation of motion. Moreover, in the context of integrable systems, a different set of Lagragians, producing the same equations of motion, may not all satisfy the closure relation, but only a special set of Lagrangians does. This structure would provide the space of possible Lagrangians. Consequently, in multi-time quantum systems, the propagator possessed the path independent feature on the space of time variables is the one that comes with a special set of Lagrangians satisfying the closure relation. This structure gives us an on top feature of the classical variational principle in the sense that this special set of Lagrangians plays a role of critical point resulting path independent propagator on the space of independent variables coined as the quantum variation \cite{SD.Kings}.
    \begin{theorem}\label{integrable for propagator}
	Let $\{L_1,L_2,...,L_N\}$ be a set of Lagrangians satisfying the Lagrangian closure relation and $\mathscr{L}=\sum_{j=1}^NL_jdt_j$ be the Lagrangian 1-form, where $L_j=L_j\left(\mathbf{q},\left\{\frac{\partial\mathbf{q}}{\partial t_j};j = 1, 2,\dots,N\right\};\mathbf{t}\right)$. On the space of independent variables (time variables), the multi-time propagator for any $\Gamma\in \mathscr{B}$, where $\mathscr{B}$ is a family of paths connecting between $\mathbf{t}'$ and $\mathbf{t}''$, gives equally contribution leading to 
	\begin{equation}
	    \oint\mathscr{D}_{C=\partial \mathcal{S}}[\mathbf{q}(\mathbf{t})]e^{\frac{i}{\hbar}\oint_{C=\partial \mathcal{S}}\mathscr{L}} = \mathbb{I}\;,
	\end{equation}
	where $\mathcal{S}$ is an arbitrary surface bounded by a contractible loop $C$ on the space of time variables, and therefore the multi-time quantum system is integrable.
	\end{theorem}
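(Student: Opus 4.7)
The plan is to combine the semi-classical expression \eqref{propagator for arbitrary path}--\eqref{formal factor for arbitrary path} for a path-resolved propagator with the Lagrangian closure relation, invoking Stokes' theorem in the space of independent variables. First I would fix two paths $\Gamma,\Gamma'\in\mathscr{B}$ sharing the endpoints $\mathbf{t}'$ and $\mathbf{t}''$ and consider the oriented closed loop $C=\Gamma'-\Gamma$, which bounds a contractible surface $\mathcal{S}\subset\mathbb{R}^N$. Along the classical solution $\mathbf{q}_c(\mathbf{t})$, the difference of the classical actions can be written as
\begin{equation}
S_{\Gamma'}[\mathbf{q}_c]-S_{\Gamma}[\mathbf{q}_c]=\oint_{C=\partial\mathcal{S}}\mathscr{L}\Big|_{\mathbf{q}=\mathbf{q}_c}=\iint_{\mathcal{S}}d\mathscr{L}\Big|_{\mathbf{q}=\mathbf{q}_c}=\iint_{\mathcal{S}}\sum_{k<l}\left(\frac{\partial L_k}{\partial t_l}-\frac{\partial L_l}{\partial t_k}\right)dt_l\wedge dt_k,
\end{equation}
which vanishes identically because $\{L_1,\dots,L_N\}$ satisfies the closure relation by hypothesis. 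Hence $S_{\Gamma}[\mathbf{q}_c]=S_{\Gamma'}[\mathbf{q}_c]$ and the exponential factor in \eqref{propagator for arbitrary path} is path-independent.

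Next I would turn to the prefactor $\mathcal{Q}_\Gamma$ of \eqref{factor for arbitrary path}. The key observation is the formal expression \eqref{formal factor for arbitrary path}, in which $\mathcal{Q}_\Gamma$ is written entirely in terms of the second derivatives of $S_\Gamma[\mathbf{q}_c(\mathbf{t})]$ with respect to the endpoints. Since the value of $S_\Gamma[\mathbf{q}_c]$ at fixed endpoints is itself path-independent by the previous paragraph, its endpoint Hessian is also independent of $\Gamma$, so $\mathcal{Q}_{\Gamma}=\mathcal{Q}_{\Gamma'}$ to the order of the semi-classical approximation. Combining the two pieces gives
\begin{equation}
K_{\Gamma}(\mathbf{q}'',\mathbf{t}'';\mathbf{q}',\mathbf{t}')=K_{\Gamma'}(\mathbf{q}'',\mathbf{t}'';\mathbf{q}',\mathbf{t}')\bigl[1+\mathcal{O}(\hbar)\bigr],
\end{equation}
so each path in $\mathscr{B}$ contributes equally to the multi-time propagator \eqref{KK2}.

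With equal contribution from every $\Gamma\in\mathscr{B}$, the composite evolution along the closed loop $C=\Gamma'-\Gamma=\partial\mathcal{S}$ can be read off by gluing the propagators along $\Gamma'$ and the reverse of $\Gamma$ via the completeness relation $\int d^Nq\,|\mathbf{q}\rangle\langle\mathbf{q}|=\mathbb{I}$ at a point on $\Gamma\cap\Gamma'$. Using $K_{-\Gamma}=K_\Gamma^{*}$ and the just-established equality $K_\Gamma=K_{\Gamma'}$, that composition collapses to the identity, i.e.
\begin{equation}
\oint\mathscr{D}_{C=\partial\mathcal{S}}[\mathbf{q}(\mathbf{t})]\,e^{\frac{i}{\hbar}\oint_{C=\partial\mathcal{S}}\mathscr{L}}=\mathbb{I}.
\end{equation}
In other words, loops on the space of time variables do not contribute to the multi-time propagator, which is the hallmark of the multi-dimensional consistency at quantum level and, by the discussion surrounding \eqref{Sch0curve}, coincides with the quantum integrability condition.

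The main technical obstacle I anticipate is a genuine justification that the prefactor $\mathcal{Q}_\Gamma$ is path-independent. Equation \eqref{formal factor for arbitrary path} reduces the question to endpoint data once the classical action is known to be path-independent, but one must argue that the fluctuation functional measure $\mathscr{D}_\Gamma[\mathbf{y}(\mathbf{t})]$ and the quadratic form in \eqref{factor for arbitrary path} actually collapse to this endpoint Hessian regardless of how $\Gamma$ is chosen; this is delicate because the transverse fluctuations along different paths live on different spaces. A clean way forward is to parameterise both paths by a common parameter $s$ so that the quadratic form in \eqref{factor for all path} is evaluated on the same interval $[s',s'']$, then use that the second variation $\delta^2 S_\Gamma[\mathbf{q}_c]$ is a boundary expression plus a bulk term that is in turn governed by closure applied to $\partial^2 L_j/\partial\mathbf{q}\partial\mathbf{q}$, reducing once again to endpoint data up to $\mathcal{O}(\hbar)$.
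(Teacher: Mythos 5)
Your proposal is correct and follows essentially the same route as the paper for its core: you use the closure relation to show $S_\Gamma[\mathbf{q}_c]=S_{\Gamma'}[\mathbf{q}_c]$ via Stokes' theorem on a contractible surface, infer $\mathcal{Q}_\Gamma=\mathcal{Q}_{\Gamma'}$ from the fact that \eqref{formal factor for arbitrary path} depends only on the endpoint Hessian of the (now path-independent) classical action, and conclude $K_\Gamma=K_{\Gamma'}$ up to $\mathcal{O}(\hbar)$ — this is exactly equations \eqref{IP1}--\eqref{eqllyK} of the paper. The one place you genuinely diverge is the final loop-collapse step. You glue $K_{\Gamma'}$ to the reversed propagator $K_{-\Gamma}=K_\Gamma^{*}$ through the completeness relation and invoke unitarity, so that $\int d^Nq''\,K_\Gamma^{*}(\mathbf{q}'';\mathbf{q}_2)K_{\Gamma'}(\mathbf{q}'';\mathbf{q}_1)=\delta^N(\mathbf{q}_2-\mathbf{q}_1)$. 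The paper instead carries out an explicit semi-classical computation (equations \eqref{loopPPTH}--\eqref{ID11}): it takes the limit $\tilde{\mathbf{t}}\to\mathbf{t}'$, expands the action difference as $-\frac{\partial S_c}{\partial\tilde{\mathbf{q}}}\cdot(\tilde{\mathbf{q}}-\mathbf{q}')$, and uses the Van Vleck determinant as the Jacobian for a change of variables to $\frac{1}{\hbar}\frac{\partial S_c}{\partial\tilde{q}}$, producing the Fourier representation of $\delta^N(\tilde{q}-q')$. Your unitarity argument is shorter and hides the determinant bookkeeping inside the normalisation of the single-path propagator; the paper's version makes the role of the prefactor explicit, which is arguably more informative in a context where the normalising factors have been handled loosely throughout. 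Finally, your closing paragraph correctly identifies the genuinely delicate point — that the fluctuation measure and quadratic form along different paths live on different spaces yet must collapse to the same endpoint Hessian; the paper does not resolve this any more rigorously than you do (it simply asserts $\mathcal{Q}_\Gamma=\mathcal{Q}_{\Gamma'}$ on the strength of \eqref{formal factor for arbitrary path}, deferring an explicit check to the harmonic-oscillator example of appendix \ref{AppendixE}), so flagging it is appropriate rather than a defect of your argument relative to the paper's.
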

	\begin{proof}
	According to the equations \eqref{propagator for arbitrary path}-\eqref{formal factor for arbitrary path}, the propagator along path $\Gamma$ on space of time variables in the semi-classical limit reads
	\begin{eqnarray}
        K_\Gamma(\mathbf{q}(\mathbf{t}''),\mathbf{t}''; \mathbf{q}(\mathbf{t}'), \mathbf{t}') &=& \mathcal{Q}_\Gamma e^{\frac{i}{\hbar}S_\Gamma[\mathbf{q}_c(\mathbf{t})]}\left[1+\mathcal{O}(\hbar)\right]\;,\label{IP1}\\
        \mathcal{Q}_\Gamma&=&\det\left(\frac{i}{2\pi\hbar}\frac{\partial^2S_\Gamma [\mathbf{q}_c(\mathbf{t})]}{\partial\mathbf{q}(\mathbf{t}'')\partial\mathbf{q}(\mathbf{t}')}\right)^{\frac{1}{2}}\;,
	\end{eqnarray}
	and, for the path $\Gamma'$ with the same end points, the propagator reads
    \begin{eqnarray}
        K_{\Gamma'}(\mathbf{q}(\mathbf{t}''),\mathbf{t}''; \mathbf{q}(\mathbf{t}'), \mathbf{t}') &=& \mathcal{Q}_{\Gamma'}e^{\frac{i}{\hbar}S_{\Gamma'}[\mathbf{q}_c(\mathbf{t})]}\left[1+\mathcal{O}(\hbar)\right]\;,\label{IP2}\\
        \mathcal{Q}_{\Gamma'}&=&\det\left(\frac{i}{2\pi\hbar}\frac{\partial^2S_{\Gamma'} [\mathbf{q}_c(\mathbf{t})]}{\partial\mathbf{q}(\mathbf{t}'')\partial\mathbf{q}(\mathbf{t}')}\right)^{\frac{1}{2}}\;.
	\end{eqnarray}
	The closure relation for the classical Lagrangian 1-form $\mathscr{L}_c$ provides
	\begin{equation}
	    S_\Gamma [\mathbf{q}_c(\mathbf{t})]-S_{\Gamma'} [\mathbf{q}_c(\mathbf{t})]=\left(\int_\Gamma-\int_{\Gamma'}\right)\mathscr{L}_c = \oint_{C=\partial \mathcal{S}}\mathscr{L}_c=\iint_{\mathcal{S}}\sum_{k\geq 1}^N\sum_{l=1}^N\left(\frac{\partial L_l}{\partial t_k}-\frac{\partial L_k}{\partial t_l}\right)dt_k\wedge dt_l = 0\;,\label{II11}
	\end{equation}
	which is nothing but the path independent feature on independent variables space, see figure \ref{deform and loop}. Here $\mathcal{S}$ is an arbitrary surface bounded by a contractible loop $C$ on the space of time variables. Therefore, $\mathcal{Q}_\Gamma=\mathcal{Q}_{\Gamma'}$ and consequently we have
	\begin{equation}
	    K_\Gamma(\mathbf{q}(\mathbf{t}''),\mathbf{t}''; \mathbf{q}(\mathbf{t}'), \mathbf{t}') = K_{\Gamma'}(\mathbf{q}(\mathbf{t}''),\mathbf{t}''; \mathbf{q}(\mathbf{t}'), \mathbf{t}')\;, \label{eqllyK}
	\end{equation}
	where the $\mathcal{O}(\hbar)$ is ignored since there is extremely tiny contribution to the propagator in the semi-classical limit. 
    \\
    \\
    For a contractible loop $C=\partial \mathcal{S}$ on space of time variable, the propagator can be captured as
	\begin{equation}
	    K_{C=\partial \mathcal{S}} = \lim_{(\mathbf{t}'-\tilde{\mathbf{t}})\to 0}\int_{-\infty}^{\infty} d^Nq''\int_{-\infty}^{\infty} d^Nq'\det\left(\left(\frac{i}{2\pi\hbar}\right)^2\frac{\partial^2S_\Gamma [\mathbf{q}_c(\mathbf{t})]}{\partial\mathbf{q}(\mathbf{t}'')\partial\mathbf{q}(\mathbf{t}')}\frac{\partial^2(-S_{\Gamma'} [\mathbf{q}_c(\mathbf{t})])}{\partial\mathbf{q}(\mathbf{t}'')\partial\mathbf{q}(\tilde{\mathbf{t}})}\right)^{\frac{1}{2}}e^{\frac{i}{\hbar}\left(\int_{\mathbf{t}',\Gamma}^{\mathbf{t''}}-\int_{\tilde{\mathbf{t}},\Gamma'}^{\mathbf{t}''}\right)\mathscr{L}_c}\;.\label{loopPPTH}
	\end{equation}
	Then we write
	\begin{align}
	    S_c[\mathbf{q}',\mathbf{q}''] &=: \int_{\mathbf{t}'}^{\mathbf{t''}}\mathscr{L}_c\;,\\
	    S_c[\tilde{\mathbf{q}},\mathbf{q}''] &=: \int_{\tilde{\mathbf{t}}}^{\mathbf{t''}}\mathscr{L}_c\;.
	\end{align}
	Dropping out the subscripts $\Gamma$ and $\Gamma'$ because of path independent feature, we obtain\footnote{Here, $S_c[\mathbf{q}',\mathbf{q}'']$ is no longer a functional since the classical path $\mathbf{q}_c$ has been substituted. Thus, $S_c[\mathbf{q}',\mathbf{q}'']$ will be simply treated as a function depending on the initial positions.}
	\begin{equation}
	    \lim_{(\mathbf{t}'-\tilde{\mathbf{t}})\to 0}\left(\int_{\mathbf{t}'}^{\mathbf{t''}}-\int_{\tilde{\mathbf{t}}}^{\mathbf{t}''}\right)\mathscr{L}_c =\lim_{(\mathbf{t}'-\tilde{\mathbf{t}})\to 0} \sum_{i=1}^N\frac{S_c[\mathbf{q}',\mathbf{q}'']- S_c[\tilde{\mathbf{q}},\mathbf{q}'']}{\tilde q_i-q_i'}(\tilde q_i-q_i') = -\frac{\partial S_c}{\partial\tilde{\mathbf{q}}}\cdot(\tilde{\mathbf{q}}-\mathbf{q}')\;.\label{ID11}
	\end{equation}
	Substituting \eqref{ID11} into \eqref{loopPPTH}, the propagator \eqref{loopPPTH} becomes
	\begin{equation}
	    K_{C=\partial \mathcal{S}} = \left(\frac{1}{2\pi}\right)^N\int_{-\infty}^{\infty} d^Nq'\int_{-\infty}^\infty d^N\left(\frac{1}{\hbar}\frac{\partial S_c}{\partial\tilde{q}}\right)e^{-\frac{i}{\hbar}\frac{\partial S_c}{\partial\tilde{\mathbf{q}}}\cdot(\tilde{\mathbf{q}}-\mathbf{q}')}=\int_{-\infty}^{\infty} d^Nq'\delta^N(\tilde{q}-q') =\mathbb{I}\;.
	\end{equation}
	\end{proof}
	\noindent We shall point out a final feature of the multi-time propagator. From equation \eqref{unitarytime1}, it is not difficult to see that we could have a set of equations
	\begin{equation}
	    i\hbar \frac{\partial}{\partial t_j}K(\mathbf{q}(\mathbf t''),\mathbf{t}'';\mathbf{q}(\mathbf t'),\mathbf{t}')=\hat{\mathbf{H}}_jK(\mathbf{q}(\mathbf t''),\mathbf{t}'';\mathbf{q}(\mathbf t'),\mathbf{t}')\;,\;\text{where}\;j=1,2,...,N\;\text{and}\; \mathbf{t}''>\mathbf{t}'\; \;.\label{KKK}
	\end{equation}
	Again, the quantity $\partial_{t_j}-(1/i\hbar)\hat{\mathbf{H}}_j$, where $j=1,2,...,N$, can be treated as the covariant derivative and the system of equations \eqref{KKK} is overdetermined. Thus, a common nontrivial solution $K(\mathbf{q}(\mathbf t''),\mathbf{t}'';\mathbf{q}(\mathbf t'),\mathbf{t}')$ exists simultaneously if
	\begin{equation}
	    \frac{\partial}{\partial t_k}\frac{\partial}{\partial t_j}K(\mathbf{q}(\mathbf t''),\mathbf{t}'';\mathbf{q}(\mathbf t'),\mathbf{t}')=\frac{\partial}{\partial t_j}\frac{\partial}{\partial t_k}K(\mathbf{q}(\mathbf t''),\mathbf{t}'';\mathbf{q}(\mathbf t'),\mathbf{t}')\label{COMKK}
	\end{equation}
	holds. This compatibility \eqref{COMKK} gives again directly to the zero-curvature condition of the Hamiltonian operators \eqref{Sch0curve}.
	%%%%%%%%%%
	%%%%%%%%%%%
	%%%%%%%%%%
	%%%%%%%%%%%%
	%%%%%%%%%
	%%%%%%%%%%%%%%
	\section{Concluding discussion}\label{section5}
 In Schr\"{o}dinger picture, one can promote the set of Hamiltonians in the classical integrable systems to be a set of Hamiltonian operators and the set of Schr\"{o}dinger equations are obtained. This set of Schr\"{o}dinger equations is overdetermined and therefore a common non-trivial solution, wave function, exists if the Hamiltonian operators must satisfy the zero-curvature condition. The multi-time unitary operator can be expressed in terms of the Wilson line and possesses the path-independent feature. This means that, for the loop evolution, the unitary map in terms of the Wilson loop is identity. At this point, we may state that, for integrable quantum systems, the Hamiltonian operators must follow the zero-curvature condition, but the inverse is not necessary true, see \cite{Stefan}. In Feynman picture, the continuous multi-time propagator is derived. The interesting point is that this multi-time propagator comes with \emph{a new feature on sum over all possible paths}. One needs to take into account not only all possible paths on the space of dependent variables, but also on the space of independent variables(time variables). Of course, this idea is not new and it was first introduced by Nijhoff \cite{Franktalk} in 2013\footnote{This new perspective of treating the dependent and independent variables on the same equal footing was suggested in many places \cite{Atkinson,Rovelli}, see further discussion in \cite{Kingthesis}}. We point at this stage that what we come up for the formula of the continuous multi-time propagator in the 1-form case is not the same with Nijhoff's proposal. However, they do share the exactly the same interpretation. Another point is that, as we mention earlier on taking all possible path both dependent and independent variables, the propagator contains also non-classical paths which do not satisfy the closure relation. Then this new beauty beast must be tamed. Therefore, the semi-classical approximation is applied to the continuous multi-time propagator. The propapgator is then written in terms of the classical action together with the fluctuation (prefactor). With this new form of the continuous multi-time propagator (approximated one), the integrability criterion can be constructed. A major intriguing feature in this context is that there exists a space of Lagrangians. All Lagrangians produce the same equations of motion, but only a special set of Lagrangians satisfies the closure relation. With this special set of Lagrangians, the continuous multi-time propagator is extremum yielding path-independent feature on the space of independent variables. This new feature is known as the quantum variation \cite{SD.Kings}. The last point that we would like to mention is that our set up on deriving the continuous multi-time propagator is not only restricted to the quadratic Lagrangian cases. However, the result from King and Nijhoff \cite{SD.Kings} for the quadratic Lagrangians, namely harmonic oscillators, provides a solid verification of our formulation as the special case, see appendix \ref{AppendixE}. Therefore, further concrete examples are needed for non-quadratic Lagrangians.
	%%%%%%%%%%%%%%%%%%%%%%%%%%%%%%%%%%%%%%%%%%%%%%%%%%%%%%%%%%%%%%%%%%%%%%%%%%%%%%%%%%%%%%%%%%%%
	
	%%%%%%%%%%%%%%%%%%%%%%%%%%%%%%%%%%%%%%%%%%%%%%%%%%%%%%%%%%%%%%%%%%%%%%%%%%%%%%%%%%%%%%%%%%
	\newpage
	\appendix
	\section{The exponential maps for $N$-parameter group} \label{AppendixA}
	In this appendix, the full derivation of \eqref{U12} will be presented. For simplicity, we will first consider the 2 time variables: $\mathbf{t}=(t,\tau)$. The composite map $\hat{\mathbf{U}}_2\circ \hat{\mathbf{U}}_1$ could be written as:
	\begin{align}
	    \hat{\mathbf{U}}_2\circ\hat{\mathbf{U}}_1  &= \mathrm{T}e^{\int d\tau \hat{\mathbf{H}}_2(\mathbf{t})}\mathrm{T}e^{\int dt \hat{\mathbf{H}}_1(\mathbf{t})} \nonumber
	    \\&= \Bigg\{\mathrm{T}\Bigg[\mathbb{I}+\sum_{m=1}^\infty\frac{1}{m!}\Big(\prod_{j=1}^n\int d\tau_j\Big)\Big(\prod_{i=1}^n \hat{\mathbf{H}}_2(\mathbf{t}_i)\Big)\Bigg]\Bigg\}\Bigg\{\mathrm{T}\Bigg[\mathbb{I}+\sum_{n=1}^\infty\frac{1}{n!}\Big(\prod_{j=1}^n\int dt_j\Big)\Big(\prod_{i=1}^n \hat{\mathbf{H}}_1(\mathbf{t}_i)\Big)\Bigg]\Bigg\}\;.\label{expand composition}
	\end{align}
	The first two terms in \eqref{expand composition} would give 
	\begin{equation}
	    \int d\tau_1 \hat{\mathbf{H}}_2(\mathbf{t}_1)+\int dt_1 \hat{\mathbf{H}}_1(\mathbf{t}_1)=\int_\Gamma d\mathbf{t}_1\cdot \hat{\mathbf{H}}(\mathbf{t}_1)\;,
	\end{equation}
	where $\hat{\mathbf{H}}(\mathbf{t}_1) = (\hat{\mathbf{H}}_1(\mathbf{t}_1),\hat{\mathbf{H}}_2(\mathbf{t}_1))$. The next terms in the expansion will be
	\begin{eqnarray}
	    &&\frac{1}{2!}\Bigg(\int d\tau_1\int d\tau_2\mathrm{T}\big[\hat{\mathbf{H}}_2(\mathbf{t}_1)\hat{\mathbf{H}}_2(\mathbf{t}_2)\big]+\int d\tau_2\int dt_1\big[\hat{\mathbf{H}}_2(\mathbf{t}_2)\hat{\mathbf{H}}_1(\mathbf{t}_1)\big] \nonumber\\&&+\int d\tau_1\int dt_2\big[\hat{\mathbf{H}}_2(\mathbf{t}_1)\hat{\mathbf{H}}_1(\mathbf{t}_2)\big] 
	    +\int dt_1\int dt_2\mathrm{T}\big[\hat{\mathbf{H}}_1(\mathbf{t}_1)\hat{\mathbf{H}}_1(\mathbf{t}_2)\big]\Bigg)\;.\label{the 2nd composition}
	\end{eqnarray}
	Since the time variables of the second and third terms of the equation \eqref{the 2nd composition} have been ordered, we can insert the time-ordering operator into both of them. Using the fact that $[\hat{\mathbf{H}}_1,\hat{\mathbf{H}}_2]=0$, the equation \eqref{the 2nd composition} would become
	\begin{align}
	    &\frac{1}{2!}\mathrm{T}\Bigg(\int d\tau_1\int d\tau_2\big[\hat{\mathbf{H}}_2(\mathbf{t}_1)\hat{\mathbf{H}}_2(\mathbf{t}_2)\big]+\int d\tau_2\int dt_1\big[\hat{\mathbf{H}}_2(\mathbf{t}_2)\hat{\mathbf{H}}_1(\mathbf{t}_1)\big]\nonumber
	    \\&+\int d\tau_1\int dt_2\big[\hat{\mathbf{H}}_2(\mathbf{t}_1)\hat{\mathbf{H}}_1(\mathbf{t}_2)\big]
	    +\int dt_1\int dt_2\big[\hat{\mathbf{H}}_1(\mathbf{t}_1)\hat{\mathbf{H}}_1(\mathbf{t}_2)\big]\Bigg) \nonumber
	    \\&=\frac{1}{2!}\mathrm{T}\Bigg(\int d\tau_1\int d\tau_2\big[\hat{\mathbf{H}}_2(\mathbf{t}_1)\hat{\mathbf{H}}_2(\mathbf{t}_2)\big]+\int dt_1\int d\tau_2\big[\hat{\mathbf{H}}_1(\mathbf{t}_1)\hat{\mathbf{H}}_2(\mathbf{t}_2)\big] \nonumber
	    \\&\;+\int d\tau_1\int dt_2\big[\hat{\mathbf{H}}_2(\mathbf{t}_1)\hat{\mathbf{H}}_1(\mathbf{t}_2)\big]
	    +\int dt_1\int dt_2\big[\hat{\mathbf{H}}_1(\mathbf{t}_1)\hat{\mathbf{H}}_1(\mathbf{t}_2)\big]\Bigg) \nonumber
	    \\&=\frac{1}{2!}\mathrm{T}\Bigg\{\Big(\int d\tau_1 \hat{\mathbf{H}}_2(\mathbf{t}_1)+\int dt_1 \hat{\mathbf{H}}_1(\mathbf{t}_1)\Big)\Big(\int d\tau_2 \hat{\mathbf{H}}_2(\mathbf{t}_2) \nonumber
	    +\int dt_2 \hat{\mathbf{H}}_1(\mathbf{t}_2)\Big)\Bigg\}
	    \\&=\frac{1}{2!}\mathrm{T}\Bigg\{\int_\Gamma d\mathbf{t}_1\cdot \hat{\mathbf{H}}(\mathbf{t}_1)\int_\Gamma d\mathbf{t}_2\cdot \hat{\mathbf{H}}(\mathbf{t}_2)\Bigg\}\;.
	\end{align}
	For further terms in the expansion of \eqref{expand composition}, we apply the same trick. Finally, we obtain
	\begin{eqnarray}
	    \hat{\mathbf{U}}_2\circ\hat{\mathbf{U}}_1  &=& \mathrm{T}\Bigg[\mathbb{I}+\sum_{m=1}^\infty\frac{1}{m!}\Big(\prod_{j=1}^n\int_\Gamma d\mathbf{t}_j\cdot \hat{\mathbf{H}}(\mathbf{t}_j)\Big)\Bigg] \nonumber
	    \\&=& \mathrm{T}e^{\int_\Gamma d\mathbf{t}\cdot \hat{\mathbf{H}}(\mathbf{t})}\;.
	\end{eqnarray}
    %%%%%%%%%%%%%%%%%%%%%%%%%%%%%%%%%%%%%%%%%%%%%%%%%%%%%%%%%%%%%%%%%%%%%%%%%%%%%%%%%%%%%%%%%%
	\section{The definition of the time-ordering operator for a loop evolution}\label{Appendix0}
    Basically, in the case of a single time, a time-ordering operator is given by
    \begin{equation}
        \mathrm{T}\left[\hat{\mathbf{H}}(t_1)\hat{\mathbf{H}}(t_2)\right] = \Theta(t_1-t_2)\hat{\mathbf{H}}(t_1)\hat{\mathbf{H}}(t_2)+\Theta(t_2-t_1)\hat{\mathbf{H}}(t_2)\hat{\mathbf{H}}(t_1)\;,
    \end{equation}
    where $\Theta(t_1-t_2)$ is a Heaviside step function. The mathematical object that we are interested in this situation is the unitary operator. For the forward evolution and the backward evolution given in figure \ref{Tsloop}, the unitary operators are given by
    \begin{align}
        \hat{\mathbf{U}}_{\Gamma} =& \mathrm{T}e^{-\frac{i}{\hbar}\int_{t'}^{t''}\hat{\mathbf{H}}(t)dt}=\mathrm{T}e^{-\frac{i}{\hbar}\int_{\Gamma}\hat{\mathbf{H}}(t)dt}\;,\label{UG}
        \\\hat{\mathbf{U}}_{\Gamma'} =& \mathrm{T}e^{-\frac{i}{\hbar}\int_{t''}^{t'}\hat{\mathbf{H}}(t)dt} =\mathrm{T}e^{-\frac{i}{\hbar}\int_{\Gamma'}\hat{\mathbf{H}}(t)dt}= \mathrm{T}e^{\frac{i}{\hbar}\int_{t'}^{t''}\hat{\mathbf{H}}(t)dt}\;,\label{UG'}
    \end{align}
    respectively.
    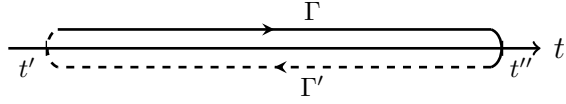
\begin{figure}[h]
        \centering
        \begin{tikzpicture}[]
        	        \path[line width = 1pt, draw = black,->]
 	                  (0,3) to (7,3);
 	                  
 	                 \path[line width = 1pt, draw = black, postaction ={on each segment = {mid arrow}}]
 	                  (0.65,3.25) to (6.35,3.25);
 	                 
 	                 \path[dashed, line width = 1pt, draw = black, postaction ={on each segment = {mid arrow}}]
 	                  (6.35,2.75) to (0.65,2.75);
 	                  
 	                   \path[line width = 1pt, draw = black]
 	                  (6.35,3.25) to[bend left = 70] (6.35,2.75); 
 	                  
 	                  \path[dashed, line width = 1pt, draw = black]
 	                  (0.65,2.75) to[bend left = 70] (0.65,3.25);
 	                  
 	                 \node at (7.25,3) {$t$};
 	                 
 	                 \path[line width = 0.5pt, draw = black]
 	                  (0.5,2.9) to (0.5,3.1);
 	                  
 	                  \path[line width = 0.5pt, draw = black]
 	                  (6.5,2.9) to (6.5,3.1);
 	                  
 	                 \node at (0.25,2.75) {\footnotesize{$t'$}};
 	                 \node at (6.75,2.75) {\footnotesize{$t''$}};
 	                 
 	                 \node at (4,3.5) {\footnotesize{$\Gamma$}};
 	                 \node at (4,2.5) {\footnotesize{$\Gamma'$}};
 	                
 	            \end{tikzpicture}
        \caption{The loop evolution on the space of a single time variable}
        \label{Tsloop}
    \end{figure}
    \\
    Since the unitary operator possesses a property $\hat{\mathbf{U}}_{\Gamma}\hat{\mathbf{U}}_{\Gamma'}=\mathbb{I}$, therefore, what we have now is
    %The constraint \eqref{inUG} implies the nature of the time-ordering operator in \eqref{UG} and \eqref{UG'} being not different. The mean is that
    \begin{align}
        \hat{\mathbf{U}}_{\Gamma}\hat{\mathbf{U}}_{\Gamma'} =& \left(\mathrm{T}e^{-\frac{i}{\hbar}\int_{t'}^{t''}\hat{\mathbf{H}}(t)dt}\right)\left(\mathrm{T}e^{\frac{i}{\hbar}\int_{t'}^{t''}\hat{\mathbf{H}}(t)dt}\right) \nonumber
        \\=&\left(\mathbb{I}-\frac{i}{\hbar}\int_{t'}^{t''}\hat{\mathbf{H}}(t)dt+\frac{1}{2!}\left(-\frac{i}{\hbar}\right)^2\int_{t'}^{t''}dt_1\int_{t'}^{t''}dt_2\mathrm{T}\left[\hat{\mathbf{H}}(t_1)\hat{\mathbf{H}}(t_2)\right]+...\right)\nonumber\\
        &\times\left(\mathbb{I}+\frac{i}{\hbar}\int_{t'}^{t''}\hat{\mathbf{H}}(t)dt+\frac{1}{2!}\left(\frac{i}{\hbar}\right)^2\int_{t'}^{t''}dt_1\int_{t'}^{t''}dt_2\mathrm{T}\left[\hat{\mathbf{H}}(t_1)\hat{\mathbf{H}}(t_2)\right]+...\right)\nonumber 
        \\=&\;\;\mathbb{I}-\frac{i}{\hbar}\left(\int_{t'}^{t''}dt-\int_{t'}^{t''}dt\right)\hat{\mathbf{H}}(t) \nonumber
        \\&+\frac{1}{2!}\left(-\frac{i}{\hbar}\right)^2\left(\int_{t'}^{t''}dt_1-\int_{t'}^{t''}dt_1\right)\left(\int_{t'}^{t''}dt_2-\int_{t'}^{t''}dt_2\right)\mathrm{T}\left[\hat{\mathbf{H}}(t_1)\hat{\mathbf{H}}(t_2)\right]+...\nonumber
        \\=&\;\; \mathrm{T}\left(e^{-\frac{i}{\hbar}\int_{t'}^{t''}\hat{\mathbf{H}}(t)dt+\frac{i}{\hbar}\int_{t'}^{t''}\hat{\mathbf{H}}(t)dt}\right) = \mathbb{I}\;.\label{NT}
    \end{align}
    This suggests that the time-ordering operators in \eqref{UG} and \eqref{UG'} have the same structure.
    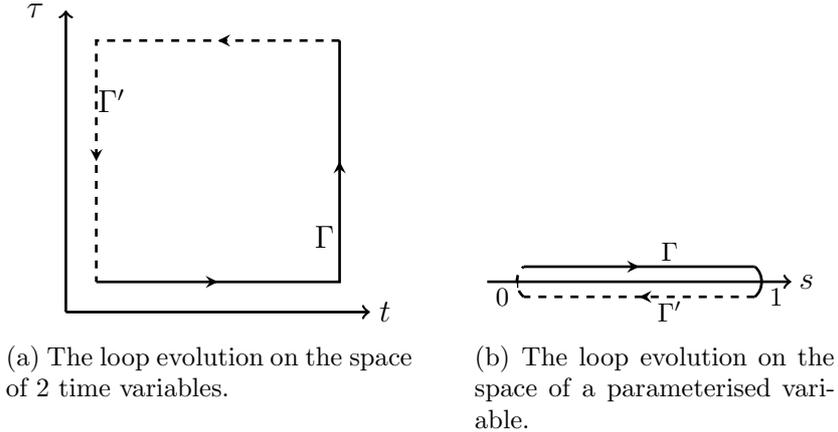
\begin{figure}[h]
    	    \centering
    	    \subfloat[The loop evolution on the space of 2 time variables.]{\label{Tloop1}
        	    \begin{tikzpicture}[scale = 0.8]
            	     \path[line width = 1pt, draw = black,->]
 	                  (0,0) to (5,0);
 	                  
 	                  \path[line width = 1pt, draw = black,->]
 	                  (0,0) to (0,5);
 	                  
 	                 \path[line width = 1pt, draw = black, postaction ={on each segment = {mid arrow}}]
 	                  (0.5,0.5) to (4.5,0.5) to (4.5,4.5);
 	                 
 	                 \path[dashed, line width = 1pt, draw = black, postaction ={on each segment = {mid arrow}}]
 	                  (4.5,4.5) to (0.5,4.5) to (0.5,0.5);
 	                  
 	                 \node at (5.25,0) {$t$};
 	                 \node at (-0.5,5) {$\tau$};
 	                 
 	                 \node at (4.25,1.25) {$\Gamma$};
 	                 \node at (0.75,3.5) {$\Gamma'$};
 	                 
 	            \end{tikzpicture}
 	       }
 	       \qquad
 	        \subfloat[The loop evolution on the space of a parameterised variable.]{\label{Tloop2}
        	    \begin{tikzpicture}[scale = 0.8]
        	        \path[line width = 1pt, draw = black,->]
 	                  (0,3) to (5,3);
 	                  
 	                 \path[line width = 1pt, draw = black, postaction ={on each segment = {mid arrow}}]
 	                  (0.6,3.25) to (4.4,3.25);
 	                 
 	                 \path[dashed, line width = 1pt, draw = black, postaction ={on each segment = {mid arrow}}]
 	                  (4.4,2.75) to (0.6,2.75);
 	                  
 	                  \path[line width = 1pt, draw = black]
 	                  (4.4,3.25) to[bend left = 50] (4.4,2.75); 
 	                  
 	                  \path[dashed, line width = 1pt, draw = black]
 	                  (0.6,2.75) to[bend left = 50] (0.6,3.25);
 	                  
 	                 \node at (5.25,3) {$s$};
 	                 
 	                 \path[line width = 0.5pt, draw = black]
 	                  (0.5,2.9) to (0.5,3.1);
 	                  
 	                  \path[line width = 0.5pt, draw = black]
 	                  (4.5,2.9) to (4.5,3.1);
 	                  
 	                 \node at (0.25,2.75) {\footnotesize{$0$}};
 	                 \node at (4.75,2.75) {\footnotesize{$1$}};
 	                 
 	                 \node at (3,3.5) {\footnotesize{$\Gamma$}};
 	                 \node at (3,2.5) {\footnotesize{$\Gamma'$}};
 	                
 	            \end{tikzpicture}
 	         }
 	         \caption{The loop evolution of system for 2-time structure.}
 	         \label{Tloop}
    \end{figure}
    \\
    Next, we extend the idea to the case of the multi-time situation. For simplicity, we shall consider the case of the $2$-dimensional time space: $\mathbf{t}=(t,\tau)$. For the paths $\Gamma$ and $\Gamma'$ in figure \ref{Tloop1}, the unitary multi-time evolution operators can be presented as follows:
    \begin{align}
        \hat{\mathbf{U}}_{\Gamma} =& \mathrm{T}e^{-\frac{i}{\hbar}\int_{\Gamma}\hat{\mathbf{H}}_{t}(\mathbf{t})dt+\hat{\mathbf{H}}_{\tau}(\mathbf{t})d\tau}\;,\label{2UG}
        \\\hat{\mathbf{U}}_{\Gamma'} =& \mathrm{T}e^{-\frac{i}{\hbar}\int_{\Gamma'}\hat{\mathbf{H}}_{t}(\mathbf{t})dt+\hat{\mathbf{H}}_{\tau}(\mathbf{t})d\tau}\;.\label{2UG'}
    \end{align}
    Here, we introduce a new variable $s$ such that $\mathbf{t}(s)=(t(s),\tau(s))$, where $0 \le s \le 1$, see figure \ref{Tloop2}. Then the unitary multi-time evolution operators in \eqref{2UG} and \eqref{2UG'} become
     \begin{align}
        \hat{\mathbf{U}}_{\Gamma} =&\mathrm{T}e^{-\frac{i}{\hbar}\int_{0}^1\hat{\mathbf H}(s)ds} =\mathrm{T}e^{-\frac{i}{\hbar}\int_{\Gamma}\hat{\mathbf H}(s)ds}\;,\label{3UG}
        \\
        \hat{\mathbf{U}}_{\Gamma'} =& \mathrm{T}e^{-\frac{i}{\hbar}\int_{1}^0\hat{\mathbf H}(s)ds} =\mathrm{T}e^{-\frac{i}{\hbar}\int_{\Gamma'}\hat{\mathbf H}(s)ds}\;,\label{3UG'}
    \end{align}
    where $\hat{\mathbf H}(s)=\hat{\mathbf{H}}_{t}(\mathbf{t})dt+\hat{\mathbf{H}}_{\tau}(\mathbf{t})d\tau$. %and $\hat{\mathbf H'}(s)=\hat{\mathbf{H}}_{t}(\mathbf{t})dt+\hat{\mathbf{H}}_{\tau}(\mathbf{t})d\tau$. 
    With this a single parameterised variable $s$, one can apply the same process as in \eqref{NT} and this suggests that the nature of the time-ordering operator in the case of the multi-time and the single time is effectively identical to explain the loop evolution. Therefore, $\hat{\mathbf{U}}_{\Gamma}\hat{\mathbf{U}}_{\Gamma'}=\mathbb{I}$ gives us a commutativity of the multi-time evolution or the integrability condition.
    %such that 
    %\begin{align}
        %\mathrm{T}\left[\hat{\mathbf{H}}(\mathbf{t}_1(\theta))\hat{\mathbf{H}}(\mathbf{t}_2(\theta))\right] &= \mathrm{T}\left[\hat{\mathbf{H}}(\theta_1)\hat{\mathbf{H}}(\theta_2)\right]\nonumber
        %\\&= \Theta(\theta_1-\theta_2)\hat{\mathbf{H}}(\theta_1)\hat{\mathbf{H}}(\theta_2)+\Theta(\theta_2-\theta_1)\hat{\mathbf{H}}(\theta_2)\hat{\mathbf{H}}(\theta_1)\;.
    %\end{align}
    The idea can be directly extended into the case of $N$ time variables with this parameterised method.
    
    %%%%%%%%%%%%%%%%%%%%%%%%%%%%%%%%%%%%%%%%%%%%%%%%%%%%%%%%%%%%%%%%%%%%%%%%%%%%%%%%%%%%%%%%%%
    \section{Eliminating the redundant path of $K^{(3)}$}\label{AppendixB}
    In equation \eqref{3first propagator complete}, the order the summation is crucial to avoid the redundant path. To see this, we first consider a point $m_1=0$. The variable $n_1$ will not contribute because of collapsing of the integration as follows
    \begin{align}
         K^{(3)} =& \sum_{n_1=1}^{\mathcal{N}-1}\int_{(0,0)}^{(n_1,0)}\mathscr{D}[\mathbf{q}(t_1,0)]\int_{-\infty}^{\infty}d^N\mathbf{q}(n_1,0)\int_{(n_1,0)}^{(n_1,0)}\mathscr{D}[\mathbf{q}(n_1,t_2)]\int_{-\infty}^{\infty}d^N\mathbf{q}(n_1,0)\int_{(n_1,0)}^{(\mathcal{N},0)}\mathscr{D}[\mathbf{q}(t_1,0)]\nonumber
         \\&\times\int_{-\infty}^{\infty}d^N\mathbf{q}(\mathcal{N},0)\int_{(\mathcal{N},0)}^{(\mathcal{N},\mathcal{N})}\mathscr{D}[\mathbf{q}(\mathcal{N},t_2)] e^{\frac{i}{\hbar}\int_{s'}^{s''}\mathscr{L}} \nonumber
         \\=& \bra{\mathbf{q}(\mathcal{N},\mathcal{N})}\hat{\mathbf{U}}(\mathcal{N},\mathcal{N};\mathcal{N},0)\hat{\mathbf{U}}(\mathcal{N},0;n_1,0)\hat{\mathbf{U}}(n_1,0;n_1,0)\hat{\mathbf{U}}(n_1,0;0,0)\ket{\mathbf{q}(0,0)}\nonumber
         \\=& \bra{\mathbf{q}(\mathcal{N},\mathcal{N})}\hat{\mathbf{U}}(\mathcal{N},\mathcal{N};\mathcal{N},0)\hat{\mathbf{U}}(\mathcal{N},0;0,0)\ket{\mathbf{q}(0,0)}\nonumber
         \\=& \int_{(0,0)}^{(\mathcal{N},0)}\mathscr{D}[\mathbf{q}(t_1,0)]\int_{-\infty}^{\infty}d^N\mathbf{q}(\mathcal{N},0)\int_{(\mathcal{N},0)}^{(\mathcal{N},\mathcal{N})}\mathscr{D}[\mathbf{q}(\mathcal{N},t_2)]e^{\frac{i}{\hbar}\int_{s'}^{s''}\mathscr{L}}\;,\label{identity of propagator2}
    \end{align}
    which is the propagator in the equation \eqref{K^1}. Hence, the variable $n_1$ in the first line of equation \eqref{identity of propagator2} is arbitrary. On the other hand, if the summation over $n_1$ is considered first, there are redundant paths for every $n_1$ at $m_1=0$.
	\section{Eliminating the redundant path of $K^{(5)}$}\label{AppendixC}
	For $m_1=0$ and $m_2=m_1$ of $K^{(5)}$, we obtain the repeated paths such as the figure \ref{repeated paths}.
	\begin{figure}[h]
    	    \centering
    	    \subfloat[$m_1=0$ with $m_2\neq0$]{\label{m_1=0}
        	    \begin{tikzpicture}[scale = 0.4]
            	    \path[dashed, draw=black]
 	                (0,0) to (10,0)
 	                (10,0) to (10,10)
 	                (10,10) to (0,10)
 	                (0,10) to (0,0);
 	                
 	                \path[line width = 1 pt, draw=black, postaction ={on each segment = {mid arrow}}]
 	                (0,0) to (3,0)
 	                (3,0) to (6,0)
 	                (6,0) to (6,5)
 	                (6,5) to (10,5)
 	                (10,5) to (10,10);
 	                
 	                \node[circle, fill, inner sep=1.25 pt] at (0,0) {};
 	                \node[circle, fill, inner sep=1.25 pt] at (10,10) {};
 	                \node[circle, fill, inner sep=1.25 pt] at (3,0) {};
 	                \node[circle, fill, inner sep=1.25 pt] at (6,0) {};
 	                \node[circle, fill, inner sep=1.25 pt] at (6,5) {};
 	                \node[circle, fill, inner sep=1.25 pt] at (10,5) {};
 	                
 	                \node at (0,-0.5) {\scriptsize{$(0,0)$}};
 			        \node at (3,0.5) {\scriptsize{$(n_1,0)$}};
 			        \node at (6,-0.5) {\scriptsize{$(n_2,0)$}};
 			        \node at (6,5.5) {\scriptsize{$(n_2,m_2)$}};
 			        \node at (10,4.5) {\scriptsize{$(\mathcal{N},m_2)$}};
 			        \node at (10,10.5) {\scriptsize{$(\mathcal{N},\mathcal{N})$}};
 
 	            \end{tikzpicture}
 	       }
 	       \qquad
 	        \subfloat[$m_1=m_2\neq0$]{\label{m_1=m_2}
        	    \begin{tikzpicture}[scale = 0.4]
            	    \path[dashed, draw=black]
 	                (0,0) to (10,0)
 	                (10,0) to (10,10)
 	                (10,10) to (0,10)
 	                (0,10) to (0,0);
 	                
 	                \path[line width = 1 pt, draw=black, postaction ={on each segment = {mid arrow}}]
 	                (0,0) to (6,0)
 	                (6,0) to (6,5)
 	                (6,5) to (8,5)
 	                (8,5) to (10,5)
 	                (10,5) to (10,10);
 	                
 	                \node[circle, fill, inner sep=1.25 pt] at (0,0) {};
 	                \node[circle, fill, inner sep=1.25 pt] at (10,10) {};
 	                \node[circle, fill, inner sep=1.25 pt] at (8,5) {};
 	                \node[circle, fill, inner sep=1.25 pt] at (6,0) {};
 	                \node[circle, fill, inner sep=1.25 pt] at (6,5) {};
 	                \node[circle, fill, inner sep=1.25 pt] at (10,5) {};
 	                
 	                \node at (0,-0.5) {\scriptsize{$(0,0)$}};
 			        \node at (8,4.5) {\scriptsize{$(n_2,m_2)$}};
 			        \node at (6,-0.5) {\scriptsize{$(n_1,0)$}};
 			        \node at (6,5.5) {\scriptsize{$(n_1,m_1)$}};
 			        \node at (10,5.5) {\scriptsize{$(\mathcal{N},m_2)$}};
 			        \node at (10,10.5) {\scriptsize{$(\mathcal{N},\mathcal{N})$}};
 	            \end{tikzpicture}
 	         }
 	         \caption{The example of repeated paths between cases of $m_1=0$ with $m_2\in[1,\mathcal{N}-1]$ and $m_1=m_2\neq0$.}\label{repeated paths}
 	    \end{figure}
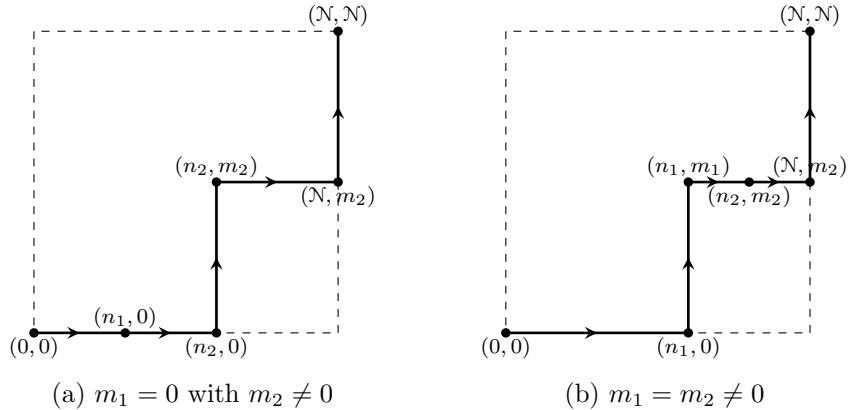
 	 \\
 	In figure \ref{m_1=0}, $n_1$ is in the range $[1,n_2]$ and therefore $n_2\ge 2$. We obtain
 	 \begin{align}
         K^{(5)}_{a} =& \sum_{n_2=2}^{\mathcal{N}-1}\sum_{m_2=1}^{\mathcal{N}}\int_{(0,0)}^{(n_2,0)}\mathscr{D}[\mathbf{q}(t_1,0)]\int_{-\infty}^{\infty}d^N\mathbf{q}(n_2,0)\int_{(n_2,0)}^{(n_2,m_2)}\mathscr{D}[\mathbf{q}(n_2,t_2)]\int_{-\infty}^{\infty}d^N\mathbf{q}(n_2,m_2)\nonumber
         \\&\times\int_{(n_2,m_2)}^{(\mathcal{N},m_2)}\mathscr{D}[\mathbf{q}(t_1,m_2)]\int_{-\infty}^{\infty}d^N\mathbf{q}(\mathcal{N},m_2)\int_{(\mathcal{N},m_2)}^{(\mathcal{N},\mathcal{N})}\mathscr{D}[\mathbf{q}(\mathcal{N},t_2)]e^{\frac{i}{\hbar}\int_{s'}^{s''}\mathscr{L}}\;.\label{5first propagator m_1=0}
    \end{align}
    In figure \ref{m_1=m_2}, the path between points $(n_2,m_1)$ and $(n_2,m_2)$ is collapsed as a point at $(n_2,m_2)$. And, if $n_1\geq2$, it will give exactly the same path with figure \ref{m_1=0}. Therefore, the propagator is
    \begin{align}
         K^{(5)}_{b} =& \sum_{n_1=2}^{\mathcal{N}-1}\sum_{m_1=1}^{\mathcal{N}}\int_{(0,0)}^{(n_1,0)}\mathscr{D}[\mathbf{q}(t_1,0)]\int_{-\infty}^{\infty}d^N\mathbf{q}(n_1,0)\int_{(n_1,0)}^{(n_1,m_1)}\mathscr{D}[\mathbf{q}(n_1,t_2)]\int_{-\infty}^{\infty}d^N\mathbf{q}(n_1,m_1)\nonumber
         \\&\times\int_{(n_1,m_1)}^{(\mathcal{N},m_1)}\mathscr{D}[\mathbf{q}(t_1,m_1)] \int_{-\infty}^{\infty}d^N\mathbf{q}(\mathcal{N},m_1)\int_{(\mathcal{N},m_1)}^{(\mathcal{N},\mathcal{N})}\mathscr{D}[\mathbf{q}(\mathcal{N},t_2)]e^{\frac{i}{\hbar}\int_{s'}^{s''}\mathscr{L}}\;.\label{5first propagator m_1=m_2}
    \end{align}
 	To avoid this problem, we have to set $n_1=1$ then the propagator \eqref{5first propagator m_1=m_2} would vanish since it out of the case.
 	\\
 	\\
 	For $K^{(7)}$, we would fix $n_1=1$, $n_2=2$ and $n_3=[3,\mathcal{N}-1]$ to avoid the redundant path. And, for $K^{(9)}$, we would fix $n_1=1$, $n_2=2$, $n_3=3$ and $n_4=[4,\mathcal{N}-1]$ to settle the problem. Finally, for $K^{(\text{All})}$, every single $n_i$ will be fixed, where $i=1,2,\dots,\mathcal{N}-1$ to eliminate the redundancy, then the sum over $n_i$ will disappear, resulting in the propagator \eqref{all t1 propagator}.
	\section{The explicit form of $\mathcal{Q}_\Gamma$ in the equation \eqref{factor for arbitrary path}}\label{AppendixD}
	Recalling the equation \eqref{factor for arbitrary path}
	\begin{equation}
        \mathcal{Q}_\Gamma(\mathbf{q}'', \mathbf{q}', \mathbf{t}'', \mathbf{t}') = \int_{\mathbf{y}(\mathbf{t}')=0}^{\mathbf{y}(\mathbf{t}'')=0}\mathscr{D}_\Gamma[\mathbf{y}(\mathbf{t})]e^{\frac{i}{2\hbar}\sum_{j=1}^N\int_\Gamma du_j\int_\Gamma dv_j\left(\mathbf{y}(\mathbf{u})\frac{\delta^2S_{j,\Gamma}[\mathbf{q}_c(\mathbf{t})]}{\delta\mathbf{q}(\mathbf{u})\delta\mathbf{q}(\mathbf{v})}\mathbf{y}(\mathbf{v})\right)}\nonumber\;,
    \end{equation}
    we then treat
    \begin{equation}
        \mathbf{y}(\mathbf{u})\frac{\delta^2S_{j,\Gamma}[\mathbf{q}_c(\mathbf{t})]}{\delta\mathbf{q}(\mathbf{u})\delta\mathbf{q}(\mathbf{v})}\mathbf{y}(\mathbf{v})
        =: \mathbf{y}(\mathbf{u})\Big(-\delta(\mathbf{u}-\mathbf{v})\hat{\mathbf{O}}_j\Big)\mathbf{y}(\mathbf{v})\;.
    \end{equation}
    We write $\mathbf{y}(\mathbf{t})=\sum_na_n\mathbf{y}_{n,\Gamma}(\mathbf{t})$, where $a_n$ now represent fluctuation and $\mathbf{y}_{n,\Gamma}$ are eigenbases associated with the path $\Gamma$ satisfying
    \begin{equation}
        \int_\Gamma dt_i \mathbf{y}_{n,\Gamma}(\mathbf{t})\mathbf{y}_{m,\Gamma}(\mathbf{t})\big|_{\text{fix}\;t_{j\neq i}} = \delta_{nm}\;,\label{orthonormal}
    \end{equation}
    and the boundary conditions
    \begin{equation}
        \mathbf{y}_{n,\Gamma}(\mathbf{t}'') = \mathbf{y}_{n,\Gamma}(\mathbf{t}') = 0\;.
    \end{equation}
    The equation \eqref{factor for arbitrary path} becomes
    \begin{equation}
        \mathcal{Q}_\Gamma(\mathbf{q}'', \mathbf{q}', \mathbf{t}'', \mathbf{t}') = \int\mathscr{D}[a_n]e^{-\frac{i}{2\hbar}\sum_{j=1}^N\big(\sum_n(\lambda_j)_{n,\Gamma}\abs{a_n}^2\big)}\;,\label{QQ1}
    \end{equation}
    where $(\lambda_j)_{n,\Gamma}$ are eigenvalues associated with the eigenbases $\mathbf{y}_{n,\Gamma}$ for the operator $\hat{\mathbf{O}}_j$. Inserting the equation \eqref{orthonormal} into \eqref{QQ1}, we obtain
    \begin{eqnarray}
        \mathcal{Q}_\Gamma(\mathbf{q}'', \mathbf{q}', \mathbf{t}'', \mathbf{t}') &=& \int\mathscr{D}[a_n]e^{-\frac{i}{2\hbar}\sum_{j=1}^N\big(\sum_n(\lambda_j)_{n,\Gamma}\abs{a_n}^2\big)} \nonumber
        \\&=&\int\mathscr{D}[a_n]e^{-\frac{i}{2\hbar}\sum_{j=1}^N\big(\sum_n\sum_m(\lambda_j)_{n,\Gamma}a_na_m\delta_{mn}\big)} \nonumber
        \\&=&\int\mathscr{D}[a_n]e^{-\frac{i}{2\hbar}\sum_{j=1}^N\big(\int_\Gamma dt_l\left(\sum_ma_m\mathbf{y}_{m,\Gamma}(\mathbf{t})\right)\left(\sum_n(\lambda_j)_{n,\Gamma}a_n\mathbf{y}_{n,\Gamma}(\mathbf{t})\right)\big|_{\text{fix}\;t_{j\neq i}}\big)} \nonumber
        \\&=&\int\mathscr{D}[a_n]e^{-\frac{i}{2\hbar}\sum_{j=1}^N\big(\int_\Gamma dt_l\left(\sum_ma_m\mathbf{y}_{m,\Gamma}(\mathbf{t})\right)\hat{\mathbf{O}}_j\left(\sum_na_n\mathbf{y}_{n,\Gamma}(\mathbf{t})\right)\big|_{\text{fix}\;t_{j\neq i}}\big)} \nonumber
        \\&=& \int\mathscr{D}_\Gamma[\mathbf{y}(\mathbf{t})]e^{-\frac{i}{2\hbar}\int_\Gamma dt_l\mathbf{y}(\mathbf{t})\big(\sum_{j=1}^N\hat{\mathbf{O}}_j\big)\mathbf{y}(\mathbf{t})} \nonumber
        \\&=& \int\mathscr{D}_\Gamma[\mathbf{y}(\mathbf{t})]e^{\frac{i}{2\hbar}\int_\Gamma du_l \int_\Gamma dv_l \left(\mathbf{y}(\mathbf{u})\frac{\delta^2}{\delta\mathbf{q}(\mathbf{u})\delta\mathbf{q}(\mathbf{v})}\left(\sum_{j=1}^NS_{j,\Gamma}[\mathbf{q}_c(\mathbf{t})]\right)\mathbf{y}(\mathbf{v})\right)}\nonumber
        \\&=& \int\mathscr{D}_\Gamma[\mathbf{y}(\mathbf{t})]e^{\frac{i}{2\hbar}\int_\Gamma du_l \int_\Gamma dv_l \left(\mathbf{y}(\mathbf{u})\frac{\delta^2S_{\Gamma}[\mathbf{q}_c(\mathbf{t})]}{\delta\mathbf{q}(\mathbf{u})\delta\mathbf{q}(\mathbf{v})}\mathbf{y}(\mathbf{v})\right)}
    \end{eqnarray}
    which is identical with the single-time case, see \eqref{factor for single path}. Therefore, we obtain
    \begin{equation}
        \mathcal{Q}_\Gamma(\mathbf{q}'', \mathbf{q}', \mathbf{t}'', \mathbf{t}') = \det\left(\frac{i}{2\pi\hbar}\frac{\partial^2S_\Gamma [\mathbf{q}_c(\mathbf{t})]}{\partial\mathbf{q}(\mathbf{t}'')\partial\mathbf{q}(\mathbf{t}')}\right)^{\frac{1}{2}}\;.
    \end{equation}
    We note that the existence of bases $\mathbf{y}_{n,\Gamma}(\mathbf{t})$, satisfying the equation \eqref{orthonormal}, is nontrivial. However, an explicit example is illustrated in appendix \ref{AppendixE}.
    \section{The path-independent feature of the propagator for the two harmonic oscillators}\label{AppendixE}
    Here we will show that, with a special set of quadratic Lagrangians satisfying the closure relation, the multi-time propagator possesses the path independent feature on the space of independent variables. We first write the set of Lagrangians $\{L_1, L_2\}$, giving the equations of motion in \eqref{quad1} and \eqref{quad2} and  satisfying the Lagrangian closure relation, as
    \begin{eqnarray}
        L_1=\frac{1}{2}\left(\frac{\partial \mathbf{q}}{\partial t_1}\right)^2-\frac{\omega_1^2 \mathbf{q}^2}{2}\;,
        \\L_2=\frac{1}{2}\left(\frac{\partial \mathbf{q}}{\partial t_2}\right)^2-\frac{\omega_2^2 \mathbf{q}^2}{2}\;,
    \end{eqnarray}
    where $\omega_{1,2}$ are constants and $ \mathbf{q}(t_1,t_2) = (q_1(t_1,t_2)\;q_2(t_1,t_2))$.
    \\
    \\
    Then, in order to verify path independent feature, we consider the exponent term in \eqref{factor for arbitrary path}
    \begin{eqnarray}
        &&\frac{i}{2\hbar}\sum_{j=1}^2\int_\Gamma du_j\int_\Gamma dv_j\left(\mathbf{y}(\mathbf{u})\frac{\delta^2S_{j,\Gamma}[ \mathbf{q}_c(\mathbf{t})]}{\delta \mathbf{q}(\mathbf{u})\delta  \mathbf{q}(\mathbf{v})}\mathbf{y}(\mathbf{v})\right) \nonumber 
        \\&&= \frac{i}{2\hbar}\left(\int_\Gamma dt_1\left[\left(\frac{\partial \mathbf{y}}{\partial t_1}\right)^2-\omega_1^2\mathbf{y}^2\right]+dt_2\left[\left(\frac{\partial \mathbf{y}}{\partial t_2}\right)^2-\omega_2^2\mathbf{y}^2\right]\right)\;.
    \end{eqnarray}
    \begin{figure}[h]
    	    \centering
    	    \subfloat[path A]{\label{path A}
        	    \begin{tikzpicture}[scale = 0.4]
            	    \path[dashed, draw=black]
 	                (0,0) to (10,0)
 	                (10,0) to (10,10)
 	                (10,10) to (0,10)
 	                (0,10) to (0,0);
 	                
 	                \path[line width = 1 pt, draw=black, postaction ={on each segment = {mid arrow}}]
 	                (0,0) to (10,0)
 	                (10,0) to (10,10);
 	                
 	                \node[circle, fill, inner sep=1.25 pt] at (0,0) {};
 	                \node[circle, fill, inner sep=1.25 pt] at (10,10) {};
 	                \node[circle, fill, inner sep=1.25 pt] at (10,0) {};
 	                
 	                \node at (0,-0.5) {\footnotesize{$(0,0)$}};
 			        \node at (10,-0.5) {\footnotesize{$(T_1,0)$}};
 			        \node at (10,10.5) {\footnotesize{$(T_1,T_2)$}};
 
 	            \end{tikzpicture}
 	            }
 	            \quad
 	            \subfloat[path B]{\label{path B}
        	    \begin{tikzpicture}[scale = 0.4]
            	    \path[dashed, draw=black]
 	                (0,0) to (10,0)
 	                (10,0) to (10,10)
 	                (10,10) to (0,10)
 	                (0,10) to (0,0);
 	                
 	                \path[line width = 1 pt, draw=black, postaction ={on each segment = {mid arrow}}]
 	                (0,0) to (0,10)
 	                (0,10) to (10,10);
 	                
 	                \node[circle, fill, inner sep=1.25 pt] at (0,0) {};
 	                \node[circle, fill, inner sep=1.25 pt] at (10,10) {};
 	                \node[circle, fill, inner sep=1.25 pt] at (0,10) {};
 	                
 	                \node at (0,-0.5) {\footnotesize{$(0,0)$}};
 			        \node at (0,10.5) {\footnotesize{$(0,T_2)$}};
 			        \node at (10,10.5) {\footnotesize{$(T_1,T_2)$}};
 
 	            \end{tikzpicture}
 	            }
 	            \quad
 	            \subfloat[path C]{\label{path C}
        	    \begin{tikzpicture}[scale = 0.4]
            	    \path[dashed, draw=black]
 	                (0,0) to (10,0)
 	                (10,0) to (10,10)
 	                (10,10) to (0,10)
 	                (0,10) to (0,0);
 	                
 	                \path[line width = 1 pt, draw=black, postaction ={on each segment = {mid arrow}}]
 	                (0,0) to (5,0)
 	                (5,0) to (5,10)
 	                (5,10) to (10,10);
 	                
 	                \node[circle, fill, inner sep=1.25 pt] at (0,0) {};
 	                \node[circle, fill, inner sep=1.25 pt] at (10,10) {};
 	                \node[circle, fill, inner sep=1.25 pt] at (5,0) {};
 	                \node[circle, fill, inner sep=1.25 pt] at (5,10) {};
 	                
 	                \node at (0,-0.5) {\footnotesize{$(0,0)$}};
 			        \node at (5,-0.5) {\footnotesize{$(\tau,0)$}};
 			        \node at (5,10.5) {\footnotesize{$(\tau,T_2)$}};
 			        \node at (10,10.5) {\footnotesize{$(T_1,T_2)$}};
 
 	            \end{tikzpicture}
 	            }
 	            \caption{The first three simple paths}\label{example path}
 	\end{figure}
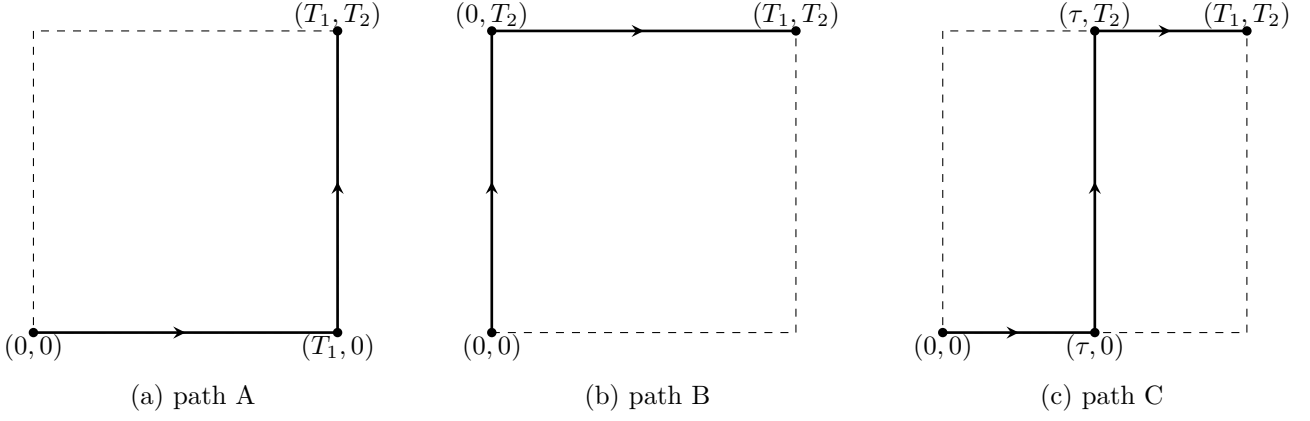
 	\\
 	In figure \ref{path A}, the multi-time propagator reads
 	\begin{equation}
 	    K_A\left( \mathbf{q}(T_1,T_2),(T_1,T_2); \mathbf{q}(0,0),(0,0)\right) = \mathcal{Q}_Ae^{\frac{i}{\hbar}S_A[\mathbf{q}_c]}\;,
 	\end{equation}
 	where
 	\begin{equation}
 	    \mathcal{Q}_A = \int_{\mathbf{y}(0,0)=0}^{\mathbf{y}(T_1,T_2)=0}\mathscr{D}_A[\mathbf{y}(t_1,t_2)]e^{\frac{i}{2\hbar}\left(\int_{(0,0)}^{(T_1,0)} dt_1\left[\left(\frac{\partial \mathbf{y}}{\partial t_1}\right)^2-\omega_1^2\mathbf{y}^2\right]+
        \int_{(T_1,0)}^{(T_1,T_2)}dt_2\left[\left(\frac{\partial \mathbf{y}}{\partial t_2}\right)^2-\omega_2^2\mathbf{y}^2\right]\right)}\;.\label{Q_A}
 	\end{equation}
 	The whole exponent term of the equation \eqref{Q_A} can be written as
 	\begin{align}
 	    &\int_{(0,0)}^{(T_1,0)} dt_1\left[\left(\frac{\partial \mathbf{y}}{\partial t_1}\right)^2-\omega_1^2\mathbf{y}^2\right]+
        \int_{(T_1,0)}^{(T_1,T_2)}dt_2\left[\left(\frac{\partial \mathbf{y}}{\partial t_2}\right)^2-\omega_2^2\mathbf{y}^2\right] \nonumber
        \\&= \int_{(0,0)}^{(T_1,0)} dt_1\mathbf{y}\left[-\left(\frac{\partial}{\partial t_1}\right)^2-\omega_1^2\right]\mathbf{y}+\mathbf{y}(T_1,0)\frac{\partial \mathbf{y}(T_1,0)}{\partial t_1}
        \nonumber\\
        &\;\;\;\;\;\;\;\;\;\;\;\;+\int_{(T_1,0)}^{(T_1,T_2)}dt_2\mathbf{y}\left[-\left(\frac{\partial}{\partial t_2}\right)^2-\omega_2^2\right]\mathbf{y}-\mathbf{y}(T_1,0)\frac{\partial \mathbf{y}(T_1,0)}{\partial t_2}\;.
 	\end{align}
 	For this particular path, the fluctuation $\mathbf{y}$ can be expressed in the form
 	\begin{equation}
 	    \mathbf{y}(t_1,t_2) = \sum_na_n\mathbf{y}_{n,A}(t_1,t_2) = \sum_na_n\left(\sqrt{\frac{2}{T_1}}\sin\left(\frac{n\pi}{T_1}t_1\right)\cos\left(\frac{n\pi}{T_2}t_2\right)+\sqrt{\frac{2}{T_2}}\sin\left(\frac{n\pi}{T_2}t_2\right)\cos\left(\frac{n\pi}{T_1}t_1\right)\right)\;,\label{basisA}
 	\end{equation}
 	where $0\leq t_{1,2}\leq T_{1,2}$ and it is not difficult to show that the orthonormality condition holds
 	\begin{equation}
 	    \int_{(0,0)}^{(T_1,0)}dt_1\mathbf{y}_{n,A}\mathbf{y}_{m,A} = \int_{(T_1,0)}^{(T_1,T_2)}dt_2\mathbf{y}_{n,A}\mathbf{y}_{m,A} = \delta_{nm}\;.
 	\end{equation}
 	Therefore, the equation \eqref{Q_A} becomes
 	\begin{equation}
 	    \mathcal{Q}_A = \int\mathscr{D}[a_n]e^{\frac{i}{2\hbar}\sum_n\abs{a_n}^2\left(-\omega_1^2-\omega_2^2+\left(\frac{n\pi}{T_1}\right)^2+\left(\frac{n\pi}{T_2}\right)^2\right)}\;.\label{rQ_A}
 	\end{equation}
 	Next, we will repeat the same process with the path in figure \ref{path B} and we find that
    %we can also calculate the multi-time propagator along the path in figure \ref{path B} with the similar proceeding. We will obtain
 	\begin{equation}
 	    \mathcal{Q}_B = \int_{\mathbf{y}(0,0)=0}^{\mathbf{y}(T_1,T_2)=0}\mathscr{D}_B[\mathbf{y}(t_1,t_2)]e^{\frac{i}{2\hbar}\left(
        \int_{(0,0)}^{(0,T_2)}dt_2\left[\left(\frac{\partial \mathbf{y}}{\partial t_2}\right)^2-\omega_2^2\mathbf{y}^2\right]+\int_{(0,T_2)}^{(T_1,T_2)} dt_1\left[\left(\frac{\partial \mathbf{y}}{\partial t_1}\right)^2-\omega_1^2\mathbf{y}^2\right]\right)}\;.\label{Q_B}
 	\end{equation}
 	Fortunately, the eigenbases in the equation \eqref{basisA} are still applicable. Thus, the equation \eqref{Q_B} can be simply reduced to
 	\begin{equation}
 	    \mathcal{Q}_B = \int\mathscr{D}[a_n]e^{\frac{i}{2\hbar}\sum_n\abs{a_n}^2\left(-\omega_1^2-\omega_2^2+\left(\frac{n\pi}{T_1}\right)^2+\left(\frac{n\pi}{T_2}\right)^2\right)}\;.\label{rQ_B}
 	\end{equation}
    For the path in figure \ref{path C}, the $\mathcal{Q}$-factor for multi-time propagator is
 	\begin{equation}
 	    \mathcal{Q}_C = \int_{\mathbf{y}(0,0)=0}^{\mathbf{y}(T_1,T_2)=0}\mathscr{D}_C[\mathbf{y}(t_1,t_2)]e^{\frac{i}{2\hbar}\left(\int_{(0,0)}^{(\tau,0)} dt_1\left[\left(\frac{\partial \mathbf{y}}{\partial t_1}\right)^2-\omega_1^2\mathbf{y}^2\right]+
        \int_{(\tau,0)}^{(\tau,T_2)}dt_2\left[\left(\frac{\partial \mathbf{y}}{\partial t_2}\right)^2-\omega_2^2\mathbf{y}^2\right]+\int_{(\tau,T_2)}^{(T_1,T_2)} dt_1\left[\left(\frac{\partial \mathbf{y}}{\partial t_1}\right)^2-\omega_1^2\mathbf{y}^2\right]\right)}\;.\label{Q_C}
 	\end{equation}
 	The fluctuation $\mathbf{y}$ must be expressed in a new set of eigenbases as
 	\begin{align}
 	    \mathbf{y}(t_1,t_2) = \sum_{n}a_n\mathbf{y}_{n,C}(t_1,t_2)=
	    \begin{cases} \sum_na_n \sqrt{\frac{2}{T_1}}\sin\left(\frac{n\pi}{T_1}t_1\right)\cos\left(\frac{n\pi}{T_2}t_2\right)\;\;;\;\;(t_1\leq\tau\;\text{at}\;t_2=0)\cup(t_1\geq\tau\;\text{at}\;t_2=T_2)\\ 
        \sum_na_n \sqrt{\frac{2}{T_1}}\cos\left(\frac{n\pi}{\tau}t_1\right)\sin\left(\frac{n\pi}{T_2}t_2\right)\;\;;\;\;(t_2\in[0,T_2]\;\text{at}\;t_1=\tau)
        \end{cases}\;.
 	\end{align}
 	We finally obtain the equation \eqref{Q_C} in the form
 	\begin{equation}
 	    \mathcal{Q}_C = \int\mathscr{D}[a_n]e^{\frac{i}{2\hbar}\sum_n\abs{a_n}^2\left(-\omega_1^2-\omega_2^2+\left(\frac{n\pi}{T_1}\right)^2+\left(\frac{n\pi}{T_2}\right)^2\right)}\;.\label{rQ_C}
 	\end{equation}
 	Here we notice that $\mathcal{Q}_A=\mathcal{Q}_B=\mathcal{Q}_C$ and therefore 
 	\begin{equation}
 	    K\left( \mathbf{q}(T_1,T_2),(T_1,T_2); \mathbf{q}(0,0),(0,0)\right) = \mathcal{Q}_Ae^{\frac{i}{\hbar}S_A[\mathbf{q}_c]} = \mathcal{Q}_Be^{\frac{i}{\hbar}S_B[\mathbf{q}_c]} = \mathcal{Q}_Ce^{\frac{i}{\hbar}S_C[\mathbf{q}_c]}\;,
 	\end{equation}
    which is nothing but the path independent feature of the multi-time propagator in case of quadratic Lagrangian 1-forms.
	\newpage
	%%%%%%%%%%%%%%%%%%%%%%%%%%%%%%%%%%%%%%%%%%%%%%%%%%%%%%%%%%%%%%%%%%%%%%%%%%%%%%%%%%%%%%%%%%
	\section*{Acknowledgements}
	T. Kongkoom is supported by Development and Promotion of Science and Technology Talents Project (DPST).
	\\
	\\
	Conflict of Interest: The authors declare that they have no
	conflicts of interest.
	
	\bibliographystyle{unsrt}
	\bibliography{bibliography.bib}

\end{document}